\documentclass[opre,nonblindrev]{informs4Preprint}

\OneAndAHalfSpacedXI

\TheoremsNumberedThrough 
\EquationsNumberedThrough
\usepackage{xfrac}
\usepackage{xcolor}
\usepackage{bbm}
\usepackage{booktabs}
\usepackage{tikz}
\usepackage{subcaption}
\usepackage{float}
\usepackage[algo2e,ruled,vlined,linesnumbered]{algorithm2e}

\usepackage{hhline}

\usepackage{changepage}

\newenvironment{myprocedure}[1][htb]
{\begin{algorithm2e}[#1]}{\end{algorithm2e}}

\DeclareRobustCommand{\mybox}[2][gray!10]{\begin{tcolorbox}[   left=0pt,
        right=0pt,
        top=0pt,
        bottom=0pt,
        colback=#1,
        colframe=#1,
        enlarge left by=0mm,
        boxsep=15pt,
        arc=5pt,outer arc=5pt,
        ]
        #2
\end{tcolorbox}
}

\DeclareRobustCommand{\myboxtwo}[2][gray!15]{
\begin{tcolorbox}[ 
        colback=white,      colframe=gray,  
        boxrule=0.2pt,      arc=2pt,outer arc=2pt,
        left=12pt,
        right=12pt,
        top=5pt,
        bottom=5pt,
        width=1.07\linewidth,
        enlarge left by=-0.55cm,
        before upper=\renewcommand{\baselinestretch}{1.3}\selectfont,
        after upper=\normalfont
        ]
 #2
 \end{tcolorbox}
}

\usepackage{thm-restate}

\usepackage[breakable, skins]{tcolorbox}
\DeclareRobustCommand{\mybox}[2][gray!10]{\begin{tcolorbox}[
        left=0.5pt,
        right=0.5pt,
        top=0.5pt,
        bottom=0.5pt,
        colback=#1,
        colframe=#1,
        width=\dimexpr\textwidth\relax, 
        enlarge left by=1mm,
        boxsep=3pt,
        arc=2pt,outer arc=2pt,
        code={\OneAndAHalfSpacedXII}
        ]
        #2
        
\end{tcolorbox}
}
\usepackage{csquotes}
\makeatletter
\renewenvironment*{displayquote}
  {\begingroup\setlength{\leftmargini}{0.2cm}\csq@getcargs{\csq@bdquote{}{}}}
  {\csq@edquote\endgroup}
\makeatother

\usepackage{mathtools, bm, amsmath}

\newcommand{\openbox}{\hbox{$\square$}}

\definecolor{cornellred}{rgb}{0.7, 0.11, 0.11}
\definecolor{maroon}{rgb}{0.52, 0, 0}
\definecolor{dgreen}{rgb}{0.0, 0.5, 0.0}
\definecolor{ballblue}{rgb}{0.13, 0.67, 0.8}
\definecolor{royalbamsmathlue(web)}{rgb}{0.25, 0.41, 0.88}
\definecolor{bleudefrance}{rgb}{0.19, 0.55, 0.91}
\definecolor{royalazure}{rgb}{0.0, 0.22, 0.66}
\usepackage{hyperref}
\hypersetup{
	colorlinks = true,
	linkcolor=royalazure,
	urlcolor=royalazure,
	citecolor=royalazure,
	linkbordercolor = {white}
}
\usepackage{cleveref}
\usepackage{enumitem}
\usepackage{multirow}

\usepackage{pgf,tikz,pgfplots}
\pgfplotsset{compat=1.15}
\usetikzlibrary{arrows}
\usepackage{pgfplots}
\usetikzlibrary{patterns}
\usepgfplotslibrary{fillbetween}
\usetikzlibrary{intersections}

\usepackage{fix-cm}

\usetikzlibrary{arrows, decorations.markings}

\tikzstyle{vecArrow} = [thick, decoration={markings,mark=at position
	1 with {\arrow[semithick]{open triangle 60}}},
	double distance=1.4pt, shorten >= 5.5pt,
	preaction = {decorate},
postaction = {draw,line width=1.4pt, white,shorten >= 4.5pt}]
\tikzstyle{innerWhite} = [semithick, white,line width=1.4pt, shorten >= 4.5pt]

	\usepackage[authoryear,round]{natbib}

\makeatletter
\newcommand{\subalign}[1]{\vcenter{\Let@ \restore@math@cr \default@tag
    \baselineskip\fontdimen10 \scriptfont\tw@
    \advance\baselineskip\fontdimen12 \scriptfont\tw@
    \lineskip\thr@@\fontdimen8 \scriptfont\thr@@
    \lineskiplimit\lineskip
    \ialign{\hfil$\m@th\scriptstyle##$&$\m@th\scriptstyle{}##$\hfil\crcr
      #1\crcr
    }}}
\makeatother

\usepackage{makecell}
\usepackage{threeparttable}

	\providecommand{\given}{}
	\DeclarePairedDelimiterX{\set}[1]\{\}{\renewcommand\given{\nonscript\:\delimsize\vert\nonscript\:\mathopen{}}#1}
	\let\Pr\relax
	\DeclarePairedDelimiterXPP{\Pr}[1]{\mathbb{P}}[]{}{\renewcommand\given{\nonscript\:\delimsize\vert\nonscript\:\mathopen{}}#1}
	\DeclarePairedDelimiterXPP{\Ex}[1]{\mathbb{E}}[]{}{\renewcommand\given{\nonscript\:\delimsize\vert\nonscript\:\mathopen{}}#1}

\newcolumntype{P}[1]{>{\centering\arraybackslash}c{#1}}

\newcommand{\xhdr}[1]{\vspace{2mm} \noindent{\bf #1}~}

\newcommand*{\rom}[1]{\expandafter\romannumeral #1}
\newcommand{\Rom}[1]{\uppercase\expandafter{\romannumeral #1\relax}}

\newcommand{\naturals}{\mathbb{N}}

\newcommand{\OPT}{\textup{\textsc{Opt}}}
\newcommand{\ALG}{\textup{\textsc{Alg}}}

\newcommand{\WithinClass}{\textup{\textsc{WithinClass}}}
\newcommand{\BetweenClasses}{\textup{\textsc{BetweenClasses}}}

\newcommand{\SPipelineSchedule}{\textsc{Staggered-Pipeline-Schedule}}
\newcommand{\SimPipelineSchedule}{\textsc{Simultaneous-Schedule}}

\newcommand{\minparal}{{\paral}_{\min}}

\newcommand{\initialLen}{s}
\newcommand{\responseLen}{o}
\newcommand{\responseLenEst}{\tilde{\responseLen}}

\newcommand{\job}{j}                     \newcommand{\Jobs}{\mathcal{J}}          \newcommand{\jobIdx}{i}                  \newcommand{\RankIdx}{r}                 

\newcommand{\paral}{k}                   \newcommand{\sliceLen}{\tau}             \newcommand{\sliceLenHat}{\widehat{\sliceLen}}
\newcommand{\sliceLenP}[1]{\sliceLen_{#1}}
\newcommand{\sliceLenHatP}[1]{\sliceLenHat_{#1}}
\newcommand{\paralSliceLen}{\paral^*(\sliceLen,\initialLen)}
\newcommand{\paralSliceLenP}{\paral^*(\sliceLenP{\PhaseIdx},\initialLen)}
\newcommand{\kvmem}{M}

\newcommand{\startTime}{t^{\mathrm{(start)}}}               \newcommand{\compTime}{t^{\mathrm{(end)}}}                

\newcommand{\finishTime}{c}

\newcommand{\startTimeOf}[1]{\startTime_{#1}}
\newcommand{\compTimeOf}[1]{\compTime_{#1}}

\newcommand{\MemPeak}{\textup{\textsf{Peak}}}

\newcommand{\FTime}[2][]{{\textsc{FlowTime}}\ifthenelse{\not\equal{}{#1}}{_{#1}}{}\!\left[{\def\givenn{\middle|}#2}\right]}
\newcommand{\C}{C}

\newcommand{\GeoSALG}{\textsc{Geometric-Slicing-Algorithm}}

\newcommand{\GeoBALG}{\textsc{Geometric-Batching-Algorithm}}
\newcommand{\GBA}{\textsc{GBA}}
\newcommand{\UpperGBA}{\overline{\GBA}}

\newcommand{\GSA}{{\textsc{GSA}}}
\newcommand{\SPS}{\textsc{SPS}}

\newcommand{\SimPS}{\textsc{SimS}}
\newcommand{\GBAD}{\textsc{GBA-D}}

\newcommand{\GSAH}{\textsc{GSA-Spec}}
\newcommand{\MCSF}{\textsc{MC-SF}}
\newcommand{\AMin}{\textsc{A-Min}}
\newcommand{\vLLM}{\textsc{vLLM}}

\newcommand{\remainJobs}{\mathcal{T}}
\newcommand{\responseScalar}{\alpha}
\newcommand{\sliceLowerBound}{\beta}

\newcommand{\approxratio}{\Gamma}
\newcommand{\localratio}{\gamma}

\newcommand{\ActiveBatch}{\mathcal{B}}
\newcommand{\curProgress}{u}

\newcommand{\Area}{A}
\newcommand{\AreaLB}{A}
\newcommand{\NumberJobs}{n}
\newcommand{\PhaseIdx}{p}
\newcommand{\PhaseIdxSecond}{q}
\newcommand{\PhaseIdxThird}{v}
\newcommand{\PhaseIdxFourth}{t}
\newcommand{\MaxPhaseIdx}{\ell}

\newcommand{\Ladder}{Q}
\newcommand{\ClassPrefix}{S}
\newcommand{\PhasePrefix}{T}
\newcommand{\Spillover}{\Delta}
\newcommand{\NumJobsGe}[1]{N_{\ge #1}}
\newcommand{\NumJobsGt}[1]{N_{> #1}}
\newcommand{\NumJobsClass}{n}
\newcommand{\JobsClass}{J}

\newcommand{\TmpFunc}{f}

\newcommand{\promptRatio}{\rho}

\newcommand{\condition}{\,\mid\,}

\newcommand{\prob}[2][]{\text{Pr}\ifthenelse{\not\equal{}{#1}}{_{#1}}{}\!\left[{\def\givenn{\middle|}#2}\right]}
\newcommand{\expect}[2][]{\mathbb{E}\ifthenelse{\not\equal{}{#1}}{_{#1}}{}\!\left[{\def\givenn{\middle|}#2}\right]}

\newcommand{\tparen}{\big}
\newcommand{\tprob}[2][]{\text{Pr}\ifthenelse{\not\equal{}{#1}}{_{#1}}{}\tparen[{\def\given{\tparen|}#2}\tparen]}
\newcommand{\texpect}[2][]{\mathbb{E}\ifthenelse{\not\equal{}{#1}}{_{#1}}{}\tparen[{\def\given{\tparen|}#2}\tparen]}

\newcommand{\sprob}[2][]{\text{Pr}\ifthenelse{\not\equal{}{#1}}{_{#1}}{}[#2]}
\newcommand{\sexpect}[2][]{\mathbb{E}\ifthenelse{\not\equal{}{#1}}{_{#1}}{}[#2]}

\newcommand{\indicator}[1]{{\mathbbm{1}\left\{ #1 \right\}}}

\makeatletter
\DeclareRobustCommand{\qed}{\ifmmode \else \leavevmode\unskip\penalty9999 \hbox{}\nobreak\hfill
  \fi
  \quad\hbox{\qedsymbol}}
\newcommand{\qedsymbol}{\openbox}
\renewenvironment{proof}[1][\proofname]{\par
  \normalfont
  \topsep6\p@\@plus6\p@ \trivlist
  \item[\hskip\labelsep\itshape
    #1.]\ignorespaces
}{\qed\endtrivlist
}
\newcommand{\proofname}{Proof}
\makeatother

\MANUSCRIPTNO{}

\begin{document}

\RUNAUTHOR{Feng, Yang, Zhang}

\RUNTITLE{Competitive Non-Clairvoyant KV-Cache Scheduling for LLM Inference}
\TITLE{Competitive Non-Clairvoyant KV-Cache Scheduling for LLM Inference}

\ARTICLEAUTHORS{\AUTHOR{Yiding Feng}
\AFF{Hong Kong University of Science and Technology, China, \EMAIL{ydfeng@ust.hk}}
\AUTHOR{Zonghan Yang}
\AFF{Shanghai Jiao Tong University, China, \EMAIL{fstqwq@sjtu.edu.cn}}
\AUTHOR{Yuhao Zhang}
\AFF{Shanghai Jiao Tong University, China, \EMAIL{zhang\_yuhao@sjtu.edu.cn}}
} 

\ABSTRACT{Large Language Model (LLM) inference presents a unique scheduling challenge due to the Key-Value (KV) cache, where a job's memory footprint grows linearly with the number of decoded tokens. This growth couples scheduling decisions with feasibility: a scheduler must minimize latency under a hard memory budget, yet the response lengths of requests are inherently unknown. While recent works have explored this problem either assuming clairvoyance---exact knowledge of response lengths---or relying on machine-learned predictions, obtaining robust performance guarantees without any prior knowledge of job sizes remains a theoretically fundamental and practically important open problem.

In this work, we propose the Geometric Slicing Algorithm (GSA), the non-clairvoyant policy to achieve \emph{the first constant competitive ratio for this problem in the offline batch setting}. GSA manages uncertainty through a geometric phase structure that periodically restarts jobs to bound memory exposure, combined with a staggered pipeline mechanism that enables high concurrency by smoothing aggregate memory consumption. We prove that GSA achieves a competitive ratio of at most $61.92$ for general instances, improving to $32$ in the large-memory regime. Our algorithmic framework also yields a clairvoyant counterpart, the Geometric Batching Algorithm (GBA), which achieves an approximation ratio of $10.67$ for general instances and $6.75$ in the large-memory regime---significantly improving upon the best previously known bound of over $9000$. Numerical experiments on real request traces demonstrate that our algorithms perform robustly while preserving these worst-case guarantees. }

\KEYWORDS{KV-cache, non-clairvoyant scheduling, approximation algorithm, combinatorial optimization}

\maketitle

\newpage

\section{Introduction}
\label{sec:intro}

\begin{figure}[b]
    \centering
    \includegraphics[width=0.8\linewidth]{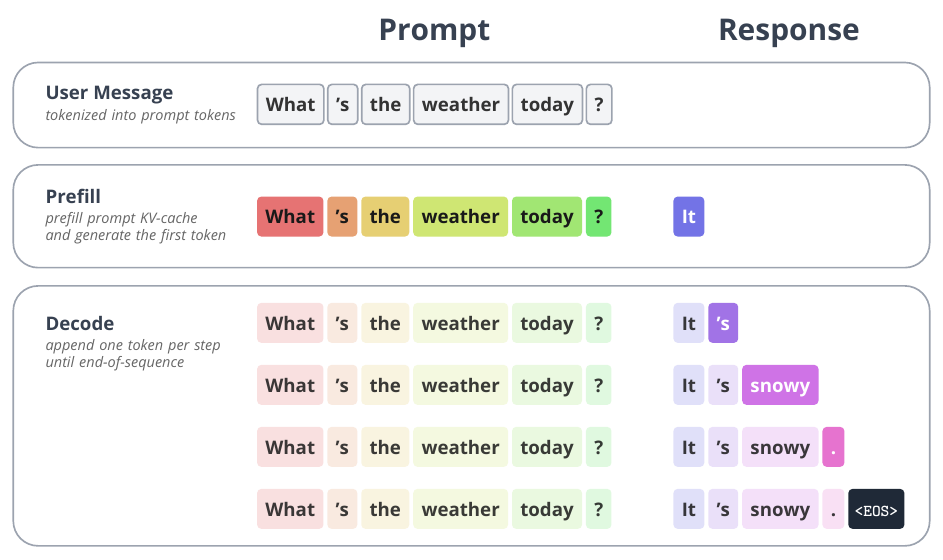}
    \caption{The diagram illustrates the LLM inference process flowing from top to bottom. First, the user message is tokenized. Next, the \textit{prefill} phase processes the prompt to generate the first decode token. Then, the \textit{decode} phase generates output tokens auto-regressively until the End-Of-Sequence token. Note that the KV-cache (colored context) grows linearly with each step.}
    \label{fig:llm_inference_diagram}
\end{figure}

Large Language Models (LLMs) have emerged as the cornerstone of modern AI, powering chat assistants, search engines, and developer tools~\citep{BMRSKDNSSAA20,OpenAI-GPT4-23,Microsoft-Copilot-Web,Google-Gemini-Web}.  
While training remains a massive undertaking, the inference phase increasingly dominates operational costs---a trend amplified by the rise of reasoning-enhanced models that trade extensive Chain-of-Thought computation for improved capabilities~\citep{W0SBIXCLZ22,OpenAI-o1-24,DeepSeek-25}.  
These escalating resource demands strain deployment infrastructure significantly, impacting not only capital expenditures but also environmental sustainability
~\citep{SGM19,LJS24,LYIR25}.  
Consequently, optimizing system efficiency without compromising quality-of-service requirements---such as latency---has become an operational imperative.

Virtually all foundation LLMs are built upon the Transformer architecture~\citep{VSPUJGKP-17}, whose self-attention mechanism computes dependencies across input tokens. To avoid quadratic recomputation at each decoding step, modern inference systems universally employ a \emph{Key-Value (KV) cache} to store intermediate attention states~\citep{KLZSZYGZS-23,ZYXS0YCKSGB24,HKMMSKG24}.  
While this optimization reduces computational complexity to linear time, it introduces a critical resource trade-off: execution progress becomes tightly coupled with GPU memory consumption. 
This coupling has shifted attention toward systems-level optimization, especially KV-cache management.

As illustrated in \Cref{fig:llm_inference_diagram}, when a user request arrives, it is tokenized and processed by the LLM in two phases.  
The \emph{decode} phase then produces subsequent tokens auto-regressively, appending a new KV pair to the cache for each generated token.  
Consequently, the memory footprint of a request is \emph{not static} but grows linearly with its response length.

This dynamic memory growth---where a request's memory footprint increases linearly with its accumulated processing time---introduces a first fundamental challenge for LLM inference scheduling: \emph{temporal coupling between concurrency and feasibility}. Unlike classical scheduling models that assume static resource requirements, the KV cache in LLMs expands as decoding progresses. Consequently, a batch of jobs that fits within memory at the start of execution may inevitably violate memory constraints later. This forces the scheduler to balance high concurrency against the risk of memory overflows, which trigger costly preemptions under kill-and-restart semantics---the core tension in efficient LLM serving.

Compounding this issue is a second critical challenge: \emph{non-clairvoyance}. 
{Because termination is governed by model-generated stop conditions rather than an externally specified job size, the final response length is not determined a priori and only gradually reveals itself during execution.\footnote{Some practical systems introduce response-length predictors to inform batching, admission control, or memory reservation. However, these predictors provide only noisy, workload-dependent estimates rather than oracle knowledge, and they lack guarantees that would eliminate non-clairvoyance~\citep{ZRXL0Y23,JW0W23,FZSQS024}. This direction has been explored in recent prior work~\citep{CYZ-25}. See more discussion in related work (\Cref{apx:related work}).} This uncertainty renders the processing time of each job unpredictable until completion, placing the problem squarely in the non-clairvoyant setting of classical scheduling theory. The lack of foresight harms average latency, as traditional clairvoyant strategies become brittle. 
Moreover, because the KV cache grows dynamically with generated tokens, uncertainty in response length directly translates into uncertainty in future memory footprints---a difficulty absent in classical models. Thus, the scheduler must simultaneously manage both temporal resource coupling and incomplete information, navigating a doubly constrained trade-off between  latency and memory safety.

Motivated by these challenges and the critical importance of efficient LLM inference scheduling, we ask:
\myboxtwo{
\begin{displayquote}
\emph{Can we design effective batching policies for LLM inference requests that remain robust to the risk of infeasibility caused by dynamic memory growth, even in non-clairvoyant environments?}
\end{displayquote}
}

To address this question, we adopt the offline batch scheduling model studied in recent literature~\citep[e.g.,][]{JJMMPZ-25,WYZ-25,CYZ-25}. 
In this model, all inference requests arrive simultaneously at time zero and are processed on a single GPU under a hard KV-cache memory budget. Each request consists of a fixed-length prompt followed by an unknown number of output tokens (i.e., response length) generated autoregressively, creating a non-clairvoyant environment. Since our focus is exclusively on scheduling the decode phase, the prefill phase---which initializes the KV cache from the prompt and generates the first token---is assumed to be completed in a single time step (i.e., the duration of decoding one token)\footnote{This assumption is justified as practical systems often handle the prefill phase separately via prefill/decode disaggregation~\citep{ZLCHZL0024,CPZSGMFB25}.}.

During decoding, a request's memory footprint grows linearly with its progress: each generated token appends to the KV cache, increasing total memory consumption. At every round, the scheduler must decide which subset of unfinished requests to process as a batch on the GPU. If executing the chosen batch would violate the memory budget, the scheduler is forced to preempt (i.e., kill) some currently active requests, discarding all their partial progress, which reflects the prohibitive cost of state swapping in real systems. Critically, the scheduler operates in a non-clairvoyant setting: it does not know how long each request will run until it completes.

The objective is to minimize \emph{total flow time}, i.e., the sum of job completion times, which directly captures end-to-end latency in the offline batch setting. We evaluate non-clairvoyant algorithms via their \emph{competitive ratio}: the worst-case ratio of their total flow time to that of an optimal clairvoyant scheduler with full knowledge of all response lengths. For completeness, we also analyze the clairvoyant counterpart of our algorithm, which serves as a structural proxy in our competitive analysis.

\subsection{Our Contributions}

In this paper, we provide compelling answers to the questions posed above. In a nutshell, we present the \emph{first $O(1)$-competitive algorithm} for the non-clairvoyant offline batch scheduling model under dynamic KV-cache memory constraints, without any assumptions on the input instance. Moreover, under the practically relevant \emph{large-memory regime}, where the KV-cache budget is sufficiently large relative to prompt and response lengths,\footnote{The large-memory assumption is well-motivated in LLM inference and cloud computing settings, and is commonly adopted in prior work on LLM scheduling and online resource allocation (see, e.g., \citealp{KP-00,MSVV-07,GNR-14,MS-20,CYZ-25}).} we obtain an improved competitive ratio.
More formally, our first main result is as follows:

\mybox{\emph{\textbf{Main Result (I)} We propose a non-clairvoyant, polynomial-time scheduling algorithm
(\Cref{alg:gsa}). For general instances, it achieves a competitive ratio of at most $61.92$; for large-memory instances, this improves to $32$.}}

\begin{table}[ht]
\centering
\footnotesize
\renewcommand\arraystretch{1.2}

\begin{threeparttable}
\caption{Summary of results and comparison with prior work}
\label{tab:results}

\begin{tabular}{cccc}
  \toprule
  \multicolumn{2}{c}{\textbf{Setting}} &
  \qquad\qquad\qquad \textbf{\color{blue}This Work} \qquad\qquad\qquad \quad&
  \textbf{\color{maroon}Prior Work} \\
  \midrule

\multirow{4}{*}{\textbf{Non-Clairvoyant}}
    & \multirow{2}{*}{General Instances}
    & {\color{blue}\textbf{61.92}}
    & \multirow{2}{*}{\color{maroon}(unknown)} \\
    & 
    & {\color{blue}(Theorem.~\ref{thm:GSA})}
    &  \\
    \cmidrule(l{0.5em}r{0.5em}){2-4}
    & Large-Memory
    & {\color{blue}\textbf{32}}
    & {\color{maroon}$O(\log M)$} \\
    & Instances
    & {\color{blue}(Theorem.~\ref{thm:GSA})}
    & {\color{maroon}\citep{CYZ-25}} \\
  \midrule

\multirow{4}{*}{\textbf{Clairvoyant}}
    & \multirow{2}{*}{General Instances}
    & {\color{blue}\textbf{10.67}}
    & {\color{maroon}9216} \\
    &
    & {\color{blue}(Theorem.~\ref{thm:GBA})}
    & {\color{maroon}\citep{JJMMPZ-25}} \\
    \cmidrule(l{0.5em}r{0.5em}){2-4}
    & Large-Memory
    & {\color{blue}\textbf{6.75}}
    & {\color{maroon}$48^{\ddagger}$} \\
    & Instances
    & {\color{blue}(Theorem.~\ref{thm:GBA})}
    & {\color{maroon}\citep{WYZ-25}} \\
  \midrule

\multirow{4}{*}{\makecell[c]{\textbf{Clairvoyant}\\\textbf{Identical Jobs}}}
    & \multirow{2}{*}{General Instances}
    & {\color{blue}\textbf{2}}
    & {\color{maroon}4} \\
    &
    & {\color{blue}(Theorem.~\ref{thm:staggered:identical job})}
    & {\color{maroon}\citep{JJMMPZ-25}} \\
    \cmidrule(l{0.5em}r{0.5em}){2-4}
    & {Large-Memory}
    & {\color{blue}\textbf{1$^{\dagger}$}}
    & {\color{maroon}4} \\
    & {Instances}
    & {\color{blue}(Theorem.~\ref{thm:staggered:identical job})}
    & {\color{maroon}\citep{JJMMPZ-25}} \\
  \bottomrule
\end{tabular}

\begin{tablenotes}[para, flushleft]
  \item \emph{\underline{Note}:  
  Numbers represent competitive ratios (for non-clairvoyant algorithms) and approximation ratios (for clairvoyant algorithms).
  All large-memory instances assume that the KV-cache memory budget $\kvmem \to \infty$ while the prompt lengths and response lengths remain fixed.  
    $\dagger$ additionally requires the number of jobs $n \to \infty$.  
    $\ddagger$ continues to hold even under heterogeneous prompt lengths \citep{WYZ-25}.}
\end{tablenotes}
\end{threeparttable}
\end{table}
 
Our algorithmic framework also yields better guarantees in the clairvoyant setting.
In particular, we show that a natural clairvoyant counterpart of our algorithm---designed with full knowledge of request response lengths---yields even stronger guarantees, significantly improve upon previous results. Formally, our second main result is:

\mybox{\emph{\textbf{Main Result (II)} We propose a clairvoyant, polynomial-time scheduling algorithm
(\Cref{alg:gba}), which serves as the structural analog of \Cref{alg:gsa}. For general instances, it achieves an approximation ratio of at most $10.67$; for large-memory instances, this improves to $6.75$. When all requests have identical response lengths, the ratio further improves to $2$ in general and asymptotically approaches $1$ in the large-memory limit.}}

We summarize all our results and compare them with prior work in \Cref{tab:results}.  
Although LLM inference and its optimization under KV-cache constraints are relatively recent, there has been a rapidly growing algorithmic literature on scheduling for this setting. Among the various models studied, the {offline batch scheduling model}---which we adopt---is one of the most prominent due to its practical relevance and mathematical tractability.

In the {non-clairvoyant} setting, \citet{CYZ-25} proposed a scheduling policy whose analysis is tailored to the large-memory instances (where the memory capacity $\kvmem \to \infty$). They establish the first bounded competitive ratio of $O(\log \kvmem)$.
In contrast, as our first main result, we present the first constant-competitive algorithm for general instances, achieving a competitive ratio of at most $61.92$ without any asymptotic assumptions.\footnote{In \Cref{sec:intro:technique} and \Cref{subsec:main alg}, we explain why constant competitiveness cannot be achieved by certain classic heuristics---including the algorithm of \citet{CYZ-25} and how it motivates one key algorithmic ingredient in our algorithm.}

For the {clairvoyant} setting, constant-approximation algorithms have been previously designed. \citet{JJMMPZ-25} propose a clairvoyant scheduler and prove an approximation ratio of $9216$.\footnote{While \citet{JJMMPZ-25} state their result as an $O(1)$ approximation (Theorem 4.3), the explicit constant $9216$ arises from combining multiplicative factors in their structural lemmas ($1536$ from Lemma 4.4 and $6$ from Lemma 4.7).}  
For large-memory instances, this was later improved to $48$ by \citet{WYZ-25}.\footnote{\label{footnote:WYZ 48 approx}\citet{WYZ-25} establishes the approximation ratio of 48 even under heterogeneous prompt lengths.} 
Our second main result---a clairvoyant structural analog of our non-clairvoyant algorithm---significantly improves these guarantees: we reduce the approximation ratio from $9216$ to $10.67$ for general instances and from $48$ to $6.75$ for large-memory instances.\footnote{In \Cref{sec:intro:technique} and \Cref{subsec:staggered}, we detail how our staggered pipeline design enables smoother memory usage and higher concurrency, which eventually translates into tighter approximation bounds.}

We further validate our theoretical insights through numerical experiments (\Cref{apx:numerical}). Using both synthetic workloads and real-world traces from the LMSYS-Chat-1M dataset, we observe that the structural properties of our geometric scheduling framework---such as staggered execution and disciplined phase-based preemption---can lead to meaningful practical improvements. In our evaluations, our algorithms perform favorably against state-of-the-art baselines across a range of settings, and heuristic variants ({\GBAD} and {\GSAH}) demonstrate robust behavior while preserving worst-case guarantees.

\subsection{Overview of the Techniques}
\label{sec:intro:technique}

As discussed above, we develop a unified algorithmic framework that achieves strong performance guarantees in both non-clairvoyant and clairvoyant settings. Realizing this goal requires overcoming four key challenges:  
(i) maintaining batching feasibility effectively under dynamic KV-cache memory constraints;  
(ii) handling heterogeneity in request response lengths;  
(iii) coping with uncertainty about response lengths (i.e., non-clairvoyance); and  
(iv) analyzing performance against an optimal clairvoyant scheduler that is difficult to characterize explicitly.

Challenges (i) and (ii) arise in both clairvoyant and non-clairvoyant settings, but are exacerbated by challenge (iii)---the lack of knowledge about request sizes---which makes effective batching significantly harder. Challenge (iv) is primarily analytical: because the optimal clairvoyant schedule has no simple closed form, standard competitive or approximation analysis techniques do not apply directly.

Our framework combines two key ingredients: \emph{staggered pipelining} and \emph{geometric slicing}.  
Staggered pipelining tackles challenge (i) by enabling high concurrency under dynamic memory constraints.  
Geometric slicing addresses challenges (ii) and (iii) by grouping jobs into geometrically spaced classes based on estimated response lengths, balancing load and remaining robust to estimation errors.

To handle the analytical difficulty of challenge (iv), we introduce a novel \emph{memory-time area} perspective. This geometric view yields a tractable bound on the optimal clairvoyant scheduler and provides a common metric to compare our algorithms against the optimum---effectively bridging scheduling decisions and resource consumption for competitive and approximation analyses.

Below, we discuss each of these components and their integration into our algorithms in greater detail.

\xhdr{Smoothing Memory Footprints via Staggered Pipeline.}  
A common approach used in both standard heuristics (e.g., First-Come-First-Served or Shortest-Job-First) and policies in prior  work~\citep[e.g.,][]{KLZSZYGZS-23,ZCSLTLZWXGZSZHZ24,JJMMPZ-25,CYZ-25} is \emph{simultaneous batching}: packing as many requests as possible into a single batch and executing them in lockstep. This strategy is well-motivated in the classic flow-time minimization setting (where memory consumption is static) and is indeed optimal for identical requests under that model.
However, simultaneous batching is inherently memory-inefficient in the KV-cache setting. Because all requests in a batch progress through decoding at the same rate, they reach their peak memory usage simultaneously. To avoid overflow, the scheduler must restrict the batch size based on this collective peak, resulting in a ``sawtooth'' memory profile where a significant portion of the memory budget remains underutilized on average.

Motivated by this inefficiency, we propose {\SPipelineSchedule} ({\SPS}), a core scheduling subroutine that forms the foundation of our main algorithms. Despite its simplicity, \emph{{\SPS} provides a principled approach to smoothing memory footprint over time, thereby enabling significantly higher concurrency under a fixed memory budget.} As such, it may be of independent interest for broader memory-constrained scheduling problems.

More precisely, {\SPS} is designed to process a set of requests with identical response lengths---or, more generally, a fixed time slice, where any request not completed within its allotted slice is killed. Rather than starting all requests simultaneously, {\SPS} intentionally offsets their start times to ``smooth out'' aggregate memory consumption. By staggering execution so that different requests reach their peak memory usage at distinct rounds, {\SPS} sustains a much higher degree of concurrency within the same memory budget.

Visual comparison between {\SPS} and simultaneous batching is provided in \Cref{example:schedule comparison}, \Cref{fig:schedule comparison}, and \Cref{fig:exp-uniform}. Thanks to its higher concurrency, {\SPS} achieves a better approximation ratio even for identical requests, and is asymptotically optimal in the large-memory regime. When combined with our other algorithmic ingredients, this improvement translates directly into stronger guarantees for more general settings.

\xhdr{Kill Two Birds with One Stone: Geometric Slicing for Effective Grouping.}  
While {\SPS} enables high concurrency for requests with identical response lengths, it does not address heterogeneity in response lengths or the uncertainty inherent in the non-clairvoyant setting. We resolve both challenges with a single idea: \emph{geometric slicing}, which groups requests into geometrically spaced classes based on their true response lengths (in the clairvoyant setting) or estimated lengths (in the non-clairvoyant setting).

We first describe the clairvoyant case, which is simpler. Our algorithm {\GeoBALG} ({\GBA}, see \Cref{alg:gba}) is parameterized by a scaling factor $\responseScalar \in (1, \infty)$. It partitions all requests into classes $\PhaseIdx = 0, 1, \dots$ according to geometrically increasing time scales:
\[
\sliceLenHatP{\PhaseIdx} = \responseScalar^{\PhaseIdx} \cdot \sliceLenHatP{0},
\]
where $1 \leq \sliceLenHatP{0} \leq \responseScalar$ is an initial time slice. For each class $\PhaseIdx$, {\GeoBALG} invokes {\SPS} with time slice $\lceil \sliceLenHatP{\PhaseIdx} \rceil$, which---by construction---is sufficient to complete all class-$\PhaseIdx$ requests. As in classical flow-time minimization, processing shorter requests first improves total flow time; invoking {\SPS} per class ensures high concurrency while respecting the KV-cache memory constraint.

Our non-clairvoyant algorithm {\GeoSALG} ({\GSA}, see \Cref{alg:gsa}) follows the same geometric structure. It uses the same parameters $\responseScalar$ and $\sliceLenHatP{0}$, and executes {\SPS} with time slice $\lceil \sliceLenHatP{\PhaseIdx} \rceil$ in phase $\PhaseIdx$. The key difference is that, without knowledge of true response lengths, it cannot isolate class-$\PhaseIdx$ requests. Instead, it runs {\SPS} on all remaining (unfinished) requests in each phase. Although this introduces overhead compared to the clairvoyant counterpart, our analysis shows this overhead is tightly controlled.

In fact, the choice of $\responseScalar$ entails a fundamental tradeoff. A small $\responseScalar$ yields many fine-grained classes, which may contain too few requests to fully utilize {\SPS}, thereby reducing concurrency. It also increases overhead in the non-clairvoyant setting, as {\GSA} performs more frequent kill-and-restart operations across a larger number of phases. Conversely, a large $\responseScalar$ produces coarse classes where the allocated time slice significantly exceeds the actual response lengths of many requests, leading to wasted memory and compute resources. We optimize $\responseScalar$ for different regimes to achieve the competitive and approximation ratios reported in \Cref{tab:results}. In practice, both $\responseScalar$ and $\sliceLenHatP{0}$ are tunable parameters (see \Cref{apx:numerical} for further discussion).

It is also worth highlighting that {\GSA} employs a counter-intuitive \emph{aggressive preemption} strategy: it systematically terminates all unfinished requests at the end of each phase. This contrasts sharply with many heuristics and policies in prior work, which avoid preemption unless the memory budget is exceeded---aiming to minimize restarts and associated wasted computation. Moreover, when preemption is unavoidable, those approaches typically prefer killing requests with shorter processing times. While intuitive, this strategy can severely degrade performance under the flow-time objective: long-running requests block memory for extended periods, limiting parallelism and inflating total flow time. As illustrated in \Cref{example:long-job-trap}, this distinction is crucial---{\GSA}'s disciplined phase-based preemption enables it to outperform such heuristics by maintaining higher concurrency and smoother memory utilization.

\xhdr{Competitive/Approximation Analysis via the Memory-Time Area Perspective.}  
A central challenge in our analysis is the intractability of the optimal scheduler. Due to the dynamic memory footprint and the hard KV-cache constraint, the optimal schedule can be highly complex. While it can be formulated as an integer program---and some prior work \citep[e.g.,][]{JJMMPZ-25} uses its linear programming relaxation and dual as a performance upper bound---this approach remains technically cumbersome. 

In contrast, we take a fundamentally different approach based on \emph{memory-time area}, which is defined as the total KV-cache resources consumed by a request over its execution, combining fixed prompt memory and linearly growing decode cache (see \Cref{def:area} and the colored triangles/trapezoids in \Cref{fig:schedule comparison}).

This perspective yields a clean accounting principle: any feasible schedule's total memory-time area cannot exceed the area supplied by the memory budget over time. Crucially, this enables two insights. First, under the relaxed constraint that only total area must be respected (a necessary condition for feasibility), the optimal schedule becomes simple: process jobs in non-decreasing order of response length. This provides a tractable upper bound on the true optimum.

Second, the area view allows us to precisely quantify the packing efficiency of {\SPS}. We show that both the clairvoyant {\GBA} (which applies {\SPS} per class) and {\SPS} itself achieve approximately optimal area utilization, directly yielding strong approximation guarantees. Finally, leveraging the structural similarity between {\GSA} and {\GBA}, we bound the non-clairvoyant overhead by a constant factor, establishing {\GSA}'s constant competitive ratio.

\subsection{Further Related Work}
\label{apx:related work}
We further review related work in the theoretical literature that provides foundational techniques for our algorithmic framework.

\xhdr{Packing Problems.}
The offline batch scheduling problem under KV-cache constraints shares structural similarities with two-dimensional packing. In our memory-time area framework, jobs occupy triangular or trapezoidal footprints, and scheduling them within the memory budget resembles packing shapes into a strip of fixed width. \citet{LMM02} provide a comprehensive survey of 2D orthogonal packing variants (e.g., bin packing and strip packing for rectangles), summarizing the complexity landscape and standard algorithmic paradigms including exact methods, approximation algorithms, and heuristics. For strip packing specifically, \citet{AKPP17} prove APX-hardness with polynomially bounded item dimensions---in particular, it is NP-hard to approximate within a factor better than $12/11 - \varepsilon$.

\xhdr{Flow Time Minimization.}
Minimizing total flow time on parallel machines has been extensively studied. \citet{LR07} develop approximation algorithms for total flow time on parallel machines with release dates, providing the first nontrivial approximation guarantees for this setting. \citet{AALR02} study flow-time minimization under the restriction that jobs cannot migrate between machines, providing competitive guarantees via techniques that compare to migratory benchmarks. More recently, \citet{GSYZ25} establish tight randomized bounds for online non-preemptive total flow time on identical machines, matching $\Theta(\sqrt{n/m})$ upper and lower bounds, and extend the framework to kill-and-restart preemption with similar bounds.

\xhdr{Non-Clairvoyant Scheduling.}
The non-clairvoyant scheduling model, where processing times are revealed only upon job completion, was formalized by \citet{MPT94}, who prove foundational upper and lower bounds and analyze baseline policies including round-robin for flow-time objectives. \citet{BL01} analyze non-clairvoyant flow-time minimization on single and parallel machines, showing competitiveness of multilevel-feedback-style randomized strategies under adversarial models. These classical results assume costless preemption: a job can be paused and resumed without penalty. In contrast, \citet{J-25} study non-clairvoyant single-machine scheduling with kill-and-restart preemption, where preempted jobs lose all progress, proving lower bounds for deterministic strategies and giving tight analyses for scaling restart strategies.

\xhdr{Scheduling with Partial Information.}
Between the extremes of clairvoyance and non-clairvoyance, several intermediate information models have been studied. \citet{BLMP04} formalize semi-clairvoyant scheduling where only coarse size classes are known, deriving constant-competitive algorithms for average flow time. \citet{ALT-22} develop online single-machine flow-time algorithms that remain competitive for every distortion level between predicted and true processing times. \citet{IKQP23} introduce prediction-augmented non-clairvoyant scheduling for minimizing total completion time, designing algorithms with robustness--consistency guarantees as prediction error varies. \citet{BP24} study non-clairvoyant scheduling where predictions are available for only a subset of jobs, establishing near-optimal bounds and a learning-augmented algorithm with robustness-consistency-smoothness tradeoffs. \citet{BCLS25} study non-clairvoyant scheduling with continuous feedback in the form of progress estimates, providing competitive algorithms under adversarial and stochastic progress-bar models.

\xhdr{Scheduling for LLM Inference.}
\citet{JJMMPZ-25} formalize online batching under a hard KV-cache memory constraint, establish strong impossibility results for adversarial arrivals, and propose a greedy shortest-first policy with bounded competitive guarantees under restricted assumptions.  
\citet{WYZ-25} study settings with heterogeneous prefill and decode lengths, prove that the resulting batching problem is NP-hard, and design constant-competitive schedulers.  
\citet{CYZ-25} consider the same offline batch inference model as ours and propose a scheduling algorithm with an approximation ratio of $O(\log(\responseLen_{\max}/\responseLen_{\min}))$ in the large-memory regime, where $\responseLen_{\max}$ and $\responseLen_{\min}$ denote the maximum and minimum response lengths, respectively. Since $\responseLen_{\max} \leq \kvmem$ and $\responseLen_{\min} \geq 1$, their result implies a non-clairvoyant competitive ratio of $O(\log \kvmem)$ for large-memory instances. They also analyze algorithm performance under additional distributional assumptions on response lengths.

Beyond worst-case analysis, \citet{ALSW25} introduce a fluid approximation as an analytical benchmark and derive threshold-based online policies with heavy-traffic optimality guarantees.  
\citet{LDP-25} adopt a queueing-theoretic perspective, proving throughput-optimality for work-conserving policies in LLM serving.  
\citet{ZLMP-25} address prompt caching in multi-turn conversations, focusing on tail latency; they propose a modified LRU policy optimized for P90/P95 objectives. 
\section{Preliminaries}
\label{sec:prelim}
\subsection{Model and Objective}
\label{sec:model}

We now establish the formal framework for the non-clairvoyant scheduling problem. While abstract, this model captures the operational realities of modern LLM inference; we refer the reader to \Cref{sec:model justificaiton} for detailed justifications of our core assumptions.

\xhdr{Jobs and Scheduling Actions.}
The system consists of a single computational worker (aka., a GPU) managed by a scheduler. There are $\NumberJobs$ jobs (aka., inference requests), each job $i$ has a \emph{prompt} of length $\initialLen \in \naturals$ and a \emph{response length} (aka., the number of decode tokens) $\responseLen_i \in \naturals$, which is the number of time units needed to complete the job once it starts. In this paper, we focus on the offline batch setting with identical prompt sizes, i.e., all jobs are available at time $0$ and share a common prompt length $\initialLen$.

Time proceeds in discrete, unit-length rounds $t=0,1,2,\cdots$. Let $\curProgress_{i,t}\in\{0,\cdots,\responseLen_i\}$ be the amount of processing the job has received up to the \emph{start} of round $t$. If job $i$ reaches $\curProgress_{i,t} = \responseLen_i$ at round $t$, then this job is completed (aka., finished) and its completion time is defined as $t$.
At the beginning of each round $t$, the scheduler chooses an \emph{active batch} $\ActiveBatch_t \subseteq [\NumberJobs]$ of unfinished jobs to process. During round $t$, every active job $i\in\ActiveBatch_t$ receives one unit of processing (i.e., decode the next token), so
\begin{align*}
\curProgress_{i,t+1}=\curProgress_{i,t}+1.
\end{align*}
The active batch is \emph{dynamic}: its composition can change in each round. Essentially, the scheduler may perform two actions at the beginning of each round by choosing $\ActiveBatch_t$:
\begin{enumerate}
    \item \underline{\emph{(Start)}}: activate an inactive job $i \notin \ActiveBatch_{t - 1}$ where $\curProgress_{i,t}=0$.
    \item \underline{\emph{(Kill)}}: deactivate an active, unfinished job $i \in \ActiveBatch_{t - 1} \setminus \ActiveBatch_{t}$ where $\curProgress_{i,t} < \responseLen_i$, discarding all its accumulated progress and resetting $\curProgress_{i,t + 1}$ to $0$.
\end{enumerate}

\xhdr{KV-Cache Memory Feasibility.}
Unlike classical parallel scheduling with only a hard limit on the number of concurrent jobs, our model features a \emph{KV-cache memory budget} $\kvmem \in \naturals$. This budget constrains the total memory used by active jobs, which grows as the tokens are decoded one by one.
When job $i$ has received $\curProgress_{i,t}$ units of processing by round $t$, it occupies $\initialLen + \curProgress_{i,t} + 1$ units of KV-cache memory: $\initialLen$ units for the prompt and one additional unit for each decoded token (including the token being decoded in round $t$).
Thus, the memory feasibility constraint requires 
\begin{align}
    \label{eq:memory-feasibility}
    \tag{\textsc{Memory-Feasibility}}
    \forall t\in\naturals:~~~~~~~~~
    \sum\nolimits_{i \in \ActiveBatch_t} \bigl( \initialLen + \left(\curProgress_{i,t}+1\right) \bigr) \;\leq\; \kvmem.
\end{align}
To avoid trivial infeasibility, we assume \emph{per-job feasibility}: $\initialLen + \responseLen_i \leq \kvmem$ for all jobs $i \in [\NumberJobs]$. This ensures that any job can be completed when processed in isolation.

\xhdr{Non-Clairvoyant Environment.}
Our base model assumes a \emph{non-clairvoyant scheduler}, who does not know the response length $\responseLen_i$ of any job $i$ in advance. This value is revealed only upon the job's completion. In addition to the base model, we also consider a clairvoyant variant (in Section~\ref{sec:clairvoyant greedy}) in which the scheduler has full knowledge of all response lengths $\{\responseLen_i\}_{i \in [\NumberJobs]}$.

\xhdr{Objective and Benchmark.}
The scheduler's objective is to minimize the \emph{total flow time} (aka., end-to-end latency).\footnote{In our setting, all jobs are available at time zero, so flow time equals completion time.}  Let $\ALG$ be a scheduling algorithm, and let $\finishTime_i$ denote the round in which job $i$ completes under $\ALG$.  The total flow time of $\ALG$ is defined as
\begin{align*}
    \FTime{\ALG} \;\triangleq\; \sum\nolimits_{i \in [\NumberJobs]} \finishTime_i.
\end{align*}
Our primary focus is on designing non-clairvoyant algorithms that operate without knowledge of job sizes. To quantify the performance loss incurred by this lack of information, we benchmark our algorithms against $\OPT$, the minimum total flow time achievable by any clairvoyant scheduler satisfying the memory feasibility constraint~\eqref{eq:memory-feasibility}. We use the standard terminology of \textit{competitive ratio} for this comparison.
Specifically, $\ALG$ is \emph{$\approxratio$-competitive} if, for every problem instance, $\FTime{\ALG} \;\leq\; \approxratio \cdot \OPT$.\footnote{We allow the optimal clairvoyant scheduler to preempt or pause jobs (i.e., retain a job in memory without processing it). However, it can be shown that neither preemption nor pausing is necessary in an optimal solution, as postpoing the start of a job will always be a better choice.} For clairvoyant algorithms where the comparison is against the optimal solution of the same information structure, this measure is referred to as the \textit{approximation ratio}.

\subsection{Discussion on the Model Primitives}
\label{sec:model justificaiton}
We next explain several key modeling choices made in our problem formulation.
Our theoretical model abstracts the complex dynamics of LLM inference systems. Here, we elaborate on the practical considerations that inform our theoretical abstractions and place our work in the context of related literature.

\xhdr{Offline Batch Inference and Identical Prompt Lengths.}
We assume all jobs are available at time $0$ and share identical prompt lengths. This models the \emph{offline batch inference} scenario, a common industry practice for high-throughput tasks where aggregating requests of similar prompt lengths is practical and cost-effective. For instance, both OpenAI and Google offer Batch APIs for processing large workloads asynchronously, often with significant cost savings~\citep{OpenAI-BatchAPI-24, Google-BatchAPI-24}. Our model can be viewed as optimizing the scheduling in every single batch request with similar prompt lengths. 
Furthermore, the identical prompt length assumption is also motivated by fixed prompt scenarios such as \emph{agentic workflows}, where each agent is defined by a fixed system prompt and inference providers use prefix caching to reuse KV tensors corresponding to these prompts across iterations~\citep{ZXYTCCZCHWZ25,HAS025,PPSHGLQWD25}. Since the system prompts remain constant, the dominant part of prompt lengths becomes more uniform across requests.

Theoretically speaking, we introduce these two assumptions to isolate the algorithmic challenges arising from the growing memory scenario and the non-clairvoyant perspective. The rationale is that relaxing either assumption renders the problem significantly more difficult, even for clairvoyant schedulers with static memory:

\begin{itemize}
    \item \emph{(Online arrivals)} Given identical prompt lengths, setting $\initialLen = \kvmem / 2$ simplifies the problem to online single-machine scheduling to minimize total flow time under kill-and-restart preemption. This problem admits a competitive ratio lower bound of $\Omega(\sqrt{n})$ for $n$ jobs, even for randomized algorithms~\citep{ES03}.
    \item \emph{(Heterogeneous prompt lengths)} In the offline batch setting where $\responseLen_i = 1$ for all jobs, allowing heterogeneous prompt lengths reduces the problem to Min-Sum Bin Packing, which is strongly NP-hard~\citep{EJL18}.
\end{itemize}

Consequently, without these two assumptions, the analysis would deviate from our primary focus and become bogged down by the aforementioned fundamental hardness results.

\xhdr{Memory Model.} The constraint~\eqref{eq:memory-feasibility} assumes that memory is consumed strictly linearly with the number of decoded tokens and that fragmentation is negligible. This assumption is grounded in modern inference engine architectures, specifically the \emph{PagedAttention} mechanism introduced in vLLM~\citep{KLZSZYGZS-23}, and also in other LLM inference systems~\citep{NVIDIA-23,ZYXS0YCKSGB24}. PagedAttention partitions the KV-cache into fixed-size blocks that can be stored in non-contiguous memory, effectively eliminating external fragmentation. Consequently, the memory budget $\kvmem$ represents the total number of available token slots (or blocks) on the GPU, and feasibility is determined solely by the total count of tokens, as modeled in~\eqref{eq:memory-feasibility}.

\xhdr{Modeling Prefill Cost via Kill-and-Restart.}
Although our model does not explicitly represent the prefill phase, its cost is implicitly captured by our kill-and-restart preemption mechanism.
In modern LLM inference systems, the primary output of the prefill phase is the construction of the initial KV-cache for the prompt.
When a request is preempted, this KV-cache state must either be swapped to slower memory or discarded entirely. In the latter case, resuming the request requires rerunning the prefill computation from scratch to reconstruct this state.
Our model focuses on the kill-and-restart dynamic, as it cleanly captures the essential trade-offs in high-throughput systems. This modeling choice is grounded in the operational reality of inference engines, where the I/O overhead of swapping the KV-cache is often prohibitively expensive, making kill-based preemption the preferred strategy~\citep{KLZSZYGZS-23}.

Furthermore, the prefill phase can be of separate interest and can be addressed orthogonally.
In practice, prefill can be handled by \emph{prefill/decode disaggregation} (PD separation), which assigns prefill and decode to distinct resource pools. This architectural choice reduces phase interference and enables phase-specific optimizations~\citep{ZLCHZL0024,CPZSGMFB25}.
Accordingly, our model measures time in unit rounds corresponding to a single decode step (one output token) and does not explicitly include the prefill stage in the scheduling horizon.

\section{Geometric Slicing Algorithm}
\label{sec:algorithm description}

Our main contribution is a non-clairvoyant KV-cache scheduling algorithm, {\GeoSALG} ({\GSA}). In \Cref{subsec:staggered}, we first introduce the {\SPipelineSchedule} policy, which is near-optimal for instances with identical jobs. This policy serves as the core subroutine of {\GSA}, which we present in \Cref{subsec:main alg} with all its algorithmic ingredients. The competitive ratio analysis of {\GSA} is deferred to \Cref{sec:clairvoyant greedy,sec:algorithms}.

\subsection{Subroutine: {\SPipelineSchedule}}
\label{subsec:staggered}

In this section, we introduce {\SPipelineSchedule} ({\SPS}), a core scheduling subroutine that serves as the foundation for our main algorithms. While simple in its construction, {\SPS} addresses a fundamental inefficiency in how current systems handle KV-cache memory, and its design may be of independent interest for broader memory-constrained scheduling problems.

The design of {\SPS} is motivated by a canonical setting: \emph{given a sequence of jobs $\Jobs$ with \emph{identical} prompt length $\initialLen$ and response length $\responseLen$, how can we schedule them to minimize total completion time (or equivalently, maximize throughput) under a fixed memory budget?}

\xhdr{The Simultaneous Batching Trap.} A conventional approach to this problem is \emph{simultaneous batching}: packing as many jobs as possible into a single batch and executing them in lockstep. This strategy is a natural consequence of standard policies like First-Come-First-Served (FCFS) or Shortest-First under certain conditions, and is widely adopted in LLM serving systems~\citep{KLZSZYGZS-23,ZYXS0YCKSGB24} as well as the theoretical literature~\citep{JJMMPZ-25,CYZ-25}.\footnote{Notably, in the classic flow-time minimization scheduling problem (where memory occupation remains static over time) with identical jobs, simultaneous batching is indeed optimal.}

However, simultaneous execution is inherently memory-inefficient. Since all jobs in a batch progress through their generation phases at the same rate, they reach their peak memory usage simultaneously. To prevent memory overflow, the scheduler must limit the batch size based on this collective peak. This leads to a ``sawtooth'' memory profile where, on average, a significant portion of the memory budget remains underutilized, as illustrated in \Cref{example:schedule comparison} and \Cref{subfig:schedule comparison:simultaneous}.

\xhdr{{\SPS} Subroutine: Smoothing via Staggering.} To overcome this bottleneck, {\SPipelineSchedule} ({\SPS}) adopts a \emph{staggered} execution strategy. Instead of starting all jobs at once, {\SPS} intentionally offsets their start times to ``smooth out'' the aggregate memory consumption over time. By ensuring that different jobs reach their peak memory usage at different rounds, {\SPS} can safely sustain a much higher degree of parallelism within the same memory budget.

Formally, {\SPS} is parameterized by a \emph{degree of parallelism} $\paral$ and a \emph{slice length} $\sliceLen$. For a sequence of jobs indexed $\jobIdx=0, \dots, |\Jobs|-1$, it schedules job $\jobIdx$ to start at round $\startTimeOf{\jobIdx}\triangleq\left\lfloor {\jobIdx \cdot \sliceLen}/{\paral} \right\rfloor$. When the slice length $\sliceLen$ is set to the identical response length $\responseLen$, each job completes at round $\compTimeOf{\jobIdx}\triangleq \startTimeOf{\jobIdx} + \responseLen$. A detailed description is provided in Procedure~\ref{alg:staggered}, and a visual comparison is shown in \Cref{example:schedule comparison} and \Cref{subfig:schedule comparison:staggered}.

\begin{myprocedure}[ht]
\caption{{\SPipelineSchedule} ({\SPS})}
\label{alg:staggered}
\SetKwInOut{Input}{input}
\SetKwInOut{Output}{output}
\Input{
    Sequence of jobs $\Jobs = \{\job_0, \job_1, \dots\}$,  Degree of parallelism $\paral \in \naturals$, 
    Slice length $\sliceLen \in \naturals$
}
\Output{
    Start time $\startTimeOf{\jobIdx}$ and end time $\compTimeOf{\jobIdx}$ for each job $\job_{\jobIdx} \in \Jobs$
}
\BlankLine
\For{each job index $\jobIdx \in \{0, 1, \cdots, |\Jobs| - 1\}$}{
    set start time $\startTimeOf{\jobIdx} \gets \left\lfloor {\jobIdx \cdot \sliceLen}/{\paral} \right\rfloor$\;
    set end time $\compTimeOf{\jobIdx} \gets \startTimeOf{\jobIdx} + \sliceLen$\;
    schedule job $\job_{\jobIdx}$ to run in the interval $[\startTimeOf{\jobIdx}, \compTimeOf{\jobIdx})$\;
    \tcc{Jobs not completed by the start of round $\compTimeOf{\jobIdx}$ are killed.}
}
\Return $\{\startTimeOf{\jobIdx}, \compTimeOf{\jobIdx}\}_{\job_{\jobIdx} \in \Jobs}$\;
\end{myprocedure} 
\begin{example}[A Toy Example with Identical Jobs]
    \label{example:schedule comparison}
    Consider an instance with $15$ identical jobs (indexed $\jobIdx=0, \dots, 14$), each having prompt length $\initialLen = 0$ and response length $\responseLen = 5$, under a KV-cache memory budget of $\kvmem = 15$.

    Under {\SPipelineSchedule} ({\SPS}) with degree of parallelism $\paral = 5$ and slice length $\sliceLen = \responseLen = 5$, the $\jobIdx$-th job starts at round $\lfloor 5\jobIdx/5 \rfloor = \jobIdx$. The last job ($\jobIdx=14$) starts at round $14$ and completes at the end of round $18$. This yields a total flow time of $\FTime{\SPS} = 5 + 6 + \cdots + 19 = 180$.
    See \Cref{subfig:schedule comparison:staggered} for a graphical illustration.

    In contrast, {\SimPipelineSchedule} ({\SimPS}) processes $\paral = 3$ jobs at a time, starting new jobs in rounds $0$, $5$, $10$, $15$, and $20$. This yields a lower degree of parallelism due to peak memory constraints, and the last job finishes at round~$25$, resulting in a total flow time of $\FTime{\SimPS} = 225$.  
    See \Cref{subfig:schedule comparison:simultaneous} for a graphical illustration. 
\end{example}

\begin{figure}
  \centering
\subfloat[Execution of {\SPS} with $\paral = 5$]{
\begin{tikzpicture}[x=0.25cm,y=0.25cm]
\definecolor{cellbg}{RGB}{245,245,245}
  \definecolor{job1}{RGB}{230,115,115}
  \definecolor{job2}{RGB}{230,161,115}
  \definecolor{job3}{RGB}{230,207,115}
  \definecolor{job4}{RGB}{207,230,115}
  \definecolor{job5}{RGB}{161,230,115}
  \definecolor{job6}{RGB}{115,230,115}
  \definecolor{job7}{RGB}{115,230,161}
  \definecolor{job8}{RGB}{115,230,207}
  \definecolor{job9}{RGB}{115,207,230}
  \definecolor{job10}{RGB}{115,161,230}
  \definecolor{job11}{RGB}{115,115,230}
  \definecolor{job12}{RGB}{161,115,230}
  \definecolor{job13}{RGB}{207,115,230}
  \definecolor{job14}{RGB}{230,115,207}
  \definecolor{job15}{RGB}{230,115,161}
  \fill[cellbg] (0,0) rectangle ++(19,15);
\fill[job1] (0,0) rectangle ++(1,1);
  \fill[job2] (1,0) rectangle ++(1,1);
  \fill[job1] (1,1) rectangle ++(1,1);
  \fill[job1] (1,2) rectangle ++(1,1);
  \fill[job3] (2,0) rectangle ++(1,1);
  \fill[job2] (2,1) rectangle ++(1,1);
  \fill[job2] (2,2) rectangle ++(1,1);
  \fill[job1] (2,3) rectangle ++(1,1);
  \fill[job1] (2,4) rectangle ++(1,1);
  \fill[job1] (2,5) rectangle ++(1,1);
  \fill[job4] (3,0) rectangle ++(1,1);
  \fill[job3] (3,1) rectangle ++(1,1);
  \fill[job3] (3,2) rectangle ++(1,1);
  \fill[job2] (3,3) rectangle ++(1,1);
  \fill[job2] (3,4) rectangle ++(1,1);
  \fill[job2] (3,5) rectangle ++(1,1);
  \fill[job1] (3,6) rectangle ++(1,1);
  \fill[job1] (3,7) rectangle ++(1,1);
  \fill[job1] (3,8) rectangle ++(1,1);
  \fill[job1] (3,9) rectangle ++(1,1);
  \fill[job5] (4,0) rectangle ++(1,1);
  \fill[job4] (4,1) rectangle ++(1,1);
  \fill[job4] (4,2) rectangle ++(1,1);
  \fill[job3] (4,3) rectangle ++(1,1);
  \fill[job3] (4,4) rectangle ++(1,1);
  \fill[job3] (4,5) rectangle ++(1,1);
  \fill[job2] (4,6) rectangle ++(1,1);
  \fill[job2] (4,7) rectangle ++(1,1);
  \fill[job2] (4,8) rectangle ++(1,1);
  \fill[job2] (4,9) rectangle ++(1,1);
  \fill[job1] (4,10) rectangle ++(1,1);
  \fill[job1] (4,11) rectangle ++(1,1);
  \fill[job1] (4,12) rectangle ++(1,1);
  \fill[job1] (4,13) rectangle ++(1,1);
  \fill[job1] (4,14) rectangle ++(1,1);
  \fill[job6] (5,0) rectangle ++(1,1);
  \fill[job5] (5,1) rectangle ++(1,1);
  \fill[job5] (5,2) rectangle ++(1,1);
  \fill[job4] (5,3) rectangle ++(1,1);
  \fill[job4] (5,4) rectangle ++(1,1);
  \fill[job4] (5,5) rectangle ++(1,1);
  \fill[job3] (5,6) rectangle ++(1,1);
  \fill[job3] (5,7) rectangle ++(1,1);
  \fill[job3] (5,8) rectangle ++(1,1);
  \fill[job3] (5,9) rectangle ++(1,1);
  \fill[job2] (5,10) rectangle ++(1,1);
  \fill[job2] (5,11) rectangle ++(1,1);
  \fill[job2] (5,12) rectangle ++(1,1);
  \fill[job2] (5,13) rectangle ++(1,1);
  \fill[job2] (5,14) rectangle ++(1,1);
  \fill[job7] (6,0) rectangle ++(1,1);
  \fill[job6] (6,1) rectangle ++(1,1);
  \fill[job6] (6,2) rectangle ++(1,1);
  \fill[job5] (6,3) rectangle ++(1,1);
  \fill[job5] (6,4) rectangle ++(1,1);
  \fill[job5] (6,5) rectangle ++(1,1);
  \fill[job4] (6,6) rectangle ++(1,1);
  \fill[job4] (6,7) rectangle ++(1,1);
  \fill[job4] (6,8) rectangle ++(1,1);
  \fill[job4] (6,9) rectangle ++(1,1);
  \fill[job3] (6,10) rectangle ++(1,1);
  \fill[job3] (6,11) rectangle ++(1,1);
  \fill[job3] (6,12) rectangle ++(1,1);
  \fill[job3] (6,13) rectangle ++(1,1);
  \fill[job3] (6,14) rectangle ++(1,1);
  \fill[job8] (7,0) rectangle ++(1,1);
  \fill[job7] (7,1) rectangle ++(1,1);
  \fill[job7] (7,2) rectangle ++(1,1);
  \fill[job6] (7,3) rectangle ++(1,1);
  \fill[job6] (7,4) rectangle ++(1,1);
  \fill[job6] (7,5) rectangle ++(1,1);
  \fill[job5] (7,6) rectangle ++(1,1);
  \fill[job5] (7,7) rectangle ++(1,1);
  \fill[job5] (7,8) rectangle ++(1,1);
  \fill[job5] (7,9) rectangle ++(1,1);
  \fill[job4] (7,10) rectangle ++(1,1);
  \fill[job4] (7,11) rectangle ++(1,1);
  \fill[job4] (7,12) rectangle ++(1,1);
  \fill[job4] (7,13) rectangle ++(1,1);
  \fill[job4] (7,14) rectangle ++(1,1);
  \fill[job9] (8,0) rectangle ++(1,1);
  \fill[job8] (8,1) rectangle ++(1,1);
  \fill[job8] (8,2) rectangle ++(1,1);
  \fill[job7] (8,3) rectangle ++(1,1);
  \fill[job7] (8,4) rectangle ++(1,1);
  \fill[job7] (8,5) rectangle ++(1,1);
  \fill[job6] (8,6) rectangle ++(1,1);
  \fill[job6] (8,7) rectangle ++(1,1);
  \fill[job6] (8,8) rectangle ++(1,1);
  \fill[job6] (8,9) rectangle ++(1,1);
  \fill[job5] (8,10) rectangle ++(1,1);
  \fill[job5] (8,11) rectangle ++(1,1);
  \fill[job5] (8,12) rectangle ++(1,1);
  \fill[job5] (8,13) rectangle ++(1,1);
  \fill[job5] (8,14) rectangle ++(1,1);
  \fill[job10] (9,0) rectangle ++(1,1);
  \fill[job9] (9,1) rectangle ++(1,1);
  \fill[job9] (9,2) rectangle ++(1,1);
  \fill[job8] (9,3) rectangle ++(1,1);
  \fill[job8] (9,4) rectangle ++(1,1);
  \fill[job8] (9,5) rectangle ++(1,1);
  \fill[job7] (9,6) rectangle ++(1,1);
  \fill[job7] (9,7) rectangle ++(1,1);
  \fill[job7] (9,8) rectangle ++(1,1);
  \fill[job7] (9,9) rectangle ++(1,1);
  \fill[job6] (9,10) rectangle ++(1,1);
  \fill[job6] (9,11) rectangle ++(1,1);
  \fill[job6] (9,12) rectangle ++(1,1);
  \fill[job6] (9,13) rectangle ++(1,1);
  \fill[job6] (9,14) rectangle ++(1,1);
  \fill[job11] (10,0) rectangle ++(1,1);
  \fill[job10] (10,1) rectangle ++(1,1);
  \fill[job10] (10,2) rectangle ++(1,1);
  \fill[job9] (10,3) rectangle ++(1,1);
  \fill[job9] (10,4) rectangle ++(1,1);
  \fill[job9] (10,5) rectangle ++(1,1);
  \fill[job8] (10,6) rectangle ++(1,1);
  \fill[job8] (10,7) rectangle ++(1,1);
  \fill[job8] (10,8) rectangle ++(1,1);
  \fill[job8] (10,9) rectangle ++(1,1);
  \fill[job7] (10,10) rectangle ++(1,1);
  \fill[job7] (10,11) rectangle ++(1,1);
  \fill[job7] (10,12) rectangle ++(1,1);
  \fill[job7] (10,13) rectangle ++(1,1);
  \fill[job7] (10,14) rectangle ++(1,1);
  \fill[job12] (11,0) rectangle ++(1,1);
  \fill[job11] (11,1) rectangle ++(1,1);
  \fill[job11] (11,2) rectangle ++(1,1);
  \fill[job10] (11,3) rectangle ++(1,1);
  \fill[job10] (11,4) rectangle ++(1,1);
  \fill[job10] (11,5) rectangle ++(1,1);
  \fill[job9] (11,6) rectangle ++(1,1);
  \fill[job9] (11,7) rectangle ++(1,1);
  \fill[job9] (11,8) rectangle ++(1,1);
  \fill[job9] (11,9) rectangle ++(1,1);
  \fill[job8] (11,10) rectangle ++(1,1);
  \fill[job8] (11,11) rectangle ++(1,1);
  \fill[job8] (11,12) rectangle ++(1,1);
  \fill[job8] (11,13) rectangle ++(1,1);
  \fill[job8] (11,14) rectangle ++(1,1);
  \fill[job13] (12,0) rectangle ++(1,1);
  \fill[job12] (12,1) rectangle ++(1,1);
  \fill[job12] (12,2) rectangle ++(1,1);
  \fill[job11] (12,3) rectangle ++(1,1);
  \fill[job11] (12,4) rectangle ++(1,1);
  \fill[job11] (12,5) rectangle ++(1,1);
  \fill[job10] (12,6) rectangle ++(1,1);
  \fill[job10] (12,7) rectangle ++(1,1);
  \fill[job10] (12,8) rectangle ++(1,1);
  \fill[job10] (12,9) rectangle ++(1,1);
  \fill[job9] (12,10) rectangle ++(1,1);
  \fill[job9] (12,11) rectangle ++(1,1);
  \fill[job9] (12,12) rectangle ++(1,1);
  \fill[job9] (12,13) rectangle ++(1,1);
  \fill[job9] (12,14) rectangle ++(1,1);
  \fill[job14] (13,0) rectangle ++(1,1);
  \fill[job13] (13,1) rectangle ++(1,1);
  \fill[job13] (13,2) rectangle ++(1,1);
  \fill[job12] (13,3) rectangle ++(1,1);
  \fill[job12] (13,4) rectangle ++(1,1);
  \fill[job12] (13,5) rectangle ++(1,1);
  \fill[job11] (13,6) rectangle ++(1,1);
  \fill[job11] (13,7) rectangle ++(1,1);
  \fill[job11] (13,8) rectangle ++(1,1);
  \fill[job11] (13,9) rectangle ++(1,1);
  \fill[job10] (13,10) rectangle ++(1,1);
  \fill[job10] (13,11) rectangle ++(1,1);
  \fill[job10] (13,12) rectangle ++(1,1);
  \fill[job10] (13,13) rectangle ++(1,1);
  \fill[job10] (13,14) rectangle ++(1,1);
  \fill[job15] (14,0) rectangle ++(1,1);
  \fill[job14] (14,1) rectangle ++(1,1);
  \fill[job14] (14,2) rectangle ++(1,1);
  \fill[job13] (14,3) rectangle ++(1,1);
  \fill[job13] (14,4) rectangle ++(1,1);
  \fill[job13] (14,5) rectangle ++(1,1);
  \fill[job12] (14,6) rectangle ++(1,1);
  \fill[job12] (14,7) rectangle ++(1,1);
  \fill[job12] (14,8) rectangle ++(1,1);
  \fill[job12] (14,9) rectangle ++(1,1);
  \fill[job11] (14,10) rectangle ++(1,1);
  \fill[job11] (14,11) rectangle ++(1,1);
  \fill[job11] (14,12) rectangle ++(1,1);
  \fill[job11] (14,13) rectangle ++(1,1);
  \fill[job11] (14,14) rectangle ++(1,1);
  \fill[job15] (15,1) rectangle ++(1,1);
  \fill[job15] (15,2) rectangle ++(1,1);
  \fill[job14] (15,3) rectangle ++(1,1);
  \fill[job14] (15,4) rectangle ++(1,1);
  \fill[job14] (15,5) rectangle ++(1,1);
  \fill[job13] (15,6) rectangle ++(1,1);
  \fill[job13] (15,7) rectangle ++(1,1);
  \fill[job13] (15,8) rectangle ++(1,1);
  \fill[job13] (15,9) rectangle ++(1,1);
  \fill[job12] (15,10) rectangle ++(1,1);
  \fill[job12] (15,11) rectangle ++(1,1);
  \fill[job12] (15,12) rectangle ++(1,1);
  \fill[job12] (15,13) rectangle ++(1,1);
  \fill[job12] (15,14) rectangle ++(1,1);
  \fill[job15] (16,3) rectangle ++(1,1);
  \fill[job15] (16,4) rectangle ++(1,1);
  \fill[job15] (16,5) rectangle ++(1,1);
  \fill[job14] (16,6) rectangle ++(1,1);
  \fill[job14] (16,7) rectangle ++(1,1);
  \fill[job14] (16,8) rectangle ++(1,1);
  \fill[job14] (16,9) rectangle ++(1,1);
  \fill[job13] (16,10) rectangle ++(1,1);
  \fill[job13] (16,11) rectangle ++(1,1);
  \fill[job13] (16,12) rectangle ++(1,1);
  \fill[job13] (16,13) rectangle ++(1,1);
  \fill[job13] (16,14) rectangle ++(1,1);
  \fill[job15] (17,6) rectangle ++(1,1);
  \fill[job15] (17,7) rectangle ++(1,1);
  \fill[job15] (17,8) rectangle ++(1,1);
  \fill[job15] (17,9) rectangle ++(1,1);
  \fill[job14] (17,10) rectangle ++(1,1);
  \fill[job14] (17,11) rectangle ++(1,1);
  \fill[job14] (17,12) rectangle ++(1,1);
  \fill[job14] (17,13) rectangle ++(1,1);
  \fill[job14] (17,14) rectangle ++(1,1);
  \fill[job15] (18,10) rectangle ++(1,1);
  \fill[job15] (18,11) rectangle ++(1,1);
  \fill[job15] (18,12) rectangle ++(1,1);
  \fill[job15] (18,13) rectangle ++(1,1);
  \fill[job15] (18,14) rectangle ++(1,1);
\draw[black, line width=0.2pt] (0,0) rectangle (19,15);
  \draw[black!60, line width=0.2pt] (1,0)--(1,15);
  \draw[black!60, line width=0.2pt] (2,0)--(2,15);
  \draw[black!60, line width=0.2pt] (3,0)--(3,15);
  \draw[black!60, line width=0.2pt] (4,0)--(4,15);
  \draw[black!60, line width=0.2pt] (5,0)--(5,15);
  \draw[black!60, line width=0.2pt] (6,0)--(6,15);
  \draw[black!60, line width=0.2pt] (7,0)--(7,15);
  \draw[black!60, line width=0.2pt] (8,0)--(8,15);
  \draw[black!60, line width=0.2pt] (9,0)--(9,15);
  \draw[black!60, line width=0.2pt] (10,0)--(10,15);
  \draw[black!60, line width=0.2pt] (11,0)--(11,15);
  \draw[black!60, line width=0.2pt] (12,0)--(12,15);
  \draw[black!60, line width=0.2pt] (13,0)--(13,15);
  \draw[black!60, line width=0.2pt] (14,0)--(14,15);
  \draw[black!60, line width=0.2pt] (15,0)--(15,15);
  \draw[black!60, line width=0.2pt] (16,0)--(16,15);
  \draw[black!60, line width=0.2pt] (17,0)--(17,15);
  \draw[black!60, line width=0.2pt] (18,0)--(18,15);
  \draw[black!60, line width=0.2pt] (0,1)--(19,1);
  \draw[black!60, line width=0.2pt] (0,2)--(19,2);
  \draw[black!60, line width=0.2pt] (0,3)--(19,3);
  \draw[black!60, line width=0.2pt] (0,4)--(19,4);
  \draw[black!60, line width=0.2pt] (0,5)--(19,5);
  \draw[black!60, line width=0.2pt] (0,6)--(19,6);
  \draw[black!60, line width=0.2pt] (0,7)--(19,7);
  \draw[black!60, line width=0.2pt] (0,8)--(19,8);
  \draw[black!60, line width=0.2pt] (0,9)--(19,9);
  \draw[black!60, line width=0.2pt] (0,10)--(19,10);
  \draw[black!60, line width=0.2pt] (0,11)--(19,11);
  \draw[black!60, line width=0.2pt] (0,12)--(19,12);
  \draw[black!60, line width=0.2pt] (0,13)--(19,13);
  \draw[black!60, line width=0.2pt] (0,14)--(19,14);
\node[anchor=north] at (9.50,-0.15) {\small round $t = {0, \dots, 18}$};
  \node[rotate=90,anchor=south] at (-0.35,7.50) {\small memory ($\kvmem = 15$)};
\end{tikzpicture} \label{subfig:schedule comparison:staggered}
}
~~~~
  \subfloat[Execution of {\SimPS} with $\paral = 3$]{
\begin{tikzpicture}[x=0.25cm,y=0.25cm]
\definecolor{cellbg}{RGB}{245,245,245}
  \definecolor{job1}{RGB}{230,115,115}
  \definecolor{job2}{RGB}{230,161,115}
  \definecolor{job3}{RGB}{230,207,115}
  \definecolor{job4}{RGB}{207,230,115}
  \definecolor{job5}{RGB}{161,230,115}
  \definecolor{job6}{RGB}{115,230,115}
  \definecolor{job7}{RGB}{115,230,161}
  \definecolor{job8}{RGB}{115,230,207}
  \definecolor{job9}{RGB}{115,207,230}
  \definecolor{job10}{RGB}{115,161,230}
  \definecolor{job11}{RGB}{115,115,230}
  \definecolor{job12}{RGB}{161,115,230}
  \definecolor{job13}{RGB}{207,115,230}
  \definecolor{job14}{RGB}{230,115,207}
  \definecolor{job15}{RGB}{230,115,161}
  \fill[cellbg] (0,0) rectangle ++(25,15);
\fill[job1] (0,0) rectangle ++(1,1);
  \fill[job2] (0,5) rectangle ++(1,1);
  \fill[job3] (0,10) rectangle ++(1,1);
  \fill[job1] (1,0) rectangle ++(1,1);
  \fill[job1] (1,1) rectangle ++(1,1);
  \fill[job2] (1,5) rectangle ++(1,1);
  \fill[job2] (1,6) rectangle ++(1,1);
  \fill[job3] (1,10) rectangle ++(1,1);
  \fill[job3] (1,11) rectangle ++(1,1);
  \fill[job1] (2,0) rectangle ++(1,1);
  \fill[job1] (2,1) rectangle ++(1,1);
  \fill[job1] (2,2) rectangle ++(1,1);
  \fill[job2] (2,5) rectangle ++(1,1);
  \fill[job2] (2,6) rectangle ++(1,1);
  \fill[job2] (2,7) rectangle ++(1,1);
  \fill[job3] (2,10) rectangle ++(1,1);
  \fill[job3] (2,11) rectangle ++(1,1);
  \fill[job3] (2,12) rectangle ++(1,1);
  \fill[job1] (3,0) rectangle ++(1,1);
  \fill[job1] (3,1) rectangle ++(1,1);
  \fill[job1] (3,2) rectangle ++(1,1);
  \fill[job1] (3,3) rectangle ++(1,1);
  \fill[job2] (3,5) rectangle ++(1,1);
  \fill[job2] (3,6) rectangle ++(1,1);
  \fill[job2] (3,7) rectangle ++(1,1);
  \fill[job2] (3,8) rectangle ++(1,1);
  \fill[job3] (3,10) rectangle ++(1,1);
  \fill[job3] (3,11) rectangle ++(1,1);
  \fill[job3] (3,12) rectangle ++(1,1);
  \fill[job3] (3,13) rectangle ++(1,1);
  \fill[job1] (4,0) rectangle ++(1,1);
  \fill[job1] (4,1) rectangle ++(1,1);
  \fill[job1] (4,2) rectangle ++(1,1);
  \fill[job1] (4,3) rectangle ++(1,1);
  \fill[job1] (4,4) rectangle ++(1,1);
  \fill[job2] (4,5) rectangle ++(1,1);
  \fill[job2] (4,6) rectangle ++(1,1);
  \fill[job2] (4,7) rectangle ++(1,1);
  \fill[job2] (4,8) rectangle ++(1,1);
  \fill[job2] (4,9) rectangle ++(1,1);
  \fill[job3] (4,10) rectangle ++(1,1);
  \fill[job3] (4,11) rectangle ++(1,1);
  \fill[job3] (4,12) rectangle ++(1,1);
  \fill[job3] (4,13) rectangle ++(1,1);
  \fill[job3] (4,14) rectangle ++(1,1);
  \fill[job4] (5,0) rectangle ++(1,1);
  \fill[job5] (5,5) rectangle ++(1,1);
  \fill[job6] (5,10) rectangle ++(1,1);
  \fill[job4] (6,0) rectangle ++(1,1);
  \fill[job4] (6,1) rectangle ++(1,1);
  \fill[job5] (6,5) rectangle ++(1,1);
  \fill[job5] (6,6) rectangle ++(1,1);
  \fill[job6] (6,10) rectangle ++(1,1);
  \fill[job6] (6,11) rectangle ++(1,1);
  \fill[job4] (7,0) rectangle ++(1,1);
  \fill[job4] (7,1) rectangle ++(1,1);
  \fill[job4] (7,2) rectangle ++(1,1);
  \fill[job5] (7,5) rectangle ++(1,1);
  \fill[job5] (7,6) rectangle ++(1,1);
  \fill[job5] (7,7) rectangle ++(1,1);
  \fill[job6] (7,10) rectangle ++(1,1);
  \fill[job6] (7,11) rectangle ++(1,1);
  \fill[job6] (7,12) rectangle ++(1,1);
  \fill[job4] (8,0) rectangle ++(1,1);
  \fill[job4] (8,1) rectangle ++(1,1);
  \fill[job4] (8,2) rectangle ++(1,1);
  \fill[job4] (8,3) rectangle ++(1,1);
  \fill[job5] (8,5) rectangle ++(1,1);
  \fill[job5] (8,6) rectangle ++(1,1);
  \fill[job5] (8,7) rectangle ++(1,1);
  \fill[job5] (8,8) rectangle ++(1,1);
  \fill[job6] (8,10) rectangle ++(1,1);
  \fill[job6] (8,11) rectangle ++(1,1);
  \fill[job6] (8,12) rectangle ++(1,1);
  \fill[job6] (8,13) rectangle ++(1,1);
  \fill[job4] (9,0) rectangle ++(1,1);
  \fill[job4] (9,1) rectangle ++(1,1);
  \fill[job4] (9,2) rectangle ++(1,1);
  \fill[job4] (9,3) rectangle ++(1,1);
  \fill[job4] (9,4) rectangle ++(1,1);
  \fill[job5] (9,5) rectangle ++(1,1);
  \fill[job5] (9,6) rectangle ++(1,1);
  \fill[job5] (9,7) rectangle ++(1,1);
  \fill[job5] (9,8) rectangle ++(1,1);
  \fill[job5] (9,9) rectangle ++(1,1);
  \fill[job6] (9,10) rectangle ++(1,1);
  \fill[job6] (9,11) rectangle ++(1,1);
  \fill[job6] (9,12) rectangle ++(1,1);
  \fill[job6] (9,13) rectangle ++(1,1);
  \fill[job6] (9,14) rectangle ++(1,1);
  \fill[job7] (10,0) rectangle ++(1,1);
  \fill[job8] (10,5) rectangle ++(1,1);
  \fill[job9] (10,10) rectangle ++(1,1);
  \fill[job7] (11,0) rectangle ++(1,1);
  \fill[job7] (11,1) rectangle ++(1,1);
  \fill[job8] (11,5) rectangle ++(1,1);
  \fill[job8] (11,6) rectangle ++(1,1);
  \fill[job9] (11,10) rectangle ++(1,1);
  \fill[job9] (11,11) rectangle ++(1,1);
  \fill[job7] (12,0) rectangle ++(1,1);
  \fill[job7] (12,1) rectangle ++(1,1);
  \fill[job7] (12,2) rectangle ++(1,1);
  \fill[job8] (12,5) rectangle ++(1,1);
  \fill[job8] (12,6) rectangle ++(1,1);
  \fill[job8] (12,7) rectangle ++(1,1);
  \fill[job9] (12,10) rectangle ++(1,1);
  \fill[job9] (12,11) rectangle ++(1,1);
  \fill[job9] (12,12) rectangle ++(1,1);
  \fill[job7] (13,0) rectangle ++(1,1);
  \fill[job7] (13,1) rectangle ++(1,1);
  \fill[job7] (13,2) rectangle ++(1,1);
  \fill[job7] (13,3) rectangle ++(1,1);
  \fill[job8] (13,5) rectangle ++(1,1);
  \fill[job8] (13,6) rectangle ++(1,1);
  \fill[job8] (13,7) rectangle ++(1,1);
  \fill[job8] (13,8) rectangle ++(1,1);
  \fill[job9] (13,10) rectangle ++(1,1);
  \fill[job9] (13,11) rectangle ++(1,1);
  \fill[job9] (13,12) rectangle ++(1,1);
  \fill[job9] (13,13) rectangle ++(1,1);
  \fill[job7] (14,0) rectangle ++(1,1);
  \fill[job7] (14,1) rectangle ++(1,1);
  \fill[job7] (14,2) rectangle ++(1,1);
  \fill[job7] (14,3) rectangle ++(1,1);
  \fill[job7] (14,4) rectangle ++(1,1);
  \fill[job8] (14,5) rectangle ++(1,1);
  \fill[job8] (14,6) rectangle ++(1,1);
  \fill[job8] (14,7) rectangle ++(1,1);
  \fill[job8] (14,8) rectangle ++(1,1);
  \fill[job8] (14,9) rectangle ++(1,1);
  \fill[job9] (14,10) rectangle ++(1,1);
  \fill[job9] (14,11) rectangle ++(1,1);
  \fill[job9] (14,12) rectangle ++(1,1);
  \fill[job9] (14,13) rectangle ++(1,1);
  \fill[job9] (14,14) rectangle ++(1,1);
  \fill[job10] (15,0) rectangle ++(1,1);
  \fill[job11] (15,5) rectangle ++(1,1);
  \fill[job12] (15,10) rectangle ++(1,1);
  \fill[job10] (16,0) rectangle ++(1,1);
  \fill[job10] (16,1) rectangle ++(1,1);
  \fill[job11] (16,5) rectangle ++(1,1);
  \fill[job11] (16,6) rectangle ++(1,1);
  \fill[job12] (16,10) rectangle ++(1,1);
  \fill[job12] (16,11) rectangle ++(1,1);
  \fill[job10] (17,0) rectangle ++(1,1);
  \fill[job10] (17,1) rectangle ++(1,1);
  \fill[job10] (17,2) rectangle ++(1,1);
  \fill[job11] (17,5) rectangle ++(1,1);
  \fill[job11] (17,6) rectangle ++(1,1);
  \fill[job11] (17,7) rectangle ++(1,1);
  \fill[job12] (17,10) rectangle ++(1,1);
  \fill[job12] (17,11) rectangle ++(1,1);
  \fill[job12] (17,12) rectangle ++(1,1);
  \fill[job10] (18,0) rectangle ++(1,1);
  \fill[job10] (18,1) rectangle ++(1,1);
  \fill[job10] (18,2) rectangle ++(1,1);
  \fill[job10] (18,3) rectangle ++(1,1);
  \fill[job11] (18,5) rectangle ++(1,1);
  \fill[job11] (18,6) rectangle ++(1,1);
  \fill[job11] (18,7) rectangle ++(1,1);
  \fill[job11] (18,8) rectangle ++(1,1);
  \fill[job12] (18,10) rectangle ++(1,1);
  \fill[job12] (18,11) rectangle ++(1,1);
  \fill[job12] (18,12) rectangle ++(1,1);
  \fill[job12] (18,13) rectangle ++(1,1);
  \fill[job10] (19,0) rectangle ++(1,1);
  \fill[job10] (19,1) rectangle ++(1,1);
  \fill[job10] (19,2) rectangle ++(1,1);
  \fill[job10] (19,3) rectangle ++(1,1);
  \fill[job10] (19,4) rectangle ++(1,1);
  \fill[job11] (19,5) rectangle ++(1,1);
  \fill[job11] (19,6) rectangle ++(1,1);
  \fill[job11] (19,7) rectangle ++(1,1);
  \fill[job11] (19,8) rectangle ++(1,1);
  \fill[job11] (19,9) rectangle ++(1,1);
  \fill[job12] (19,10) rectangle ++(1,1);
  \fill[job12] (19,11) rectangle ++(1,1);
  \fill[job12] (19,12) rectangle ++(1,1);
  \fill[job12] (19,13) rectangle ++(1,1);
  \fill[job12] (19,14) rectangle ++(1,1);
  \fill[job13] (20,0) rectangle ++(1,1);
  \fill[job14] (20,5) rectangle ++(1,1);
  \fill[job15] (20,10) rectangle ++(1,1);
  \fill[job13] (21,0) rectangle ++(1,1);
  \fill[job13] (21,1) rectangle ++(1,1);
  \fill[job14] (21,5) rectangle ++(1,1);
  \fill[job14] (21,6) rectangle ++(1,1);
  \fill[job15] (21,10) rectangle ++(1,1);
  \fill[job15] (21,11) rectangle ++(1,1);
  \fill[job13] (22,0) rectangle ++(1,1);
  \fill[job13] (22,1) rectangle ++(1,1);
  \fill[job13] (22,2) rectangle ++(1,1);
  \fill[job14] (22,5) rectangle ++(1,1);
  \fill[job14] (22,6) rectangle ++(1,1);
  \fill[job14] (22,7) rectangle ++(1,1);
  \fill[job15] (22,10) rectangle ++(1,1);
  \fill[job15] (22,11) rectangle ++(1,1);
  \fill[job15] (22,12) rectangle ++(1,1);
  \fill[job13] (23,0) rectangle ++(1,1);
  \fill[job13] (23,1) rectangle ++(1,1);
  \fill[job13] (23,2) rectangle ++(1,1);
  \fill[job13] (23,3) rectangle ++(1,1);
  \fill[job14] (23,5) rectangle ++(1,1);
  \fill[job14] (23,6) rectangle ++(1,1);
  \fill[job14] (23,7) rectangle ++(1,1);
  \fill[job14] (23,8) rectangle ++(1,1);
  \fill[job15] (23,10) rectangle ++(1,1);
  \fill[job15] (23,11) rectangle ++(1,1);
  \fill[job15] (23,12) rectangle ++(1,1);
  \fill[job15] (23,13) rectangle ++(1,1);
  \fill[job13] (24,0) rectangle ++(1,1);
  \fill[job13] (24,1) rectangle ++(1,1);
  \fill[job13] (24,2) rectangle ++(1,1);
  \fill[job13] (24,3) rectangle ++(1,1);
  \fill[job13] (24,4) rectangle ++(1,1);
  \fill[job14] (24,5) rectangle ++(1,1);
  \fill[job14] (24,6) rectangle ++(1,1);
  \fill[job14] (24,7) rectangle ++(1,1);
  \fill[job14] (24,8) rectangle ++(1,1);
  \fill[job14] (24,9) rectangle ++(1,1);
  \fill[job15] (24,10) rectangle ++(1,1);
  \fill[job15] (24,11) rectangle ++(1,1);
  \fill[job15] (24,12) rectangle ++(1,1);
  \fill[job15] (24,13) rectangle ++(1,1);
  \fill[job15] (24,14) rectangle ++(1,1);
\draw[black, line width=0.2pt] (0,0) rectangle (25,15);
  \draw[black!60, line width=0.2pt] (1,0)--(1,15);
  \draw[black!60, line width=0.2pt] (2,0)--(2,15);
  \draw[black!60, line width=0.2pt] (3,0)--(3,15);
  \draw[black!60, line width=0.2pt] (4,0)--(4,15);
  \draw[black!60, line width=0.2pt] (5,0)--(5,15);
  \draw[black!60, line width=0.2pt] (6,0)--(6,15);
  \draw[black!60, line width=0.2pt] (7,0)--(7,15);
  \draw[black!60, line width=0.2pt] (8,0)--(8,15);
  \draw[black!60, line width=0.2pt] (9,0)--(9,15);
  \draw[black!60, line width=0.2pt] (10,0)--(10,15);
  \draw[black!60, line width=0.2pt] (11,0)--(11,15);
  \draw[black!60, line width=0.2pt] (12,0)--(12,15);
  \draw[black!60, line width=0.2pt] (13,0)--(13,15);
  \draw[black!60, line width=0.2pt] (14,0)--(14,15);
  \draw[black!60, line width=0.2pt] (15,0)--(15,15);
  \draw[black!60, line width=0.2pt] (16,0)--(16,15);
  \draw[black!60, line width=0.2pt] (17,0)--(17,15);
  \draw[black!60, line width=0.2pt] (18,0)--(18,15);
  \draw[black!60, line width=0.2pt] (19,0)--(19,15);
  \draw[black!60, line width=0.2pt] (20,0)--(20,15);
  \draw[black!60, line width=0.2pt] (21,0)--(21,15);
  \draw[black!60, line width=0.2pt] (22,0)--(22,15);
  \draw[black!60, line width=0.2pt] (23,0)--(23,15);
  \draw[black!60, line width=0.2pt] (24,0)--(24,15);
  \draw[black!60, line width=0.2pt] (0,1)--(25,1);
  \draw[black!60, line width=0.2pt] (0,2)--(25,2);
  \draw[black!60, line width=0.2pt] (0,3)--(25,3);
  \draw[black!60, line width=0.2pt] (0,4)--(25,4);
  \draw[black!60, line width=0.2pt] (0,5)--(25,5);
  \draw[black!60, line width=0.2pt] (0,6)--(25,6);
  \draw[black!60, line width=0.2pt] (0,7)--(25,7);
  \draw[black!60, line width=0.2pt] (0,8)--(25,8);
  \draw[black!60, line width=0.2pt] (0,9)--(25,9);
  \draw[black!60, line width=0.2pt] (0,10)--(25,10);
  \draw[black!60, line width=0.2pt] (0,11)--(25,11);
  \draw[black!60, line width=0.2pt] (0,12)--(25,12);
  \draw[black!60, line width=0.2pt] (0,13)--(25,13);
  \draw[black!60, line width=0.2pt] (0,14)--(25,14);
\node[anchor=north] at (12.50,-0.15) {\small round $t = {0, \dots, 24}$};
  \node[rotate=90,anchor=south] at (-0.35,7.50) {\small memory ($\kvmem = 15$)};
\end{tikzpicture} \label{subfig:schedule comparison:simultaneous}
}
  \caption{Illustration of \Cref{example:schedule comparison}. Each color represents the memory usage of a single job across rounds.
  }
\label{fig:schedule comparison}
\end{figure}
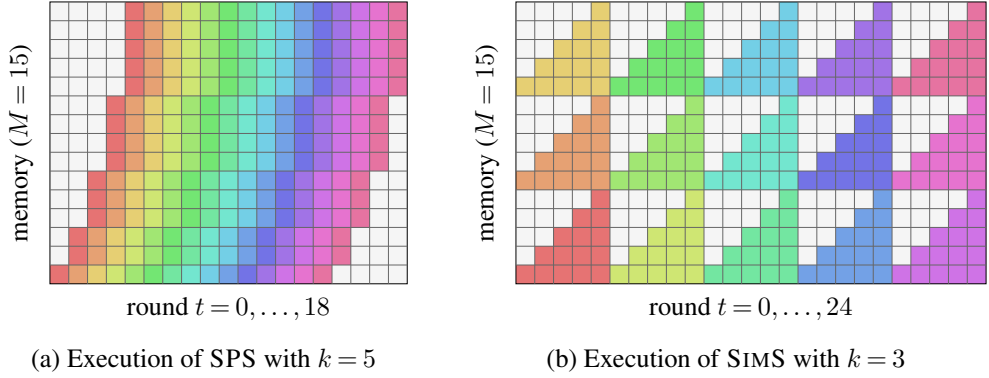

\xhdr{Theoretical Guarantees of {\SPS}.}
Through \Cref{example:schedule comparison} and \Cref{fig:schedule comparison}, we observe that {\SPS} achieves strong performance in this simple instance.  
Compared to {\SimPipelineSchedule}, {\SPS} exhibits significantly smoother memory usage over time, which in turn enables a higher degree of parallelism under the same memory budget. We now formalize these observations with the following theoretical guarantees for {\SPS}.

\begin{theorem}[Approximation of {\SPS} for Identical Jobs]
\label{thm:staggered:identical job}
Let $\Jobs$ be a sequence of jobs with identical prompt length $\initialLen$ and identical response length $\responseLen$. 
Then the schedule produced by \textsc{StaggeredPipelineSchedule} with a proper choice of parameters 
has an approximation ratio at most $2$. Moreover, when prompt length $\initialLen$ and response length $\responseLen$ are fixed and both KV-cache memory budget $\kvmem \to \infty$ and the number of jobs $|\Jobs| \to \infty$ with $\kvmem = o(|\Jobs|)$, the approximation ratio converges to $1 + o(1)$.
\end{theorem}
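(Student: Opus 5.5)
Fix the slice length $\sliceLen=\responseLen$. We will choose the largest degree of parallelism $\paral$ for which {\SPS} satisfies \eqref{eq:memory-feasibility}, bound $\FTime{\SPS}$ from above, and compare it with a memory-time area lower bound on $\OPT$.

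\emph{Step 1 (feasibility).} In round $t$ the active jobs are those $\jobIdx$ with $\startTimeOf{\jobIdx}\le t<\startTimeOf{\jobIdx}+\responseLen$. Since start times are $\lfloor \jobIdx\responseLen/\paral\rfloor$, at most about $\paral$ jobs are active at once, and their progress values $\curProgress_{\jobIdx,t}$ are spread roughly evenly over $\{0,\dots,\responseLen-1\}$. If $\responseLen$ is a multiple of $\paral$ the spread is exact; in general we bound it by a sum with at most one extra job per progress level. The memory in round $t$ is then at most about
\[
\paral\initialLen+\sum_{m=1}^{\paral}\Bigl\lceil \tfrac{m\responseLen}{\paral}\Bigr\rceil \;\le\; \paral\bigl(\initialLen+\tfrac{\responseLen+1}{2}\bigr)+O(\paral).
\]
We therefore choose $\paral$ as the largest integer for which this bound is at most $\kvmem$, giving $\paral\approx \kvmem/(\initialLen+(\responseLen+1)/2)$. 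This is about twice the simultaneous-batching parallelism $\lfloor \kvmem/(\initialLen+\responseLen)\rfloor$ when $\initialLen$ is small. \textbf{The main obstacle is handling the floors and the rounding at the peak exactly.} I would first try to show directly that the progress values of the active jobs are distinct and at most $\responseLen-1$ when $\paral\le\responseLen$, and then take the worst case, in which the top $\paral$ values each appear once. If $\paral>\responseLen$, jobs start in groups, and the same argument applies per group with the progress multiplicities evened out.

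\emph{Step 2 (cost of {\SPS}).} Job $\jobIdx$ completes at $\lfloor \jobIdx\responseLen/\paral\rfloor+\responseLen$, so with $n=|\Jobs|$,
\[
\FTime{\SPS}\le n\responseLen+\frac{\responseLen}{\paral}\cdot\frac{n(n-1)}{2}.
\]

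\emph{Step 3 (lower bound on $\OPT$).} Any job occupies memory-time area $\Area=\responseLen(\initialLen+\frac{\responseLen+1}{2})$ over its run. The area available up to time $T$ is $\kvmem T$, so the $m$-th completed job in $\OPT$ finishes no earlier than $\max\{\responseLen,\, m\Area/\kvmem\}$. Summing over $m$,
\[
\OPT\ge \max\Bigl\{n\responseLen,\;\frac{\Area}{\kvmem}\cdot\frac{n(n+1)}{2}\Bigr\}.
\]
Because $\Area/\kvmem\approx \responseLen/\paral$ by Step 1, each of the two terms of Step 2 is at most the corresponding lower bound, up to a $(1+o(1))$ rounding factor. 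Using $a+b\le 2\max\{a,b\}$ then gives ratio $2$. For the general (non-asymptotic) claim, the slack from integrality of $\paral$ must be absorbed. I would do this by proving $\responseLen/\paral\le \Area/\kvmem$ up to a loss small enough to be covered by the unused factor in $\max$ versus the sum, treating the case $\paral=1$ (where {\SPS} is sequential and $\OPT\ge n\responseLen$-type bounds are tight) separately.

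\emph{Step 4 (asymptotics).} With $\initialLen$ and $\responseLen$ fixed and $\kvmem\to\infty$, we get $\paral\to\infty$ and $\responseLen/\paral=(1+o(1))\Area/\kvmem$. When $n=\omega(\kvmem)$, the quadratic term $\frac{\responseLen}{\paral}\frac{n^2}{2}$ dominates $n\responseLen$, since $n/\paral\to\infty$. Hence $\FTime{\SPS}=(1+o(1))\frac{\Area}{\kvmem}\frac{n^2}{2}\le(1+o(1))\OPT$.
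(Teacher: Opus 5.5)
\paragraph{The gap: the constant $2$.} Maximality of $k$ only gives $M<\mathsf{Peak}(k+1,\tau,s)$, hence $\tau/k<(1+2/k)\,A/M$. This is essentially tight. Take $s=0$, $\tau$ large with $\gcd(k+1,\tau)=1$, and $M=\mathsf{Peak}(k+1,\tau,0)-1$, so that $A/M\approx\tau/(k+2)$. At $n=2k+3$ both of your lower bounds are about $n\tau$, while the cost of \SPS{} is about $n\tau\bigl(1+\frac{n-1}{2k}\bigr)=(2+\frac1k)\,n\tau$. So $\OPT\ge\max\{n\tau,\frac{A}{M}\frac{n(n+1)}{2}\}$ certifies nothing better than about $2+1/k$, for every $k$. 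At the crossover there is no unused slack in $a+b\le 2\max\{a,b\}$, so treating $k=1$ separately cannot repair this.

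Your $k=1$ idea also fails on its own terms. With $s=0$, $\tau$ odd and $M=(3\tau-1)/2$ we have $k=1$, and \SPS{} is sequential with cost $\tau n(n+1)/2$. Here $\OPT$ is nowhere near $n\tau$: starting a job every $(\tau+1)/2$ rounds is feasible (its peak is $2\tau-\frac{\tau+1}{2}=M$), so $\OPT\approx\tau n^2/4$. The area bound only gives about $\tau n^2/6$, so the certified ratio tends to $3$. The area bound is loose because triangular footprints cannot be packed without waste; the paper notes that the area framework alone yields only $2+1/k$.

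The missing idea is a structural lower bound on $\OPT$ specific to identical jobs. Index any feasible schedule by start times $S_0\le S_1\le\cdots$. Suppose $S_{i+k}-S_i\le\lceil\tau/2\rceil-1$. Then in round $S_i+\tau-1$, job $i$ uses $s+\tau$ and each of jobs $i+1,\dots,i+k$ uses at least $s+\lfloor\tau/2\rfloor+1$. The total is at least $\mathsf{Peak}(k+1,\tau,s)>M$, a contradiction. Hence $S_{i+k}-S_i\ge\lceil\tau/2\rceil$, which gives $C_i^{\OPT}\ge\lfloor i/k\rfloor\lceil\tau/2\rceil+\tau$. Meanwhile $C_i^{\SPS}=\lfloor i\tau/k\rfloor+\tau<(\lfloor i/k\rfloor+2)\tau\le 2C_i^{\OPT}$, so the factor $2$ holds job by job.

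\paragraph{The asymptotic part and a fix to Step 1.} Your Steps 2--4 follow the paper's route: the area lower bound against $n\tau+\frac{\tau}{k}\frac{n(n-1)}{2}$, with $\tau/k=(1+o(1))A/M$. Step 1 needs correcting, though. The quantity you write is exactly $ks+\sum_{m=1}^{k}\lceil m\tau/k\rceil=ks+\frac{k\tau+k+\tau-\gcd(k,\tau)}{2}$, by a lattice-point count. It therefore exceeds $k(s+\frac{\tau+1}{2})$ by $\frac{\tau-\gcd(k,\tau)}{2}$, which is $O(\tau)$, not $O(k)$. For example, when $k=1$ the peak is $s+\tau$, so your displayed inequality fails whenever $\tau\gg k$. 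This exact form is what Step 4 needs, since the excess is $O(1)$ for fixed $s,\tau$. Bounding each ceiling by $+1$ instead leaves an excess of $\frac{k+\tau}{2}$. That only yields $\tau/k\le\bigl(1+\frac{1}{2s+\tau+1}+o(1)\bigr)A/M$, not $1+o(1)$.
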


We remark that the above theorem guarantees both the asymptotic optimality of {\SPS} and a worst-case approximation ratio of $2$. In contrast, as shown in \Cref{apx:simps-lower-bounds}, {\SimPipelineSchedule} incurs an approximation ratio of at least $2$ even in this asymptotic regime.

In the following, we prove \Cref{thm:staggered:identical job}. We begin by establishing two key lemmas concerning the memory usage and the maximum feasible degree of parallelism in {\SPS}.
\begin{restatable}[Peak Memory Usage of {\SPS}]{lemma}{lemPeakMem}
\label{lem:staggered:peak-memory}
For any sequence of jobs with identical prompt length $\initialLen \in \naturals$, any degree of parallelism $\paral \in \naturals$ and slice length $\sliceLen \in \naturals$, running {\SPS}$(\Jobs, \paral, \sliceLen)$ indefinitely (i.e., $|\Jobs| \rightarrow \infty$) requires memory
\begin{align*}
\MemPeak(\paral, \sliceLen, \initialLen) \;\triangleq\;
\initialLen \cdot \paral \;+\; \frac{1}{2} \left( 
\sliceLen \cdot \paral + \sliceLen + \paral - \gcd(\sliceLen, \paral) 
\right).
\end{align*}
Moreover, $\MemPeak(\paral, \sliceLen, \initialLen)$ is increasing in $\paral$ when fixing $\sliceLen$ and $\initialLen$.
\end{restatable}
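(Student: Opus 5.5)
The plan is to compute the memory usage of {\SPS} exactly at every round $t$ in the steady state, and then maximize over $t$. Two facts drive the computation. Job $\jobIdx$ is active in round $t$ exactly when $\startTimeOf{\jobIdx}=\lfloor \jobIdx\sliceLen/\paral\rfloor \in [t-\sliceLen+1,\,t]$. In that round it occupies $\initialLen + (t-\startTimeOf{\jobIdx}) + 1$ units of memory.

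\emph{Step 1 (counting active jobs).} We have $\lfloor \jobIdx\sliceLen/\paral\rfloor \le x$ iff $\jobIdx < (x+1)\paral/\sliceLen$. Hence the active indices are exactly the integers in the half-open interval $[(t-\sliceLen+1)\paral/\sliceLen,\,(t+1)\paral/\sliceLen)$. This interval has length $\paral$, so it contains exactly $\paral$ consecutive integers $\jobIdx_0,\dots,\jobIdx_0+\paral-1$, where $\jobIdx_0=\lceil (t-\sliceLen+1)\paral/\sliceLen\rceil$. (When $|\Jobs|\to\infty$ and $t\ge\sliceLen$ none of these are missing; a finite job set only removes jobs, which can only lower usage.) So the prompt contribution is exactly $\initialLen\paral$.

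\emph{Step 2 (the decode contribution).} The remaining memory is
\[
\sum_{j=0}^{\paral-1}\Bigl(t+1-\bigl\lfloor (\jobIdx_0+j)\sliceLen/\paral\bigr\rfloor\Bigr).
\]
Write $\lfloor y\sliceLen/\paral\rfloor = (y\sliceLen - r_y)/\paral$ with $r_y = y\sliceLen \bmod \paral$. Over $\paral$ consecutive integers $y$, the residues $r_y$ run through every multiple of $g=\gcd(\sliceLen,\paral)$ in $[0,\paral)$, each exactly $g$ times. Their sum is therefore $k(k-g)/2$ with $k=\paral$. Substituting gives memory at round $t$ equal to
\[
\initialLen\paral + \paral(t+1) - \sliceLen\jobIdx_0 - \tfrac{\sliceLen(\paral-1)}{2} + \tfrac{\paral-g}{2}.
\]

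\emph{Step 3 (maximizing over $t$).} We have $\sliceLen\jobIdx_0 \ge (t-\sliceLen+1)\paral$, with equality whenever $\sliceLen$ divides $(t-\sliceLen+1)\paral$, for instance whenever $t-\sliceLen+1$ is a multiple of $\sliceLen$. This happens for infinitely many $t$, so the supremum is attained. Plugging in equality yields
\[
\initialLen\paral + \paral\sliceLen - \tfrac{\sliceLen\paral-\sliceLen}{2} + \tfrac{\paral-g}{2} = \MemPeak(\paral,\sliceLen,\initialLen),
\]
as claimed. The delicate step is Step 2's residue-counting identity together with confirming that the bound in Step 3 is tight. To handle both, I would use the standard fact that multiplication by $\sliceLen/g$ permutes residues modulo $\paral/g$.

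\emph{Step 4 (monotonicity).} Let $g'=\gcd(\sliceLen,\paral+1)$. Then
\[
\MemPeak(\paral+1,\cdot)-\MemPeak(\paral,\cdot) = \initialLen + \tfrac{\sliceLen+1-(g'-g)}{2}.
\]
Since $g'\le \sliceLen$ and $g\ge 1$, we get $g'-g\le \sliceLen-1$, so this difference is at least $\initialLen+1>0$. Hence $\MemPeak$ is strictly increasing in $\paral$.
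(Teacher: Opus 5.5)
Your proof is correct and follows the same structure as the paper's. Both arguments show that exactly $\paral$ jobs are active in steady state and compute the memory exactly at each round. Both find the maximum at rounds $t \equiv -1 \pmod{\sliceLen}$; your inequality $\sliceLen\jobIdx_0 \ge (t-\sliceLen+1)\paral$ is the paper's $c(\rho)\sliceLen \ge \paral(\rho+1)$ rewritten. The monotonicity computation is identical.

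The only real difference is how the floor sum is evaluated. You use residue equidistribution over an arbitrary window of $\paral$ consecutive indices. The paper instead reduces to the window $0,\dots,\paral-1$ and counts lattice points below the diagonal of the $\paral\times\sliceLen$ rectangle. Both give $\sum_{x=0}^{\paral-1}\lfloor x\sliceLen/\paral\rfloor=\frac{(\paral-1)(\sliceLen-1)+\gcd(\paral,\sliceLen)-1}{2}$.
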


\begin{restatable}[Feasible Parallelism of {\SPS}]{lemma}{lemFeasibleParal}
\label{lem:staggered:max-parallelism}
Fix a job sequence $\Jobs$ with identical prompt length $\initialLen \in \naturals$ and a slice length $\sliceLen \in \naturals$. 
Define the \emph{maximum memory-feasible degree of parallelism} of {\SPipelineSchedule} ({\SPS}) as
\begin{align}
\label{eq:max-parallelism}
\paralSliceLen \;\triangleq\; \max\left\{ \paral \in \naturals \condition \MemPeak(\paral, \sliceLen, \initialLen) \leq \kvmem \right\}.
\end{align}
where $\MemPeak(\paral, \sliceLen, \initialLen)$ is the peak memory usage of the schedule produced by {\SPS} defined in \Cref{lem:staggered:peak-memory}.
Then, the maximum memory-feasible degree of parallelism $\paralSliceLen$ can be lower bounded as 
\begin{align*}
\paralSliceLen \;\geq\; 
\left\lfloor 
\frac{2\kvmem - \sliceLen + 1}{\,2\initialLen + \sliceLen + 1\,} 
\right\rfloor.
\end{align*}
\end{restatable}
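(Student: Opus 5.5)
Since $\MemPeak(\paral,\sliceLen,\initialLen)$ is increasing in $\paral$ by \Cref{lem:staggered:peak-memory}, the set in \eqref{eq:max-parallelism} is a down-set of $\naturals$. It therefore suffices to exhibit one integer $\paral_0$ with $\MemPeak(\paral_0,\sliceLen,\initialLen)\le\kvmem$. Then $\paralSliceLen\ge\paral_0$. I would take
\[
\paral_0 \triangleq \left\lfloor \frac{2\kvmem-\sliceLen+1}{2\initialLen+\sliceLen+1}\right\rfloor .
\]
If $\paral_0\le 0$, the claimed lower bound is trivial, or is handled by the convention that $\paralSliceLen\ge 0$ (or $\ge 1$ by per-job feasibility $\initialLen+\sliceLen\le\kvmem$ when $\sliceLen\le\responseLen$ type slices are used). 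So I would assume $\paral_0\ge 1$.

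The key step is to bound the exact peak from above by dropping the gcd term. Since $\gcd(\sliceLen,\paral)\ge 1$,
\[
\MemPeak(\paral,\sliceLen,\initialLen)\le \initialLen\paral+\tfrac12\bigl(\sliceLen\paral+\sliceLen+\paral-1\bigr)
=\tfrac12\bigl(\paral(2\initialLen+\sliceLen+1)+\sliceLen-1\bigr).
\]
The right-hand side is at most $\kvmem$ if and only if $\paral(2\initialLen+\sliceLen+1)\le 2\kvmem-\sliceLen+1$, that is, $\paral\le (2\kvmem-\sliceLen+1)/(2\initialLen+\sliceLen+1)$. The integer $\paral_0$ satisfies this by definition of the floor, so $\MemPeak(\paral_0,\sliceLen,\initialLen)\le\kvmem$ and the bound follows.

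There is no real obstacle; the only care needed is the degenerate case where the floor is nonpositive, together with making sure that monotonicity is not even needed: the argument only uses that $\paral_0$ belongs to the feasible set, and $\paralSliceLen$ is its maximum.
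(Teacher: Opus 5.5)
Your proposal is correct and follows essentially the same route as the paper: drop the $\gcd(\sliceLen,\paral)$ term using $\gcd(\sliceLen,\paral)\ge 1$, reduce $\MemPeak(\paral,\sliceLen,\initialLen)\le\kvmem$ to $\paral(2\initialLen+\sliceLen+1)\le 2\kvmem-\sliceLen+1$, and conclude that the floor value lies in the feasible set. Your extra care about the case where the floor is nonpositive is a harmless addition the paper omits, and you are right that monotonicity is not actually needed.
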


\begin{proof}[Proof of \Cref{lem:staggered:peak-memory}]
Let the jobs be indexed $\jobIdx = 0, 1, \dots$ according to the start time order assigned by {\SPS}. Recall that the start time of job $\jobIdx$ is $\startTimeOf{\jobIdx} = \lfloor \jobIdx \cdot \sliceLen / \paral \rfloor$, and each job runs for $\sliceLen$ rounds.

Since $\startTimeOf{\jobIdx+\paral} = \lfloor (\jobIdx+\paral)\sliceLen/\paral \rfloor = \startTimeOf{\jobIdx} + \sliceLen$, job $\jobIdx$ completes immediately before job $\jobIdx+\paral$ starts. Therefore, the set of active jobs $\ActiveBatch_t$ at any time $t = q\sliceLen + \rho$ (with any integers $q \geq 1$ and $0\leq \rho \leq \sliceLen-1$) always consists of exactly $\paral$ jobs. 
Each such job can be uniquely associated with a relative index $x \in \{0, \dots, \paral-1\}$ such that its start time is either $q\sliceLen + \delta_x$ or $(q-1)\sliceLen + \delta_x$, where $\delta_x \triangleq \lfloor x\sliceLen/\paral \rfloor$ represents the start time offset within a period.
Therefore, we can count the total memory usage, $S(t)$, at time $t$:
\begin{align*}
S(t) &= \sum\nolimits_{\jobIdx \in \ActiveBatch_t} (\initialLen + t - \startTimeOf{\jobIdx} + 1) 
= \paral \cdot \initialLen + \sum\nolimits_{x : \delta_x \le \rho} (\rho - \delta_x + 1) + \sum\nolimits_{x : \delta_x > \rho} (\sliceLen + \rho - \delta_x + 1) \\
&= \paral \cdot \initialLen + \paral(\rho + 1) - \sum\nolimits_{x=0}^{\paral-1} \delta_x + \sum\nolimits_{x : \delta_x > \rho} \sliceLen 
= \paral \cdot \initialLen + \paral(\rho + 1) - \sum\nolimits_{x=0}^{\paral-1} \delta_x + (\paral - c(\rho))\sliceLen, \\
&= \paral \cdot \initialLen + \paral\sliceLen - \sum\nolimits_{x=0}^{\paral-1} \delta_x + \paral(\rho+1) - c(\rho)\sliceLen
\end{align*}
where auxiliary notation $c(\rho) \triangleq |\{x \in \{0, \dots, \paral-1\} : \delta_x \le \rho\}|$ is the number of jobs that start at or before offset $\rho$ within a period.
Note that $c(\rho) = \lceil \paral(\rho+1)/\sliceLen \rceil$, which implies $c(\rho)\cdot \sliceLen \ge \paral(\rho+1)$. Thus, the term $\paral(\rho + 1) - c(\rho)\sliceLen$ is non-positive and the peak memory usage is bounded by:
\begin{align*}
\MemPeak(\paral, \sliceLen, \initialLen) = \max_{0\leq \rho\leq \sliceLen - 1}~ S(t) \le \paral \cdot \initialLen + \paral\sliceLen - \sum\nolimits_{x=0}^{\paral-1} \lfloor x\sliceLen/\paral \rfloor.
\end{align*}
The sum $\sum\nolimits_{x=0}^{\paral-1} \lfloor x\sliceLen/\paral \rfloor$ counts the number of positive integer points $(i, j) \in \naturals^2$ such that $0 < i < \paral$ and $0 < j \le i \cdot \sliceLen/\paral$. This corresponds to the number of integer points strictly inside the rectangle defined by $(0,0)$ and $(\paral, \sliceLen)$ that lie on or below the diagonal line connecting these two points. The number of integer points on the diagonal itself, excluding the endpoints, is $\gcd(\paral, \sliceLen)-1$ (where $\gcd(\cdot,\cdot)$ is the greatest common divisor). By symmetry, the number of points below the diagonal is half the number of off-diagonal points. Therefore, the sum is:
\begin{align*}
\sum\nolimits_{x=0}^{\paral-1} \lfloor x\sliceLen/\paral \rfloor = \frac{\overbrace{(\paral-1)(\sliceLen-1)}^{\text{points in rectangle}} + \overbrace{(\gcd(\paral, \sliceLen)-1)}^{\text{points on diagonal}}}{2}.
\end{align*}
Substituting this into the bound for $\MemPeak(\paral, \sliceLen, \initialLen)$:
\begin{align*}
\MemPeak(\paral, \sliceLen, \initialLen) &\leq \paral \cdot \initialLen + \paral\sliceLen - \frac{(\paral-1)(\sliceLen-1) + \gcd(\paral, \sliceLen)-1}{2} 
= \initialLen \cdot \paral + \frac{\paral\sliceLen + \paral + \sliceLen - \gcd(\paral, \sliceLen)}{2}.
\end{align*}
and the equality is attained at every $t = q\sliceLen - 1$ for $q \geq 1$.

To prove that $\MemPeak(\paral, \sliceLen, \initialLen)$ is monotone in $\paral$ when fixing $\sliceLen$ and $\initialLen$, we have
$$
\MemPeak(\paral + 1, \sliceLen, \initialLen) - \MemPeak(\paral, \sliceLen, \initialLen) = s + \frac{\sliceLen + 1 - \gcd(\paral + 1, \sliceLen) + \gcd(\paral, \sliceLen)}{2} \geq s + \frac{\sliceLen + 1 - \sliceLen + 1}{2} \geq s + 1,
$$
which completes the proof of \Cref{lem:staggered:peak-memory} as desired.
\end{proof}

\begin{proof}[Proof of \Cref{lem:staggered:max-parallelism}]
By \Cref{lem:staggered:peak-memory},
the condition $\MemPeak(\paral, \sliceLen, \initialLen) \leq \kvmem$ is equivalent to
\begin{align*}
\paral (2\initialLen + \sliceLen + 1) \leq 2\kvmem - \sliceLen + \gcd(\sliceLen, \paral).
\end{align*}
By $\gcd(\sliceLen, \paral) \geq 1$, we can choose
\begin{align*}
\left\lfloor \frac{2\kvmem - \sliceLen + 1}{2\initialLen + \sliceLen + 1} \right\rfloor \leq \frac{2\kvmem - \sliceLen + \gcd(\sliceLen, \paral)}{2\initialLen + \sliceLen + 1}
\end{align*}
as a valid lower bound for the maximum memory-feasible degree of parallelism $\paralSliceLen$.
\end{proof}

We now proceed to prove \Cref{thm:staggered:identical job}.  
At the heart of our analysis is the notion of \emph{memory-time area} (\Cref{def:area}), which quantifies the total KV-cache resources consumed by a job over its execution: it accounts for both the fixed prompt memory ($\initialLen$) and the linearly growing cache from $\responseLen$ generated tokens. 
See the colored triangles (or more generally, trapezoids) in \Cref{fig:schedule comparison}, which illustrates this memory-time resource consumption.  
This geometric perspective yields a clean resource accounting principle: the total memory-time area consumed by any feasible schedule cannot exceed the area supplied by the memory budget $\kvmem$ over time.

\begin{definition}[Memory-Time Area]
\label{def:area}
For a job with prompt length $\initialLen$ and response length $\responseLen$, its \emph{memory-time area} is defined as
\begin{align*}
\Area(\responseLen,\initialLen) \;\triangleq\; \initialLen \cdot \responseLen + \frac{\responseLen(\responseLen + 1)}{2}.
\end{align*}
This quantity represents the total memory-time resources consumed by the job over its execution.
\end{definition}

Our goal is to compare the packing efficiency of the optimal scheduler with that of {\SPS}.
Leveraging the geometric perspective above, \Cref{lem:area-lower-bound} establishes a fundamental lower bound on the optimal flow time by observing that even the best schedule must allocate at least $\Area(\sliceLen,\initialLen)$ area per job (for jobs with response length $\geq \sliceLen$), and this area must be packed into a time horizon constrained by $\kvmem$.  
Conversely, \Cref{lem:aux:ratio} captures the \emph{packing efficiency} of {\SPS}: it shows that the average time per job in {\SPS} is nearly proportional to the area-per-unit-memory, up to a small constant factor depending on the degree of parallelism.  
Together, these lemmas bridge memory usage and flow time, allowing us to compare {\SPS} directly to the information-theoretic optimum via area-based reasoning.

\begin{restatable}[Area-Based Lower Bound on Optimal Flow Time]{lemma}{lemAreaOptLB}
\label{lem:area-lower-bound}
Consider any instance with $\NumberJobs$ jobs, each having prompt length $\initialLen$ and response length at least $\sliceLen$.  
Then the optimal total flow time $\OPT$ satisfies
\begin{align*}
\OPT \;\geq\; \frac{\NumberJobs(\NumberJobs + 1)}{2} \cdot \frac{\Area(\sliceLen,\initialLen)}{\kvmem}.
\end{align*}
\end{restatable}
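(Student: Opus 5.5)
Fix an optimal clairvoyant schedule and relabel the jobs by completion time, so that $\finishTime_1 \le \finishTime_2 \le \cdots \le \finishTime_\NumberJobs$. Since $\OPT = \sum_i \finishTime_i$, the claimed bound follows by summing the per-job inequality
\[
\finishTime_j \;\ge\; j\cdot \frac{\Area(\sliceLen,\initialLen)}{\kvmem} \qquad \text{for every } j\in[\NumberJobs].
\]
It suffices to show that by time $\finishTime_j$ the schedule has consumed at least $j\cdot\Area(\sliceLen,\initialLen)$ units of memory-time, while the budget supplies at most $\kvmem$ per round.

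\textbf{Area accounting.} Each of jobs $1,\dots,j$ completes by round $\finishTime_j$. Consider the final uninterrupted run of job $i$, meaning the run that starts from progress $0$ and reaches $\responseLen_i$. In the $u$-th round of this run ($u=0,\dots,\responseLen_i-1$), the job occupies $\initialLen+u+1$ units of memory by \eqref{eq:memory-feasibility}. Summing over the run gives a total of $\initialLen\responseLen_i + \responseLen_i(\responseLen_i+1)/2 = \Area(\responseLen_i,\initialLen)$.

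Since $\Area(\cdot,\initialLen)$ is increasing and $\responseLen_i \ge \sliceLen$, this is at least $\Area(\sliceLen,\initialLen)$. Killed attempts and other jobs only add nonnegative area. The completion convention needs some care here: job $i$ is complete at round $\finishTime_i$ once $\curProgress_{i,\finishTime_i}=\responseLen_i$. Its processing rounds therefore all lie in $\{0,\dots,\finishTime_i-1\}\subseteq\{0,\dots,\finishTime_j-1\}$.

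\textbf{Supply side.} Summing \eqref{eq:memory-feasibility} over rounds $t=0,\dots,\finishTime_j-1$ bounds the total occupied memory-time in that window by $\kvmem\cdot \finishTime_j$. Combining this with the demand bound gives $j\,\Area(\sliceLen,\initialLen)\le \kvmem\,\finishTime_j$. Summing over $j$ then yields $\OPT \ge \frac{\NumberJobs(\NumberJobs+1)}{2}\cdot\frac{\Area(\sliceLen,\initialLen)}{\kvmem}$.

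\textbf{Main obstacle.} The only delicate point is the off-by-one bookkeeping between the round indexing and the ``$+1$'' in the memory model. I need to check that the final successful run fits entirely inside the first $\finishTime_i$ rounds, and that the per-round footprint is exactly $\initialLen+\curProgress_{i,t}+1$, so that the run's area matches $\Area(\responseLen_i,\initialLen)$ rather than, say, $\initialLen\responseLen_i+\responseLen_i(\responseLen_i-1)/2$. The model's footnote allows pausing, meaning a job may be held in memory without being processed. In that case I would argue that a paused job still occupies memory, so it contributes at least as much area. It is then enough to take the sum over its processed rounds, where progress takes each value $0,\dots,\responseLen_i-1$ at least once along the final attempt.
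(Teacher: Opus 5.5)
Your proof is correct and matches the paper's argument: order jobs by completion time, lower-bound the memory-time consumed by the first $j$ finishers by $j\cdot\Area(\sliceLen,\initialLen)$, upper-bound the supply over the first $\finishTime_j$ rounds by $\kvmem\cdot\finishTime_j$, and sum over $j$. Your extra bookkeeping fills in details the paper leaves implicit: the final successful run lies in rounds $0,\dots,\finishTime_i-1$ with area exactly $\Area(\responseLen_i,\initialLen)$, and killed attempts and paused rounds only add area.
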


\begin{restatable}[Area-Based Packing Efficiency of {\SPS}]{lemma}{lemAreaSpsUB}
\label{lem:aux:ratio}
For any integers $\initialLen \geq 0$ and $\sliceLen \geq 1$, let $\paralSliceLen$ denote the maximum memory-feasible degree of parallelism defined in \Cref{lem:staggered:max-parallelism} (see Eqn.~\eqref{eq:max-parallelism}). Then
\begin{align*}
\frac{\sliceLen}{\paralSliceLen} \;<\; \left(1 + \frac{2}{\paralSliceLen}\right) \cdot \frac{\Area(\sliceLen,\initialLen)}{\kvmem}.
\end{align*}
\end{restatable}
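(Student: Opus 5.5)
The plan is to use the maximality of $\paral^* \triangleq \paralSliceLen$. By definition, $\paral^*+1$ is infeasible, so
\[
\MemPeak(\paral^*+1,\sliceLen,\initialLen) > \kvmem .
\]
Substituting the closed form from \Cref{lem:staggered:peak-memory} gives
\[
2\kvmem < (\paral^*+1)(2\initialLen+\sliceLen+1) + \sliceLen - \gcd(\sliceLen,\paral^*+1) \le (\paral^*+1)(2\initialLen+\sliceLen+1) + \sliceLen - 1 .
\]
This is an upper bound on $\kvmem$ in terms of $\paral^*$. The target inequality is equivalent to
\[
\sliceLen\,\kvmem < (\paral^*+2)\,\Area(\sliceLen,\initialLen) = (\paral^*+2)\,\sliceLen\,\frac{2\initialLen+\sliceLen+1}{2}.
\]
Dividing by $\sliceLen\ge 1$, it suffices to show
\[
2\kvmem < (\paral^*+2)(2\initialLen+\sliceLen+1).
\]

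Comparing with the upper bound on $2\kvmem$, the remaining step is to check
\[
(\paral^*+1)(2\initialLen+\sliceLen+1)+\sliceLen-1 \le (\paral^*+2)(2\initialLen+\sliceLen+1).
\]
This reduces to $\sliceLen - 1 \le 2\initialLen+\sliceLen+1$, which holds trivially since $\initialLen\ge 0$. Because the first inequality is strict, the chain yields the strict conclusion.

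There are two points to verify. The first is that $\paral^*$ is well defined and that $\paral^*+1$ is infeasible. Monotonicity of $\MemPeak$ in $\paral$ (\Cref{lem:staggered:peak-memory}), together with its linear growth, makes the feasible set a finite initial segment, so its maximum $\paral^*$ exists. I would also note that per-job feasibility $\initialLen+\sliceLen\le\kvmem$ ensures $\paral^*\ge 1$ (via \Cref{lem:staggered:max-parallelism}), so the division by $\paral^*$ is legitimate. The second is correct use of the gcd term. Here only the crude bound $\gcd\ge 1$ is needed, so no number-theoretic care is required. I expect no real obstacle beyond this bookkeeping; the only place requiring attention is keeping the inequality strict and confirming the direction of the gcd bound.
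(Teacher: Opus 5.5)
Your proposal is correct and is essentially the paper's proof. Both arguments use maximality to get $\MemPeak(\paralSliceLen+1,\sliceLen,\initialLen)>\kvmem$, bound the gcd term by $1$ in the closed form, and compare directly with $(\paralSliceLen+2)\bigl(\initialLen+\frac{\sliceLen+1}{2}\bigr)$, leaving slack $\initialLen+1$ (or $2\initialLen+2$ in your doubled form). Your remark that $\paralSliceLen\ge 1$ requires $\initialLen+\sliceLen\le\kvmem$ is a fair point the paper leaves implicit; strictly this is a condition on the slice length rather than the model's per-job feasibility on response lengths, but it holds for every slice length the algorithms use, since there $\sliceLen\le\kvmem-\initialLen$.
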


To gain more intuition behind these bounds, observe that the ratio $\kvmem / \Area(\sliceLen,\initialLen)$ can be interpreted as an upper bound on the \emph{throughput} (jobs per round) achievable by any feasible schedule---particularly the optimal clairvoyant scheduler. Meanwhile, $\paralSliceLen / \sliceLen$ represents the effective throughput of {\SPS}. In this sense, \Cref{lem:area-lower-bound} lower bound the optimal total flow time via its throughput limit, while \Cref{lem:aux:ratio} shows that {\SPS} achieves nearly optimal throughput, up to a small multiplicative factor in the parallelism dimension.

Below, we prove \Cref{lem:area-lower-bound,lem:aux:ratio}. Using these results, we establish the asymptotic approximation ratio of $1 + o(1)$ stated in \Cref{thm:staggered:identical job} for the large-memory regime. The proof of the (non-asymptotic) approximation ratio of $2$ for general identical-job instances---also stated in \Cref{thm:staggered:identical job}---is deferred to \Cref{apx:sps-paral-1}.

\begin{proof}[Proof of \Cref{lem:area-lower-bound}]
Let $\C_1 \leq \C_2 \leq \dots \leq \C_{\NumberJobs}$ denote the completion times of the jobs in an optimal schedule.  
For each index $i \in [\NumberJobs]$, consider the first $i$ jobs to complete (i.e., those with completion times $\C_1, \dots, \C_i$).  
Each of these $i$ jobs has response length at least $\sliceLen$, and thus consumes memory-time area at least $\Area(\sliceLen,\initialLen)$ (by \Cref{def:area}).  
Therefore, the total memory-time area consumed by these $i$ jobs is at least $i \cdot \Area(\sliceLen,\initialLen)$.
On the other hand, during the first $\C_i$ rounds, the total memory available per round is at most $\kvmem$, so the total memory-time area supplied by the system up to time $\C_i$ is at most $\kvmem \cdot \C_i$.
Since the consumed area cannot exceed the supplied area, we have
$i \cdot \Area(\sliceLen,\initialLen) \;\leq\; \kvmem \cdot \C_i$,
which implies
$\C_i \geq \frac{i \cdot \Area(\sliceLen,\initialLen)}{\kvmem}$.
Summing over all $i \in [\NumberJobs]$ yields
\begin{align*}
\OPT \;=\; \sum\nolimits_{i=1}^{\NumberJobs} \C_i 
\;\geq\; \sum\nolimits_{i=1}^{\NumberJobs} \frac{i \cdot \Area(\sliceLen,\initialLen)}{\kvmem}
\;=\; \frac{\NumberJobs(\NumberJobs + 1)}{2} \cdot \frac{\Area(\sliceLen,\initialLen)}{\kvmem},
\end{align*}
as claimed.
\end{proof}

\begin{proof}[Proof of \Cref{lem:aux:ratio}]
Define axuliary notation $\paral\triangleq \paralSliceLen$.
It suffices to show that
\begin{align*}
{(\paral + 2)\bigl(\initialLen + \tfrac{\sliceLen + 1}{2}\bigr)}{} \;>\; \kvmem.
\end{align*}
By definition of $\paral$ (i.e., maximum memory-feasible degree of parallelism), we have $\MemPeak(\paral + 1, \sliceLen, \initialLen) > \kvmem$.  
Applying \Cref{lem:staggered:peak-memory},
\begin{align*}
\kvmem 
&< \initialLen(\paral + 1) + \frac{\sliceLen(\paral + 1) + \sliceLen + (\paral + 1) - \gcd(\paral + 1, \sliceLen)}{2} \\
&\leq \initialLen(\paral + 1) + \frac{\sliceLen(\paral + 1) + \sliceLen + (\paral + 1) - 1}{2} 
= \initialLen\paral + \initialLen + \frac{\sliceLen\paral}{2} + \frac{\paral}{2} + \sliceLen.
\end{align*}
Now observe that
\begin{align*}
(\paral + 2)\Bigl(\initialLen + \frac{\sliceLen + 1}{2}\Bigr)
= \initialLen\paral + \initialLen + \frac{\sliceLen\paral}{2} + \frac{\paral}{2} + \sliceLen + (\initialLen + 1).
\end{align*}
Since $\initialLen \geq 0$, the numerator exceeds the upper bound on $\kvmem$ derived above. Hence,
\begin{align*}
{(\paral + 2)\bigl(\initialLen + \tfrac{\sliceLen + 1}{2}\bigr)}{} > \kvmem,
\end{align*}
which is equivalent to the desired inequality.
\end{proof}

\begin{proof}[Proof of approximation ratio $1 + o(1)$ in \Cref{thm:staggered:identical job}]
Let $\NumberJobs = |\Jobs|$. Under {\SPipelineSchedule} ({\SPS}) with parameters $(\paral, \sliceLen)$, the completion time of the $i$-th job (indexed from $0$) is $\lfloor i\sliceLen / \paral \rfloor + \sliceLen$. Thus,
\begin{equation}
\label{eq:Fstag:identical}
\FTime{\textsc{SPS}} 
= \sum\nolimits_{i=0}^{\NumberJobs-1} \left( \left\lfloor \frac{i\sliceLen}{\paral} \right\rfloor + \sliceLen \right)
\leq \NumberJobs \sliceLen + \frac{\NumberJobs(\NumberJobs - 1)\sliceLen}{2\paral}.
\end{equation}
Combining this with the area-based lower bound from \Cref{lem:area-lower-bound},
\begin{align*}
\OPT \;\geq\; \frac{\NumberJobs(\NumberJobs + 1)}{2} \cdot \frac{\Area(\sliceLen,\initialLen)}{\kvmem},
\end{align*}
we decompose the approximation ratio as
\begin{align*}
\frac{\FTime{{\SPS}}}{\OPT} \;\leq\; \underbrace{\frac{\NumberJobs\sliceLen}{\OPT}}_{\text{(I)}} \;+\; \underbrace{\frac{\dfrac{\NumberJobs(\NumberJobs-1)\sliceLen}{2\paral}}{\OPT}}_{\text{(II)}}.
\end{align*}

\xhdr{Bounding Term (I).}
Using the trivial lower bound $\OPT \geq \NumberJobs\sliceLen$, we have $\text{(I)} \leq 1$.  
Alternatively, applying \Cref{lem:area-lower-bound} and then \Cref{lem:aux:ratio},
\begin{align*}
\text{(I)} 
&\leq \frac{\NumberJobs\sliceLen}{\frac{\NumberJobs(\NumberJobs+1)}{2} \cdot \frac{\Area(\sliceLen,\initialLen)}{\kvmem}}
= \frac{2\sliceLen}{(\NumberJobs+1) \cdot \frac{\Area(\sliceLen,\initialLen)}{\kvmem}} 
< \frac{2\sliceLen}{(\NumberJobs+1) \cdot \frac{\paral}{\sliceLen(1 + 2/\paral)}}
= \frac{2\paral + 4}{\NumberJobs + 1}.
\end{align*}
Hence,
\begin{align*}
\text{(I)} \;\leq\; \min\!\left\{1,\ \frac{2\paral + 4}{\NumberJobs + 1}\right\}.
\end{align*}

\xhdr{Bounding Term (II).}
Again using \Cref{lem:area-lower-bound} and \Cref{lem:aux:ratio},
\begin{align*}
\text{(II)} 
&\leq \frac{\frac{\NumberJobs(\NumberJobs-1)\sliceLen}{2\paral}}{\frac{\NumberJobs(\NumberJobs+1)}{2} \cdot \frac{\Area(\sliceLen,\initialLen)}{\kvmem}}
= \frac{\NumberJobs - 1}{\NumberJobs + 1} \cdot \frac{\kvmem\sliceLen}{\paral \cdot \Area(\sliceLen,\initialLen)} 
< \frac{\NumberJobs - 1}{\NumberJobs + 1} \cdot \left(1 + \frac{2}{\paral}\right).
\end{align*}

\xhdr{Combining the Bounds.}
Summing the two terms yields
\begin{align}
\label{eq:staggered:identical:final-ratio}
\frac{\FTime{{\SPS}}}{\OPT}
\;<\;
\min\!\left\{1,\ \frac{2\paral + 4}{\NumberJobs + 1}\right\}
\;+\;
\frac{\NumberJobs - 1}{\NumberJobs + 1} \cdot \left(1 + \frac{2}{\paral}\right).
\end{align}
When the prompt length $\initialLen$ and response length $\responseLen$ are fixed, the dominant term in the right-hand side of \eqref{eq:staggered:identical:final-ratio} as $\NumberJobs \to \infty$ is term (II), which approaches $1 + 2/\paral$.
More precisely, consider the regime where both the KV-cache memory budget $\kvmem \to \infty$ and the number of jobs $\NumberJobs \to \infty$ satisfy $\kvmem = o(\NumberJobs)$.
Since $\paral = \Theta(\kvmem)$, we have $\paral \to \infty$ and $\paral/\NumberJobs \to 0$.
Thus, term~(I) is bounded by $O(\paral/\NumberJobs) = o(1)$, while term~(II) approaches $1$. Consequently, the overall approximation ratio converges to $1 + o(1)$, as desired. 
\end{proof}

\begin{remark}
     It can be verified that the right-hand side of \eqref{eq:staggered:identical:final-ratio} is at most $2 + \frac{1}{\paral}$. Thus, this also shows that the approximation ratio is at most $3$ for general identical-job instances under memory-time area analysis framework. To tighten this bound to $2$, we require a specialized analysis that exploits the structural properties of {\OPT} for identical jobs. Its formal proof can be found in \Cref{apx:sps-paral-1}.
\end{remark}

\subsection{Algorithm Overview of {\GeoSALG}}
\label{subsec:main alg}

In this section, we present the {\GeoSALG}. We start by giving an algorithm overview of it and then state it competitive ratio guarantee in \Cref{thm:GSA}.
Its competitive ratio analysis is deferred to \Cref{sec:clairvoyant greedy,sec:algorithms}.

\begin{algorithm2e}[ht]
\caption{{\GeoSALG} ({\GSA})}
\label{alg:gsa}
\KwIn{
    $\NumberJobs$ jobs with known prompt length $\initialLen$ and unknown response lengths $\{\responseLen_i\}_{i\in[\NumberJobs]}$, \\
    KV-cache memory budget $\kvmem$, scaling factor $\responseScalar > 1$
}
\BlankLine

Let $\MaxPhaseIdx = \lfloor \log_{\responseScalar} \big( \kvmem - \initialLen \big) \rfloor, \sliceLowerBound = (\kvmem - \initialLen) / {\responseScalar^\MaxPhaseIdx}$ \tcp*{ensures $1 \leq \sliceLowerBound < \responseScalar$ and $\sliceLowerBound \cdot \responseScalar ^ \MaxPhaseIdx = \kvmem - \initialLen$}
Initialize geometric slice length $\sliceLenHat \gets \sliceLowerBound$ and job set $\remainJobs \gets [\NumberJobs]$ \;

\While {$\remainJobs \neq \emptyset$}
{
    Let $\sliceLen \gets \lfloor \sliceLenHat \rfloor$ \tcp*{integer slice length used in this phase}

    Invoke {\SPipelineSchedule}$(\remainJobs, \paralSliceLen, \sliceLen)$ \tcp*{$\paralSliceLen$ defined in \eqref{eq:max-parallelism}}

    Update set $\remainJobs \gets \{i \in \remainJobs \mid \text{job } i \text{ is not completed}\}$ \;
    Update geometric slice length $\sliceLenHat \gets \responseScalar \cdot \sliceLenHat$ \;
}
\end{algorithm2e}

\xhdr{Algorithm Overview.}
The {\GeoSALG} ({\GSA}) is built on a simple yet powerful principle: \emph{when job response lengths are unknown, the scheduler should iteratively guess their response length using a geometrically increasing time scale, and in each phase, execute a memory-efficient pipeline schedule based on the current guess}.

Specifically, given the set of unfinished jobs, {\GSA} processes all of them in successive phases. Each phase $\PhaseIdx = 0,1,2,\dots$ is defined by a \emph{slice length} $\sliceLenHatP{\PhaseIdx}$, which serves as the current hypothesis for job duration. The algorithm uses the integer slice length $\sliceLenP{\PhaseIdx} = \lfloor \sliceLenHatP{\PhaseIdx} \rfloor$ when invoking {\SPipelineSchedule} ({\SPS}) defined in Procedure~\ref{alg:staggered}.

The algorithm is parameterized by a \emph{scaling factor} $\responseScalar > 1$ (e.g., $\responseScalar = 2$ for doubling). The initial slice length $\sliceLenHatP{0}$ is set to the smallest value of the form $(\kvmem - \initialLen) / \responseScalar^{\MaxPhaseIdx}$ that is at least $1$. This choice ensures that the geometric sequence $\sliceLenHatP{\PhaseIdx} = \responseScalar^{\PhaseIdx} \cdot \sliceLenHatP{0}$ will eventually reach the maximum feasible response length $\kvmem - \initialLen$, i.e., the largest response length that can be accommodated within the memory budget.\footnote{This initialization is designed to optimize the theoretical analysis. Using $\sliceLenHatP{0} = 1$ yields a slightly worse constant factor, and require more discussions on the boundaries, e.g., when $\tau > \kvmem - \initialLen$. In practice (and in our numerical experiments (\Cref{apx:numerical})), this initialization $\sliceLenHatP{0}=\sliceLowerBound$ can be treated as a tunable parameter.}

In each phase $\PhaseIdx=0, 1, 2, \dots$, {\GSA} invokes {\SPS} on the current job set $\remainJobs_{\PhaseIdx}$, using:
\begin{itemize}
    \item slice length $\sliceLen = \sliceLenP{\PhaseIdx}$, and  
    \item the maximum memory-feasible degree of parallelism $\paralSliceLenP$, as defined in \Cref{lem:staggered:max-parallelism} (see Eq.~\eqref{eq:max-parallelism}).
\end{itemize}

During this phase, every job in $\remainJobs_{\PhaseIdx}$ is scheduled by {\SPS} and allowed to run for up to $\sliceLenP{\PhaseIdx}$ rounds. Jobs whose true response length satisfies $\responseLen_i \leq \sliceLenP{\PhaseIdx}$ complete successfully and are removed from the system. The remaining jobs (those with $\responseLen_i > \sliceLenP{\PhaseIdx}$) are killed at the end of the phase---their partial progress is discarded, and they will be restarted in the next phase.
After the phase, the algorithm updates the job set to $\remainJobs_{\PhaseIdx+1} \gets \remainJobs_{\PhaseIdx} \setminus \{\text{completed jobs}\}$ and scales the slice length geometrically:
\begin{align*}
\sliceLenHatP{\PhaseIdx+1} \gets \responseScalar \cdot \sliceLenHatP{\PhaseIdx},
\end{align*}
This geometric growth guarantees that for any job $i$, there exists a phase $\PhaseIdx_i$ such that $\sliceLenP{\PhaseIdx_i} \geq \responseLen_i$, ensuring eventual completion.\footnote{The geometric increase of the slice length is essential for handling arbitrary unknown response lengths while maintaining a bounded number of restarts per job.}
See \Cref{alg:gsa} for a formal description of {\GeoSALG}.

\xhdr{Comparison with ``let-long-jobs-finish'' heuristics in prior work.}
Many non-clairvoyant schedulers---both in prior theoretical work \citep[e.g.,][]{CYZ-25} and in practice \citep[e.g., First-Come-First-Served (FCFS) in ][]{KLZSZYGZS-23}---adopt a ``let-long-jobs-finish'' heuristic: they avoid preempting running jobs unless the memory budget is exceeded, in an effort to minimize restarts and the associated wasted computation. While intuitive, this strategy can lead to significant efficiency loss by blocking memory for extended periods and limiting parallelism when considering the flow time objective.

In contrast, our proposed {\GSA} employs a more aggressive \emph{early-killing} strategy: it systematically terminates all unfinished jobs at the end of each geometric phase, regardless of their progress.  
The following example illustrates that this distinction plays a key role in enabling {\GSA} to potentially outperform the aforementioned heuristics.

\begin{example}[Long-Job Trap for Conservative Heuristics]
\label{example:long-job-trap}
Fix a sufficiently large integer $\MaxPhaseIdx\in\naturals$. Consider an instance with $\NumberJobs \geq 2$ jobs, each having prompt length $\initialLen = 2^{\MaxPhaseIdx}$, and response length $\responseLen_i = 1 + (2^\MaxPhaseIdx - 1)\cdot \indicator{i = 1}$, under a KV-cache memory budget of $\kvmem=2^{\MaxPhaseIdx+1}$. (Namely, the instance contains a single ``long'' job and $\NumberJobs - 1$ ``short'' jobs. The KV-cache memory budget is set such that at most one job can be active at any time.)

Consider a non-clairvoyant ``let-long-jobs-finish'' heuristic $\ALG$ that never kills the long job (i.e., job 1). We further assume that it sequentially processes all jobs in an order chosen uniformly at random, as jobs are not distinguishable before execution. 
Suppose the long job is the $k$-th job to finish ($1 \leq k \leq \NumberJobs$). Then the total flow time consists of three parts: (i) the first $k-1$ short jobs contribute $\sum\nolimits_{\ell=1}^{k-1} \ell$; (ii) the long job finishes at time $(k-1) + 2^\MaxPhaseIdx$, contributing $k-1 + 2^\MaxPhaseIdx$; and (iii) the remaining $\NumberJobs-k$ short jobs wait for the long job, each contributing its completion time $\ell + 2^\MaxPhaseIdx - 1$. Summing these gives a total flow time of
\begin{align*}
\frac{\NumberJobs(\NumberJobs+1)}{2} + (\NumberJobs - k + 1)(2^\MaxPhaseIdx - 1).
\end{align*}
Since the execution order is uniform, $k$ is uniformly distributed in $\{1, \dots, \NumberJobs\}$, so $\mathbb{E}[\NumberJobs - k + 1] = (\NumberJobs+1)/2$. Thus, the expected flow time is
\begin{align*}
    \FTime{\ALG} = \frac{\NumberJobs(\NumberJobs+1)}{2} + \frac{\NumberJobs+1}{2}(2^\MaxPhaseIdx - 1) = \frac{\NumberJobs+1}{2} \cdot \left(\NumberJobs + 2^\MaxPhaseIdx - 1\right).
\end{align*}

In contrast, under {\GeoSALG} ({\GSA}) with scaling factor $\responseScalar = 2$, in the first phase, it invokes {\SPipelineSchedule} ({\SPS}) with slice length $\sliceLen = 1$ and degree of parallelism $\paralSliceLen = 1$ for all $\NumberJobs$ jobs. All $\NumberJobs-1$ short jobs are completed after this phase. In each phase $k \in [2:\MaxPhaseIdx]$, it invokes {\SPS} with slice length $\sliceLen = 2^{k}$ and degree of parallelism $\paralSliceLen = 1$ for the long job, and this long job is completed in the $\MaxPhaseIdx$-th phase. Thus, the total flow time of \textsc{GSA} satisfies
\begin{align*}
    \FTime{\GSA} \leq \underbrace{\frac{1}{2} \cdot (\NumberJobs+2)(\NumberJobs-1)}_{\text{flow time from short jobs}} + \underbrace{\NumberJobs + 2 + 4 + \dots + 2^\MaxPhaseIdx - 1 + 2^\MaxPhaseIdx}_{\text{flow time from long job}} = \frac{(\NumberJobs+2)(\NumberJobs-1)}{2} + \NumberJobs + 2^{\MaxPhaseIdx+1} - 2,
\end{align*}
where the inequality accounts for the worst executing order of $\NumberJobs$ jobs in the first phase when invoking {\SPS}.

Finally, the optimal clairvoyant schedule processes all $\NumberJobs - 1$ short jobs first and then processes the long job, with an optimal flow time of
\begin{align*}
    \FTime{\OPT} = \frac{\NumberJobs(\NumberJobs + 1)}{2} + 2^\MaxPhaseIdx - 1.
\end{align*}
When both $\NumberJobs$ and $\MaxPhaseIdx$ grow to infinity with $\MaxPhaseIdx = \Omega(\NumberJobs)$, the competitive ratios satisfy
\begin{align*}
    \frac{\FTime{\ALG}}{\FTime{\OPT}} \rightarrow \NumberJobs
    \quad \text{and} \quad
    \frac{\FTime{\GSA}}{\FTime{\OPT}} \rightarrow 2.
    \hfill\halmos
\end{align*}
\end{example}

\xhdr{Competitive Ratio Guarantee.} 
While the previous example illustrated that proactively killing and restarting the long job in {\GSA} is sometimes indispensable, even though it might increase restart waste. We now turn to the first main theoretical result of this paper.  
Specifically, {\GSA} achieves a \emph{constant competitive ratio for all problem instances}.  
\Cref{thm:GSA} first states its competitive ratio guarantee for an arbitrary scaling factor $\responseScalar$, and then presents the optimized competitive ratios for general instances and large-memory instances obtained by tuning $\responseScalar$ optimally.

\begin{theorem}[Competitive Ratio Guarantee of {\GSA}]
\label{thm:GSA}
For any $\responseScalar \in (1, \infty)$, the competitive ratio of the non-clairvoyant scheduler {\GeoSALG} (\GSA) with scaling factor $\responseScalar$ is at most 
\begin{align*}
    \approxratio(\GSA)  \le\ \left(2+\frac{2}{\responseScalar-1}\right)\left(\responseScalar ^ 2 \left(1 + \frac{2}{\minparal} \right)\ +\ \responseScalar\ +\ \frac{\responseScalar}{\responseScalar-1}\right)
\end{align*}
where ${\minparal} \geq 1$ is the minimum degree of parallelism for the given input instance.

Moreover, by optimally tuning scaling factor $\responseScalar = (7+\sqrt{13})/6\approx 1.7676$, the competitive ratio guarantee satisfies
\begin{align*}
 \approxratio(\GSA) = \frac{92+26\sqrt{13}}{3} \approx 61.9148
\end{align*}
and in the large memory regime (i.e., $\minparal\rightarrow \infty$),\footnote{This occurs when the prompt length $\initialLen$ and all response lengths $\responseLen_i$ are negligible compared to the KV-cache memory budget $\kvmem$, so that high parallelism is feasible throughout execution.} by setting scaling factor $\responseScalar = 2$, the competitive ratio guarantee converges to
\begin{align*}
\lim_{\minparal\rightarrow \infty}~
 \approxratio(\GSA) = 32~.
\end{align*}
\end{theorem}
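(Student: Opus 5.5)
Write $\approxratio(\GSA) \le (2+\frac{2}{\responseScalar-1})\cdot \Lambda$ with $\Lambda \triangleq \responseScalar^2(1+\frac{2}{\minparal})+\responseScalar+\frac{\responseScalar}{\responseScalar-1}$, matching the two factors in the statement. The plan has three stages. First, upper bound the flow time of {\GSA} by a sum over phases. Second, lower bound $\OPT$ with the relaxed area benchmark, i.e., the optimum of the problem in which only total memory-time area up to each completion time is constrained. Third, compare the two phase by phase.

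\textbf{Step 1: lower bound on $\OPT$.} Generalizing \Cref{lem:area-lower-bound} to heterogeneous lengths, sort jobs by response length, $\responseLen_{(1)}\le\dots\le\responseLen_{(n)}$. Any feasible schedule has $C_i \ge \frac{1}{\kvmem}\sum_{j\le i}\Area(\responseLen_{(j)},\initialLen)$, since the $i$ earliest-finishing jobs have area at least that of the $i$ shortest jobs. We also have $C_i \ge \responseLen$ of that job. Summing, $\OPT \ge \max\{\sum_i \responseLen_i,\ \frac{1}{\kvmem}\sum_i (n-i+1)\Area(\responseLen_{(i)},\initialLen)\}$.

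We then regroup by classes. Class $\PhaseIdxSecond$ contains the jobs with $\sliceLenP{\PhaseIdxSecond-1}<\responseLen_i\le\sliceLenP{\PhaseIdxSecond}$. Let $\NumJobsGe{\PhaseIdxSecond}$ be the number of jobs in classes $\ge \PhaseIdxSecond$. Each class-$\PhaseIdxSecond$ job has area at least $\Area(\sliceLenP{\PhaseIdxSecond-1},\initialLen)$, which is roughly $\Area(\sliceLenP{\PhaseIdxSecond},\initialLen)/\responseScalar^2$. This yields $\OPT \gtrsim \sum_{\PhaseIdxSecond} \NumJobsGe{\PhaseIdxSecond}\,n_{\PhaseIdxSecond}\,\Area(\sliceLenP{\PhaseIdxSecond},\initialLen)/(\responseScalar^2\kvmem)$ plus the corresponding ``own-length'' term. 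This is where the $\responseScalar^2$ in $\Lambda$ comes from.

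\textbf{Step 2: upper bound on {\GSA}.} In phase $\PhaseIdx$, {\SPS} runs on $\NumJobsGe{\PhaseIdx}$ jobs with $\paral_\PhaseIdx = \paralSliceLenP$. The phase lasts at most $\sliceLenP{\PhaseIdx}+\NumJobsGe{\PhaseIdx}\sliceLenP{\PhaseIdx}/\paral_\PhaseIdx$ rounds. By \Cref{lem:aux:ratio}, $\sliceLenP{\PhaseIdx}/\paral_\PhaseIdx<(1+2/\minparal)\Area(\sliceLenP{\PhaseIdx},\initialLen)/\kvmem$. A job finishing in phase $\PhaseIdxSecond$ pays the full duration of every earlier phase, plus its within-phase completion time. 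This within-phase time is at most $\sliceLenP{\PhaseIdxSecond}$ plus its staggered start, which is bounded by its rank times $\sliceLenP{\PhaseIdxSecond}/\paral_{\PhaseIdxSecond}$. Summing over jobs gives
\[
\FTime{\GSA}\le \sum_{\PhaseIdx}\NumJobsGe{\PhaseIdx}\Big(\sliceLenP{\PhaseIdx}+\NumJobsGe{\PhaseIdx}\tfrac{\sliceLenP{\PhaseIdx}}{\paral_\PhaseIdx}\Big),
\]
up to halving the within-phase pairwise term. Each class-$\PhaseIdxSecond$ job contributes $\sum_{\PhaseIdx\le \PhaseIdxSecond}\sliceLenP{\PhaseIdx}\le \frac{\responseScalar}{\responseScalar-1}\sliceLenP{\PhaseIdxSecond}\lesssim \frac{\responseScalar^2}{\responseScalar-1}\responseLen_i$ of length cost. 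This is charged to $\OPT\ge\sum_i\responseLen_i$ and produces the $\responseScalar$ and $\frac{\responseScalar}{\responseScalar-1}$ terms of $\Lambda$. The delicate part is the area term $\sum_\PhaseIdx \NumJobsGe{\PhaseIdx}^2\Area(\sliceLenP{\PhaseIdx},\initialLen)/\kvmem$. Expand $\NumJobsGe{\PhaseIdx}^2=\sum_{\PhaseIdxSecond,\PhaseIdxThird\ge \PhaseIdx} n_{\PhaseIdxSecond}n_{\PhaseIdxThird}$ and charge each pair to the phase $\min(\PhaseIdxSecond,\PhaseIdxThird)$ in $\OPT$'s bound. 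Because $\Area(\sliceLenP{\PhaseIdx},\initialLen)$ grows at least geometrically in $\PhaseIdx$ (by a factor of at least $\responseScalar$), the sum over $\PhaseIdx\le\min(\PhaseIdxSecond,\PhaseIdxThird)$ is at most $\frac{\responseScalar}{\responseScalar-1}$ times the top term. Combined with a factor $2$ for converting unordered pairs against $\OPT$'s half-sum, this produces the prefactor $2+\frac{2}{\responseScalar-1}=\frac{2\responseScalar}{\responseScalar-1}$.

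\textbf{Main obstacle and finishing.} The hardest step is this pairwise charging. Rounding $\lfloor\sliceLenHat\rfloor$ and the boundary phase $\MaxPhaseIdx$ (where $\sliceLenHat=\kvmem-\initialLen$ and per-job feasibility caps lengths) must be handled so that the ratio $\Area(\sliceLenP{\PhaseIdx},\initialLen)/\Area(\sliceLenP{\PhaseIdx-1},\initialLen)\le\responseScalar^2$ holds while a geometric growth of at least $\responseScalar$ is retained. I would first try to prove both bounds directly from $\sliceLenHatP{\PhaseIdx}=\responseScalar\sliceLenHatP{\PhaseIdx-1}$ with the floors, falling back to the slack term $(1+2/\minparal)$ if needed. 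Once the displayed bound holds, the remaining claims are calculus. For $\minparal=1$ the bound is $\frac{2\responseScalar}{\responseScalar-1}(3\responseScalar^2+\responseScalar+\frac{\responseScalar}{\responseScalar-1})$; setting its derivative to zero gives $3\responseScalar^2-7\responseScalar+3=0$, so $\responseScalar=(7+\sqrt{13})/6$, and substituting back yields $(92+26\sqrt{13})/3$. As $\minparal\to\infty$ the bound becomes $\frac{2\responseScalar}{\responseScalar-1}(\responseScalar^2+\responseScalar+\frac{\responseScalar}{\responseScalar-1})$, which equals $4\cdot 8=32$ at $\responseScalar=2$.
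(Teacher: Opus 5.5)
Your argument is correct, but it follows a genuinely different route from the paper's. The paper never compares \GSA{} with the area bound directly. It proves $\GSA\le\bigl(2+\frac{2}{\alpha-1}\bigr)\,\UpperGBA$ by matching the spillover and phase-prefix terms of \GSA{} against the ladder and class-prefix terms of \GBA{} (\Cref{lem:spill-stag,lem:prefix-stag}). That matching uses the ceiling-counting inequality of \Cref{lem:ceiling}, the monotonicity $k_q\ge k_t$ for $q<t$, and only the geometric growth of the slice lengths $\tau_p$. It then invokes \Cref{thm:GBA}, so the product form of the stated bound comes from this two-step reduction.

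You instead charge everything directly to $\OPT\ge\max\{\sum_i o_i,\ \text{area bound}\}$. You treat the killed runs of a class-$m$ job as wasted area $A(\widehat\tau_p,s)$, $p<m$, which decays geometrically backward from its own class. The paper's route is modular: it isolates the price of non-clairvoyance as a factor $2\alpha/(\alpha-1)$ over the clairvoyant analogue and reuses \Cref{thm:GBA} verbatim. Yours dispenses with \Cref{lem:ceiling} and the $k$-monotonicity. Because it keeps length and area costs separate, it actually yields the additive bound
\[
\frac{2\alpha}{\alpha-1}\,\alpha^{2}\Bigl(1+\frac{2}{\minparal}\Bigr)+\frac{\alpha^{2}}{\alpha-1}.
\]
This implies the stated bound because $\alpha+\frac{\alpha}{\alpha-1}=\frac{\alpha^2}{\alpha-1}$ and $\frac{2\alpha}{\alpha-1}>1$. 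You should say this explicitly rather than claim to reproduce the two factors. It is in fact strictly smaller: optimizing gives roughly $45.0$ for $\minparal=1$ and $17.9$ as $\minparal\to\infty$.

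Two repairs are needed, neither serious.

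First, the two area facts you intend to prove for the floored slices are false.
\begin{itemize}
\item If $1\le\beta<2/\alpha$ (possible whenever $\alpha<2$), then $\tau_0=\tau_1=1$, so $A(\tau_p,s)$ need not grow at all.
\item For $s=0$, $\alpha\approx1.77$ and $\beta\in[3/\alpha,\alpha)$, one gets $\tau_0=1$, $\tau_1=3$, with $A(3,0)/A(1,0)=6>\alpha^2$.
\end{itemize}
Route everything through $\widehat\tau_p$ instead:
\begin{itemize}
\item \Cref{lem:aux:ratio} gives $\tau_p/k_p<(1+2/k_p)A(\tau_p,s)/M\le(1+2/k_p)A(\widehat\tau_p,s)/M$.
\item The elementary inequalities $\alpha A(x,s)\le A(\alpha x,s)\le\alpha^2A(x,s)$ give $\sum_{p\le m}A(\widehat\tau_p,s)\le\frac{\alpha^3}{\alpha-1}A(\widehat\tau_{m-1},s)$.
\item Every class-$m$ job has $o_i>\widehat\tau_{m-1}$, reading $\widehat\tau_{-1}=\beta/\alpha<1$.
\end{itemize}
So no fallback on the slack $(1+2/\minparal)$ is needed.

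Second, the regrouped lower bound in your Step 1 should be $\sum_q\frac{A(\widehat\tau_{q-1},s)}{M}\bigl(\frac{n_q(n_q+1)}{2}+n_qN_{>q}\bigr)$, not $\sum_q N_{\ge q}n_q(\cdots)$. This half-sum is exactly what costs the factor $2$ you invoke later.

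Your calculus for the two numerical constants is correct.
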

The proof of \Cref{thm:GSA} proceeds in two main steps.  
Since the optimal clairvoyant schedule may be complex and difficult to characterize directly, we first consider the clairvoyant setting and design a clairvoyant greedy algorithm. We then establish that this greedy algorithm achieves a constant-factor approximation to the optimal clairvoyant schedule (\Cref{thm:GBA}) in \Cref{sec:clairvoyant greedy}.  
In the second step (\Cref{sec:algorithms}), we bound the competitive ratio between {\GSA} and this clairvoyant greedy algorithm, thereby linking the performance of our non-clairvoyant algorithm to the offline optimum.

\section{Constant-Approximation Algorithm in Clairvoyant Environment}
\label{sec:clairvoyant greedy}

In this section, we temporarily depart from our non-clairvoyant base model and consider the clairvoyant variant in which the scheduler knows all job response lengths in advance. We design a greedy algorithm---{\GeoBALG} (\GBA)---that batches jobs into batches based on their response lengths and then invokes our subroutine {\SPipelineSchedule} ({\SPS}). We show that this algorithm achieves a constant-factor approximation to the optimal clairvoyant schedule. Besides its own practical and theoretical interests, this result serves as a crucial intermediate step in establishing the competitive ratio guarantee of the non-clairvoyant {\GeoSALG} ({\GSA}).

\xhdr{Algorithm Overview.} 
{\GeoBALG} ({\GBA}) is the clairvoyant counterpart of {\GSA}, designed for the setting where all response lengths $\{\responseLen_i\}$ are known. Like {\GSA}, it proceeds in geometric phases with target slice length $\sliceLenHatP{\PhaseIdx}$ and integer slice length $\sliceLenP{\PhaseIdx}=\lfloor \sliceLenHatP{\PhaseIdx} \rfloor$ used by {\SPS}. The algorithm initializes $\sliceLenHatP{0}$ to $\sliceLowerBound = (\kvmem - \initialLen) / \responseScalar^\MaxPhaseIdx$ where $\MaxPhaseIdx = \lfloor \log_{\responseScalar} \big( \kvmem - \initialLen \big) \rfloor$, ensuring that $\sliceLenHatP{0} \in [1, \responseScalar)$, and the sequence eventually hits the memory-limited boundary $\kvmem - \initialLen$ exactly.
In each phase $\PhaseIdx$, {\GBA} schedules only the subset $\JobsClass_{\PhaseIdx} = \{ \jobIdx \in \Jobs : \sliceLenHatP{\PhaseIdx} / \responseScalar < \responseLen_\jobIdx \leq \sliceLenHatP{\PhaseIdx} \}$---those jobs guaranteed to finish within $\sliceLenP{\PhaseIdx}$ rounds. It invokes {\SPipelineSchedule} on this subset with the memory-feasible degree of parallelism $\paralSliceLenP$ (Eqn.~\eqref{eq:max-parallelism}), ensuring no restarts. After completion, these jobs are removed, and $\sliceLenHat$ is scaled by $\responseScalar$. Because {\GBA} avoids unnecessary preemptions and aligns structurally with {\GSA}, it serves as a near-optimal clairvoyant benchmark, enabling a constant-factor competitive analysis for {\GSA} against $\OPT$. See \Cref{alg:gba} for details.

\begin{algorithm2e}[ht]
\caption{{\GeoBALG} ({\GBA})}
\label{alg:gba}
\KwIn{
    $\NumberJobs$ jobs with known prompt length $\initialLen$ and known response lengths $\{\responseLen_i\}_{i\in[\NumberJobs]}$, \\
    KV-cache memory budget $\kvmem$, scaling factor $\responseScalar > 1$
}
\BlankLine

Let $\MaxPhaseIdx = \lfloor \log_{\responseScalar} \big( \kvmem - \initialLen \big) \rfloor, \sliceLowerBound = (\kvmem - \initialLen) / {\responseScalar^\MaxPhaseIdx}$ \tcp*{ensures $1 \leq \sliceLowerBound < \responseScalar$ and $\sliceLowerBound \cdot \responseScalar ^ \MaxPhaseIdx = \kvmem - \initialLen$}
Initialize geometric slice length $\sliceLenHat \gets \sliceLowerBound$ and job set $\remainJobs \gets [\NumberJobs]$ \;

\For{$\PhaseIdx = 0$ \KwTo $\MaxPhaseIdx$}
{
    Construct set $\Jobs_\PhaseIdx \gets \{i \in \remainJobs \mid \sliceLenHat / \responseScalar < \responseLen_i \leq \sliceLenHat\}$ \tcp*{jobs completed in this phase}
    Let $\sliceLen \gets \lfloor \sliceLenHat \rfloor$ \tcp*{integer slice length used in this phase}

    Invoke {\SPipelineSchedule}$(\Jobs_\PhaseIdx, \paralSliceLen, \sliceLen)$ \tcp*{$\paralSliceLen$ defined in \eqref{eq:max-parallelism}}
    
    Update set $\remainJobs \gets \remainJobs \setminus \Jobs_\PhaseIdx$ \;
    Update geometric slice length $\sliceLenHat \gets \responseScalar \cdot \sliceLenHat$ \;
}
\end{algorithm2e}

\subsection{Approximation Guarantee of {\GBA}} 

In this section, we present the theoretical approximation guarantee of {\GBA}.

\begin{theorem}[Approximation Guarantee of {\GBA}]
\label{thm:GBA}
For any $\responseScalar \in (1, \infty)$, the approximation ratio of the clairvoyant scheduler {\GeoBALG} (\GBA) with scaling factor $\responseScalar$ is at most 
\begin{align*}
    \approxratio(\GBA)  \leq
    \responseScalar^2\cdot 
    \left(1 + \frac{2}{\minparal}\right) + \responseScalar + \frac{\responseScalar}{\responseScalar-1}
\end{align*}
where ${\minparal} \triangleq \min_p k_p \geq 1$ is the minimum degree of parallelism for the given input instance.

Moreover, by optimally tuning scaling factor $\responseScalar = 4/3\approx 1.3333$, the approximation guarantee is at most
\begin{align*}
 \approxratio(\GBA) = \frac{32}{3} \approx 10.6667~,
\end{align*}
and in the large memory regime (i.e., $\minparal\rightarrow \infty$), by setting scaling factor $\responseScalar = 1.5$, the approximation guarantee converges to
\begin{align*}
\lim_{\minparal\rightarrow \infty}~
 \approxratio(\GBA) = 6.75~.
\end{align*}
\end{theorem}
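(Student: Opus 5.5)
The plan is to compare the flow time of {\GBA} with a relaxed benchmark built from the memory-time area. First, strengthen \Cref{lem:area-lower-bound} to heterogeneous jobs. Sort all jobs by response length $\responseLen_{(1)}\le\dots\le\responseLen_{(\NumberJobs)}$. In any feasible schedule, the $i$ earliest-completing jobs consume area at least $\sum_{j\le i}\Area(\responseLen_{(j)},\initialLen)$, which is at most $\kvmem\,\C_i$. We also have the trivial bound $\C_i\ge \responseLen_{(i)}$. Hence
\begin{align*}
\OPT \;\ge\; \max\Bigl\{\sum\nolimits_i \tfrac{1}{\kvmem}\sum\nolimits_{j\le i}\Area(\responseLen_{(j)},\initialLen),\ \sum\nolimits_i \responseLen_{(i)}\Bigr\},
\end{align*}
and we may use whichever bound (or a combination) is convenient for each term.

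Next, write the flow time of {\GBA} explicitly. Let $n_\PhaseIdx=|\Jobs_\PhaseIdx|$, $\paral_\PhaseIdx=\paralSliceLenP$, and let $T_\PhaseIdx$ be the duration of phase $\PhaseIdx$, which is at most $(n_\PhaseIdx-1)\sliceLenP{\PhaseIdx}/\paral_\PhaseIdx+\sliceLenP{\PhaseIdx}$ (zero if $n_\PhaseIdx=0$). A job with index $i$ in class $\PhaseIdx$ then completes by $\sum_{\PhaseIdxSecond<\PhaseIdx}T_\PhaseIdxSecond+\lfloor i\sliceLenP{\PhaseIdx}/\paral_\PhaseIdx\rfloor+\sliceLenP{\PhaseIdx}$. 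Split the total into three terms, each matched to one piece of the claimed bound:
\begin{itemize}
\item[(a)] the within-class queueing term $\sum_\PhaseIdx n_\PhaseIdx(n_\PhaseIdx-1)\sliceLenP{\PhaseIdx}/(2\paral_\PhaseIdx)$;
\item[(b)] the own-slice term $\sum_\PhaseIdx n_\PhaseIdx\sliceLenP{\PhaseIdx}$;
\item[(c)] the cross-class waiting term $\sum_\PhaseIdx n_\PhaseIdx\sum_{\PhaseIdxSecond<\PhaseIdx}T_\PhaseIdxSecond$.
\end{itemize}

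\emph{Term (b).} Every class-$\PhaseIdx$ job has $\responseLen_i>\sliceLenHatP{\PhaseIdx}/\responseScalar\ge\sliceLenP{\PhaseIdx}/\responseScalar$, so (b) is at most $\responseScalar\sum_i\responseLen_i\le\responseScalar\,\OPT$.

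\emph{Terms (a) and (c).} Here we apply \Cref{lem:aux:ratio}: $\sliceLenP{\PhaseIdx}/\paral_\PhaseIdx<(1+2/\minparal)\Area(\sliceLenP{\PhaseIdx},\initialLen)/\kvmem$. Since $\Area$ is increasing and superlinear in the response length and $\sliceLenP{\PhaseIdx}\le\responseScalar\responseLen_i$, we have $\Area(\sliceLenP{\PhaseIdx},\initialLen)\le\responseScalar^2\Area(\responseLen_i,\initialLen)$ for each class-$\PhaseIdx$ job $i$. This converts the slot time $\sliceLenP{\PhaseIdx}/\paral_\PhaseIdx$ into $\responseScalar^2(1+2/\minparal)$ times that job's area share $\Area(\responseLen_i,\initialLen)/\kvmem$.

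The queueing part of $T_\PhaseIdxSecond$, namely $(n_\PhaseIdxSecond-1)\sliceLenP{\PhaseIdxSecond}/\paral_\PhaseIdxSecond$, charges to the area of jobs in earlier classes, which are shorter. Together with (a), this reproduces exactly the ordered-area sum from the lower bound, because jobs are processed in class order, i.e., roughly nondecreasing length. The result is the $\responseScalar^2(1+2/\minparal)\OPT$ term.

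The leftover additive $\sliceLenP{\PhaseIdxSecond}$ in each $T_\PhaseIdxSecond$ for $\PhaseIdxSecond<\PhaseIdx$ sums geometrically: $\sum_{\PhaseIdxSecond<\PhaseIdx}\sliceLenHatP{\PhaseIdxSecond}\le\sliceLenHatP{\PhaseIdx}/(\responseScalar-1)\le\responseLen_i\responseScalar/(\responseScalar-1)$. This gives the $\frac{\responseScalar}{\responseScalar-1}\OPT$ term via $\OPT\ge\sum_i\responseLen_i$.

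The main obstacle I expect is the bookkeeping in (a)+(c). The within-class ordering pairs cleanly with the sorted-area lower bound only if the sorted order used in $\OPT$'s bound agrees with class order, and ties or misorderings inside a class must not cost more than the $\responseScalar^2$ slack. I would handle this by bounding the {\GBA} completion time of the $i$-th job in class order by $\responseScalar^2(1+2/\minparal)\sum_{j\le i}\Area(\responseLen_{(j)},\initialLen)/\kvmem$ plus the (b) and geometric terms. Since the class order is a valid sorting up to within-class permutations, and area within a class is within a factor $\responseScalar^2$ of the slice area, this pointwise comparison should go through.

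Finally, the numerical claims follow by calculus. With $\minparal=1$ the bound is $3\responseScalar^2+\responseScalar+\responseScalar/(\responseScalar-1)$, which at $\responseScalar=4/3$ equals $16/3+4/3+4=32/3$. With $\minparal\to\infty$ it becomes $\responseScalar^2+\responseScalar+\responseScalar/(\responseScalar-1)$, which at $\responseScalar=1.5$ equals $2.25+1.5+3=6.75$.
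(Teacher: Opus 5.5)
Your argument is correct, and its skeleton matches the paper's. You use the same sorted-area lower bound on $\OPT$; the paper replaces each job's area by the class minimum $\AreaLB_\PhaseIdx$, while you keep per-job areas, which is equivalent here. You use the same split of the {\GBA} flow time into within-class queueing, own-slice, and cross-class waiting (the paper's $\Ladder_\PhaseIdx$ and $\ClassPrefix_\PhaseIdx$). The leftover slices are handled by the same geometric sum, and the additive pieces by $\OPT\ge\sum_i\responseLen_i$.

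The genuine difference is how you obtain the per-class packing bound. The paper proves \Cref{lem:GBA:per-class-refined-beta} from scratch: it rederives the maximality inequality for $\paral_\PhaseIdx$, then pushes the ratio through monotonicity of the auxiliary functions $\TmpFunc_\paral(y)$ and $\Psi$ in $y$, $\promptRatio$ and $\kvmem$. You get the same constant in two lines from the single-class \Cref{lem:aux:ratio} together with $\Area(\sliceLenP{\PhaseIdx},\initialLen)\le\Area(\responseScalar y,\initialLen)\le\responseScalar^2\Area(y,\initialLen)$. This holds for any $y$ with $\sliceLenP{\PhaseIdx}\le\responseScalar y$, either $y=\sliceLowerBound\responseScalar^{\PhaseIdx-1}$ or $y=\responseLen_i$ for a class-$\PhaseIdx$ job $i$. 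That route is more elementary. Because the bound holds for every member of the class, including the shortest ones, the within-class ordering issue you flagged disappears.

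What the paper's class-aggregated bookkeeping buys is reuse. \Cref{lem:spill-stag,lem:prefix-stag} are stated in terms of $\sum_\PhaseIdx\NumJobsClass_\PhaseIdx\ClassPrefix_\PhaseIdx$ and $\sum_\PhaseIdx\Ladder_\PhaseIdx$, i.e., $\UpperGBA$. To feed the {\GSA} analysis you would want to restate your pointwise bound in that form. Your slightly tighter phase length $(\NumJobsClass_\PhaseIdx-1)\sliceLenP{\PhaseIdx}/\paral_\PhaseIdx+\sliceLenP{\PhaseIdx}$ causes no trouble there.

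One small correction: $\Area(\responseScalar x,\initialLen)\le\responseScalar^2\Area(x,\initialLen)$ does not follow from superlinearity, which points the wrong way. It holds because $\Area(x,\initialLen)=(\initialLen+\tfrac12)x+\tfrac12x^2$ has only linear and quadratic terms with nonnegative coefficients, each scaling by at most $\responseScalar^2$ when $\responseScalar>1$.
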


\subsection{Analysis of {\GBA}: Key Steps and Technical Lemmas}
\label{sec:GBA:key steps}

Our analysis of {\GBA} (proof of Theorem~\ref{thm:GBA}) proceeds through a sequence of carefully orchestrated steps that connect the structure of {\GBA} to an information-theoretic lower bound on the optimal total flow time $\OPT$.  
At a high level, we extend the memory-time area argument (\Cref{def:area}) from the single-class setting (i.e., identical-job instances) used in \Cref{thm:staggered:identical job} (\Cref{subsec:staggered}) to a multi-class setting, where jobs are partitioned into classes based on their response lengths $\responseLen_i$.

Specifically, consider a fixed job set $\Jobs$ with $|\Jobs| = \NumberJobs$.  
For brevity, we let $\GBA$ denote the total flow time of running {\GBA} on $\Jobs$, and $\OPT$ denote the optimal total flow time for $\Jobs$.  
In {\GBA}, the jobs processed in phase $\PhaseIdx$ are referred to as \emph{class}~$\PhaseIdx$ jobs and denoted by $\JobsClass_{\PhaseIdx}$.  
Let:
\begin{itemize}
    \item $\NumJobsClass_\PhaseIdx \triangleq |\JobsClass_{\PhaseIdx}|$ be the number of jobs in class $\PhaseIdx \in \{0, 1, \dots, \MaxPhaseIdx\}$,
    \item $\sliceLenP{\PhaseIdx} \triangleq \lfloor \sliceLenHatP{\PhaseIdx} \rfloor = \lfloor \sliceLowerBound \responseScalar^\PhaseIdx \rfloor$ be the integer slice length used in phase $\PhaseIdx$,
    \item $\paral_\PhaseIdx \triangleq \paralSliceLenP$ be the degree of parallelism used for class $\PhaseIdx$.
\end{itemize}
For convenience, we also introduce auxiliary notations
\begin{align*}
\NumJobsGt{\PhaseIdx} \triangleq \sum\nolimits_{\PhaseIdxSecond > \PhaseIdx} \NumJobsClass_{\PhaseIdxSecond},
\quad
\NumJobsGe{ \PhaseIdx} \triangleq \sum\nolimits_{\PhaseIdxSecond \geq \PhaseIdx} \NumJobsClass_{\PhaseIdxSecond},
\end{align*}
as the number of jobs in classes strictly greater than $\PhaseIdx$, and in classes greater than or equal to $\PhaseIdx$, respectively.
The key steps of the analysis are as follows:

\xhdr{Step 1: Area-Based Lower Bound on $\OPT$ (\Cref{lem:GBA:phase-area-lower-bound}).}  
We extend the memory-time area argument from \Cref{lem:area-lower-bound} which lower bounds the optimal total flow time $\OPT$ to the multi-class setting. Specifically, we decompose the optimal total flow time $\OPT$ into two components:  
(i) the \emph{within-class cost}, capturing the flow time due to jobs within the same class, and  
(ii) the \emph{between-class cost}, accounting for the delay that smaller jobs impose on larger ones.  
This decomposition yields a phase-aware lower bound on $\OPT$.

Since jobs in phase $\PhaseIdx$ have response length strictly greater than $\sliceLowerBound \responseScalar^{\PhaseIdx-1}$, their memory-time area is at least  
\begin{align*}
\AreaLB_{\PhaseIdx} \;\triangleq\; \Area\left(\sliceLowerBound \responseScalar^{\PhaseIdx-1},\initialLen\right),
\end{align*}
where $\Area(\cdot,\cdot)$ is defined in \Cref{def:area}. This leads to the following lower bound on $\OPT$.

\begin{restatable}[Area-Based Lower Bound for $\OPT$]{lemma}{lemAreaGbaOptLB}
\label{lem:GBA:phase-area-lower-bound}
The optimal total flow time $\OPT$ satisfies
\begin{align*}
\OPT \;\geq\; \WithinClass + \BetweenClasses,
\end{align*}
where
\begin{align*}
\WithinClass \triangleq \sum\nolimits_\PhaseIdx \frac{\AreaLB_{\PhaseIdx}}{\kvmem} \cdot \frac{\NumJobsClass_\PhaseIdx(\NumJobsClass_\PhaseIdx + 1)}{2}, 
\;\;
\mbox{and}
\;\;
\BetweenClasses \triangleq \sum\nolimits_\PhaseIdx \frac{\AreaLB_{\PhaseIdx}}{\kvmem} \cdot \NumJobsGt{\PhaseIdx} \NumJobsClass_\PhaseIdx.
\end{align*}
\end{restatable}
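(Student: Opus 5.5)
The plan is to extend the argument of \Cref{lem:area-lower-bound} to the multi-class setting, using a relaxed problem in which only the cumulative memory-time area must fit under the budget. Sort the jobs by their completion times in an optimal schedule, $\C_1 \le \C_2 \le \dots \le \C_\NumberJobs$. For the $i$-th job to complete, the $i$ jobs finished by time $\C_i$ each consumed at least their own lower-bound area. Here a class-$\PhaseIdxSecond$ job has $\responseLen > \sliceLowerBound\responseScalar^{\PhaseIdxSecond-1}$, hence area at least $\AreaLB_{\PhaseIdxSecond}$, because $\Area(\cdot,\initialLen)$ is increasing. Since the supplied area up to $\C_i$ is at most $\kvmem\C_i$, we get
\begin{align*}
\C_i \;\ge\; \frac{1}{\kvmem}\sum\nolimits_{j \le i} \AreaLB_{\pi(j)},
\end{align*}
where $\pi(j)$ is the class of the $j$-th completed job.

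Summing over $i$ gives $\OPT \ge \frac{1}{\kvmem}\sum_j \AreaLB_{\pi(j)}\,(\NumberJobs - j + 1)$. Each job is thus charged its area weight times the number of jobs completing at or after it. The next step is to minimize this expression over all orderings $\pi$. Since $\AreaLB_{\PhaseIdx}$ is nondecreasing in $\PhaseIdx$, an exchange argument shows the minimum is attained by ordering jobs by nondecreasing class. Concretely, if a higher-class job immediately precedes a lower-class job, swapping them moves the larger weight to the smaller multiplier and cannot increase the sum. This is the "shortest-area-first" optimality of the relaxed problem.

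Under the sorted order, I then evaluate the sum class by class. The class-$\PhaseIdx$ jobs occupy consecutive positions, and for them the multipliers $\NumberJobs - j + 1$ run from $\NumJobsGt{\PhaseIdx} + 1$ to $\NumJobsGt{\PhaseIdx} + \NumJobsClass_\PhaseIdx$. Their total contribution is therefore
\begin{align*}
\frac{\AreaLB_\PhaseIdx}{\kvmem}\left(\NumJobsClass_\PhaseIdx \NumJobsGt{\PhaseIdx} + \frac{\NumJobsClass_\PhaseIdx(\NumJobsClass_\PhaseIdx+1)}{2}\right).
\end{align*}
Summing over $\PhaseIdx$ gives exactly $\WithinClass + \BetweenClasses$.

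The main point needing care is the exchange/rearrangement step, together with confirming that the area accounting is valid. On the accounting side, a job killed and restarted only consumes additional area, and the paper notes that pausing is unnecessary, so each completed job's final successful run alone uses at least its area. For class $\PhaseIdx = 0$, $\AreaLB_0 = \Area(\sliceLowerBound/\responseScalar, \initialLen)$ uses a possibly non-integer argument, which is harmless since it is still a lower bound. With these checked, the rearrangement inequality closes the proof.
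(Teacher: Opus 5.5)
Your proposal is correct and follows essentially the same route as the paper: order the optimal completion times, bound each $\C_i$ by the cumulative area of the first $i$ finished jobs divided by $\kvmem$, sum to obtain $\frac{1}{\kvmem}\sum_j (\NumberJobs-j+1)\,\AreaLB_{\PhaseIdx_j}$, minimize over orderings by placing classes in nondecreasing order, and evaluate the result class by class. The differences are cosmetic: you substitute $\AreaLB_{\PhaseIdx}$ before summing rather than after, and you spell out the adjacent-swap exchange argument that the paper only asserts. Your appeal to ``pausing is unnecessary'' is also not needed, since killed runs or paused rounds can only add to the area a job consumes.
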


\xhdr{Step 2: Per-Class Packing Efficiency of {\GBA} (\Cref{lem:GBA:per-class-refined-beta}).}  
We next extend the packing efficiency analysis from \Cref{lem:aux:ratio} to the multi-class setting. Specifically, we establish a lemma that characterizes the scheduling efficiency of {\GBA}, which invokes {\SPS} independently on each class of jobs. As in the identical-job case, this lemma relates the average time per job in class $\PhaseIdx$---captured by the ratio $\sliceLenP{\PhaseIdx} / \paral_\PhaseIdx$---to the memory-time area of that class, up to a multiplicative factor that depends on the scaling parameter $\responseScalar$ and the minimum degree of parallelism $\minparal \geq 1$.

\begin{restatable}[Area-Based Packing Efficiency of {\GBA}]{lemma}{lemGbaPackEffciency}
\label{lem:GBA:per-class-refined-beta}
For each class $\PhaseIdx$ induced by {\GBA} with scaling factor $\responseScalar \in (1, \infty)$,
\begin{align*}
\frac{\sliceLenP{\PhaseIdx}}{\paral_\PhaseIdx}\ \le\ 
\responseScalar^2 \cdot \left(1 + \frac{2}{\paral_\PhaseIdx}\right)
\cdot\frac{
\AreaLB_{\PhaseIdx}
}{\kvmem},
\end{align*}
\end{restatable}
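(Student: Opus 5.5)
The plan is to reduce the multi-class statement to the single-class packing-efficiency bound of \Cref{lem:aux:ratio}, and then pay a factor $\responseScalar^2$ to replace the area at the slice length $\sliceLenP{\PhaseIdx}$ by the area at the lower endpoint $\sliceLowerBound\responseScalar^{\PhaseIdx-1}$ of class $\PhaseIdx$.

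\emph{Step 1: apply \Cref{lem:aux:ratio} for class $\PhaseIdx$.} Since $\sliceLowerBound \geq 1$, the integer slice length satisfies $\sliceLenP{\PhaseIdx} = \lfloor \sliceLowerBound\responseScalar^{\PhaseIdx}\rfloor \geq 1$. The parallelism $\paral_\PhaseIdx = \paralSliceLenP$ is exactly the maximum memory-feasible degree of parallelism from Eqn.~\eqref{eq:max-parallelism}. So \Cref{lem:aux:ratio}, applied with $\sliceLen = \sliceLenP{\PhaseIdx}$ and the common prompt length $\initialLen$, gives
\begin{align*}
\frac{\sliceLenP{\PhaseIdx}}{\paral_\PhaseIdx} \;<\; \Bigl(1+\frac{2}{\paral_\PhaseIdx}\Bigr)\cdot\frac{\Area(\sliceLenP{\PhaseIdx},\initialLen)}{\kvmem}.
\end{align*}
It then suffices to show $\Area(\sliceLenP{\PhaseIdx},\initialLen) \le \responseScalar^2 \cdot \AreaLB_{\PhaseIdx}$.

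\emph{Step 2: compare the two areas.} Write $x \triangleq \sliceLowerBound\responseScalar^{\PhaseIdx-1} > 0$, a possibly non-integer real; for $\PhaseIdx=0$ it may even be below $1$. Treat $\Area(\cdot,\initialLen)$ as the real function $y \mapsto \initialLen y + y(y+1)/2$, which is increasing on $y \ge 0$. Since $\sliceLenP{\PhaseIdx} \le \responseScalar x$, monotonicity gives $\Area(\sliceLenP{\PhaseIdx},\initialLen) \le \Area(\responseScalar x,\initialLen)$. Expanding and using $\responseScalar > 1$ term by term gives
\begin{align*}
\Area(\responseScalar x,\initialLen) = \initialLen\responseScalar x + \frac{\responseScalar^2x^2 + \responseScalar x}{2} \;\le\; \responseScalar^2\Bigl(\initialLen x + \frac{x^2+x}{2}\Bigr) = \responseScalar^2 \AreaLB_{\PhaseIdx},
\end{align*}
because $\responseScalar \le \responseScalar^2$ in both the prompt term and the linear decode term. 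Combining with Step 1 yields the claim, in fact with strict inequality.

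\emph{Main obstacle.} No step is hard; the only points needing care are bookkeeping. First, \Cref{lem:aux:ratio} must be invoked with the integer slice length $\sliceLenP{\PhaseIdx} \geq 1$, not the real target $\sliceLenHatP{\PhaseIdx}$. Second, the definition of $\AreaLB_{\PhaseIdx}$ evaluates $\Area$ at a real argument, so I will state explicitly that \Cref{def:area} is extended to reals and that this extension is monotone on the nonnegative axis. Finally, the factor $\responseScalar^2$ is deliberately loose: one factor $\responseScalar$ comes from rounding the slice up to the top of the class, and the second from the quadratic growth of area. This slack is what the subsequent analysis of \Cref{thm:GBA} pays for.
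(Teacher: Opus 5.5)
Your proof is correct, and it reaches the same constant $\alpha^2(1+2/k_p)$ by a noticeably shorter route than the paper.

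\textbf{The paper's route.} The paper does not invoke \Cref{lem:aux:ratio}. Instead it:
\begin{itemize}
\item re-derives the maximality inequality $M < s(k_p+1) + \tfrac{1}{2}\bigl(\tau_p(k_p+2)+k_p\bigr)$ directly;
\item normalizes by $A_p$ rather than by $A(\tau_p,s)$, and substitutes $\tau_p \le \alpha y$ with $y=\beta\alpha^{p-1}$ to get a rational function $f_{k_p}(y)$;
\item shows by a derivative computation that $f_{k_p}$ is non-decreasing, and pushes $y$ up to $(M-s)/\alpha$ using the initialization identity $\beta\alpha^{\ell}=M-s$;
\item rewrites the result as a function $\Psi$ of $\rho=s/M$ and $M$, argues (again via derivatives) that $\Psi$ is decreasing in $\rho$ and increasing in $M$, and bounds it by its limit $\alpha^2(1+2/k_p)$ at $\rho=0$, $M\to\infty$.
\end{itemize}

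\textbf{Your route.} You use \Cref{lem:aux:ratio} as a black box and isolate all of the class-rounding loss in the single ratio $A(\tau_p,s)/A_p \le A(\alpha x,s)/A(x,s)\le\alpha^2$, proved by termwise comparison.

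\textbf{What each buys.} Your version needs no calculus. It never uses the upper class boundary $\beta\alpha^{\ell}=M-s$ inside the inequality chain; that boundary enters only implicitly, to guarantee $s+\tau_p\le M$ so that $k_p\ge 1$ is well defined. It also gives strict inequality. The paper's version starts from slightly sharper coefficients: after multiplying by $\tau_p/k_p$, the prompt term carries $1+1/k_p$ and the linear term carries $1$, rather than a uniform factor $1+2/k_p$. It also tracks explicitly how the bound depends on the prompt-to-memory ratio $\rho$, which could support an instance-dependent refinement. But because it then takes the worst case over $\rho$ and $M$, it ends at exactly the constant you obtain.
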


\xhdr{Step 3: Upper Bound on {\GBA} via Packing Efficiency (\Cref{lem:GBA:phase-area-upper-bound}).}  
In this step, we leverage the per-class packing efficiency established in Step 2---namely, the bound on $\{\sliceLenP{\PhaseIdx} / \paral_\PhaseIdx\}_\PhaseIdx$---to derive an upper bound on the total flow time of {\GBA}.  
Specifically, this upper bound mirrors the decomposition used in the lower bound for $\OPT$: it is obtained by summing over classes, and for each class $\PhaseIdx$, it accounts for (i) the internal completion times of its jobs (via {\SPS}), and (ii) the waiting time incurred due to all earlier classes. This parallel structure enables a direct comparison between {\GBA} and $\OPT$ in the final analysis.
To formalize this, we introduce two key quantities that capture the two components of the flow time. For each class $\PhaseIdx$, define the \emph{ladder term}
\begin{align}
\label{eq:Qk-stag-def}
\Ladder_\PhaseIdx
\;\triangleq\;
\NumJobsClass_\PhaseIdx \cdot \sliceLenP{\PhaseIdx}
\;+\;
\sum\nolimits_{i=0}^{\NumJobsClass_\PhaseIdx - 1}
\left\lfloor \frac{i \cdot \sliceLenP{\PhaseIdx}}{\paral_\PhaseIdx} \right\rfloor,
\end{align}
which upper bounds the total completion time of the jobs in class $\PhaseIdx$ when scheduled in isolation via {\SPS}, and the \emph{class-prefix term}
\begin{align}
\label{eq:Sk-stag-def}
\ClassPrefix_\PhaseIdx
\;\triangleq\;
\sum\nolimits_{\PhaseIdxSecond < \PhaseIdx}
\left(
    \left\lfloor \frac{\NumJobsClass_\PhaseIdxSecond \cdot \sliceLenP{\PhaseIdxSecond}}{\paral_\PhaseIdxSecond} \right\rfloor
    + \sliceLenP{\PhaseIdxSecond}
\right),
\end{align}
which upper bounds the total waiting time experienced by jobs in class $\PhaseIdx$ due to execution of all earlier classes.
With these definitions, the total flow time of {\GBA} is bounded as follows.
\begin{restatable}[{\GBA} Upper Bound via Packing Efficiency]{lemma}{lemAreaGbaUB}
\label{lem:GBA:phase-area-upper-bound}
The total flow time of {\GBA} satisfies
\begin{align*}
\GBA \;\leq\; \UpperGBA \;\triangleq\; \sum\nolimits_{\PhaseIdx} \left( \NumJobsClass_\PhaseIdx \cdot \ClassPrefix_\PhaseIdx + \Ladder_\PhaseIdx \right).
\end{align*}
where $\Ladder_\PhaseIdx$ and $\ClassPrefix_\PhaseIdx$ are defined in Eqn.~\eqref{eq:Qk-stag-def} and Eqn.~\eqref{eq:Sk-stag-def}, respectively.
\end{restatable}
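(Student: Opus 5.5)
The plan is to follow the phase structure of {\GBA} directly. Phases are run back to back, so phase $\PhaseIdx$ begins only after every earlier phase has ended. The proof then has two parts: bound the duration of each phase $\PhaseIdxSecond$, and bound the completion times of the jobs inside phase $\PhaseIdx$ measured from the start of that phase. Adding these over all jobs gives $\sum_\PhaseIdx (\NumJobsClass_\PhaseIdx \ClassPrefix_\PhaseIdx + \Ladder_\PhaseIdx)$. We interpret the time at which phase $\PhaseIdx+1$ starts as the first round after all jobs of $\JobsClass_\PhaseIdx$ have completed, so that phases do not overlap. Memory feasibility is then checked phase by phase via \Cref{lem:staggered:peak-memory} together with the choice $\paral_\PhaseIdx=\paralSliceLenP$.

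\emph{Within-phase completion times.} In phase $\PhaseIdxSecond$, {\SPS} starts job $i\in\{0,\dots,\NumJobsClass_\PhaseIdxSecond-1\}$ at relative round $\lfloor i\sliceLenP{\PhaseIdxSecond}/\paral_\PhaseIdxSecond\rfloor$. Every $j\in\JobsClass_\PhaseIdxSecond$ satisfies $\responseLen_j\le \sliceLenHatP{\PhaseIdxSecond}$, and $\responseLen_j$ is an integer, so $\responseLen_j\le\lfloor\sliceLenHatP{\PhaseIdxSecond}\rfloor=\sliceLenP{\PhaseIdxSecond}$. Each job therefore finishes within its slice and is never killed. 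Its relative completion time is at most $\lfloor i\sliceLenP{\PhaseIdxSecond}/\paral_\PhaseIdxSecond\rfloor+\sliceLenP{\PhaseIdxSecond}$, and summing over $i$ gives exactly $\Ladder_\PhaseIdxSecond$.

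\emph{Phase duration.} The last job of phase $\PhaseIdxSecond$ has index $\NumJobsClass_\PhaseIdxSecond-1$. By monotonicity of the floor, its start time satisfies $\lfloor (\NumJobsClass_\PhaseIdxSecond-1)\sliceLenP{\PhaseIdxSecond}/\paral_\PhaseIdxSecond\rfloor\le\lfloor \NumJobsClass_\PhaseIdxSecond\sliceLenP{\PhaseIdxSecond}/\paral_\PhaseIdxSecond\rfloor$. Every job ends no later than $\sliceLenP{\PhaseIdxSecond}$ rounds after it starts, so the phase lasts at most $\lfloor \NumJobsClass_\PhaseIdxSecond\sliceLenP{\PhaseIdxSecond}/\paral_\PhaseIdxSecond\rfloor+\sliceLenP{\PhaseIdxSecond}$ rounds. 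An empty class contributes a nonnegative term ($0+\sliceLenP{\PhaseIdxSecond}$), so the bound still holds in that case. Hence phase $\PhaseIdx$ begins no later than $\ClassPrefix_\PhaseIdx$, and each of its $\NumJobsClass_\PhaseIdx$ jobs has completion time at most $\ClassPrefix_\PhaseIdx$ plus its relative completion time. Summing over the jobs of phase $\PhaseIdx$ and then over all phases $\PhaseIdx$ yields $\GBA\le\UpperGBA$.

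\emph{Feasibility, the expected obstacle.} The only real subtlety is showing that the schedule is feasible. This matters both for the lemma to be about a valid schedule and to confirm that jobs are never killed. Inside a phase, \Cref{lem:staggered:peak-memory} shows that the memory used by a finite prefix of the infinite {\SPS} schedule never exceeds $\MemPeak(\paral_\PhaseIdx,\sliceLenP{\PhaseIdx},\initialLen)\le\kvmem$. Jobs that complete early only release memory, since their usage is dominated by running for the full slice. Because phases do not overlap, memory usage in one phase never adds to that of another. One must also check that $\paral_\PhaseIdx\ge 1$. This holds because $\sliceLenP{\PhaseIdx}\le\kvmem-\initialLen$ and, for $\paral=1$, $\MemPeak(1,\sliceLen,\initialLen)=\initialLen+\sliceLen\le\kvmem$.
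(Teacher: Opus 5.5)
Your proposal is correct and follows the paper's own argument. Phases run sequentially, so the start of phase $\PhaseIdx$ is bounded by the summed phase durations $\ClassPrefix_\PhaseIdx$, and each job's within-phase completion time is at most its {\SPS} start offset plus $\sliceLenP{\PhaseIdx}$, which sums to $\Ladder_\PhaseIdx$. Your extra checks — integrality giving $\responseLen_j \le \sliceLenP{\PhaseIdx}$, the floor bound on the last job's start, $\paral_\PhaseIdx \ge 1$, and memory feasibility — make explicit details that the paper leaves implicit.
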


\xhdr{Step 4: Combining Bounds and Optimizing the Scaling Factor $\responseScalar$.}  
By combining the upper bound on {\GBA} with the lower bound on {\OPT}---using the memory-time area framework developed above---and leveraging the geometric progression of slice lengths, we obtain the approximation guarantee stated in \Cref{thm:GBA}.  
Optimizing the scaling factor $\responseScalar$ yields the constants given in \Cref{thm:GBA}: approximately $10.67$ in the general case and $6.75$ in the large-memory regime.

This structured approach not only establishes a constant-factor approximation guarantee for {\GBA}, but also provides the analytical foundation for the competitive analysis of the non-clairvoyant {\GSA} in Section~\ref{sec:algorithms}.

\subsection{Proofs in the Analysis of {\GBA}}
\label{sec:GBA:missing proofs}

In this section, we include all the proofs of the technical lemmas and then establish \Cref{thm:GBA}.

\lemAreaGbaOptLB*

\begin{proof}[Proof of \Cref{lem:GBA:phase-area-lower-bound}]
Following the argument in \Cref{lem:area-lower-bound}, let $C_1 \leq C_2 \leq \dots \leq C_{\NumberJobs}$ denote the completion times of the $\NumberJobs$ jobs in an optimal schedule. By the memory-time area accounting principle, the total area consumed by the first $i$ jobs is at least $\sum\nolimits_{j=1}^i \Area(\responseLen_j,\initialLen)$, and this cannot exceed $\kvmem \cdot C_i$. Hence,
\begin{align*}
C_i \;\geq\; \frac{1}{\kvmem} \sum\nolimits_{j=1}^i \Area(\responseLen_j,\initialLen).
\end{align*}
Summing over all $i$, we obtain
\begin{align*}
\OPT = \sum\nolimits_{i=1}^{\NumberJobs} C_i 
\;\geq\; \frac{1}{\kvmem} \sum\nolimits_{i=1}^{\NumberJobs} \sum\nolimits_{j=1}^i \Area(\responseLen_j,\initialLen)
\;=\; \frac{1}{\kvmem} \sum\nolimits_{j=1}^{\NumberJobs} (\NumberJobs - j + 1) \cdot \Area(\responseLen_j,\initialLen).
\end{align*}
Since each job $j$ belongs to some class $\PhaseIdx_j$, and by construction $\responseLen_j > \sliceLowerBound \responseScalar^{\PhaseIdx_j - 1}$, we have $\Area(\responseLen_j,\initialLen) \geq \AreaLB_{\PhaseIdx_j}$. Therefore,
\begin{align*}
\OPT \;\geq\; \frac{1}{\kvmem} \sum\nolimits_{j=1}^{\NumberJobs} (\NumberJobs - j + 1) \cdot \AreaLB_{\PhaseIdx_j}.
\end{align*}
To minimize the right-hand side, the jobs should be ordered in non-decreasing order of $\AreaLB_{\PhaseIdx_j}$. Because $\AreaLB_{\PhaseIdx}$ is strictly increasing in $\PhaseIdx$, this is achieved by processing all jobs in class $0$ first, then class $1$, and so on up to class $\MaxPhaseIdx$. Under this ordering, the jobs in class $\PhaseIdx$ occupy positions $j = \NumJobsGt{\PhaseIdx} + 1$ through $j = \NumJobsGe{ \PhaseIdx}$, and for each such job, the coefficient $(\NumberJobs - j + 1)$ equals $\NumJobsGt{\PhaseIdx} + i$, where $i \in \{1, \dots, \NumJobsClass_\PhaseIdx\}$ indexes the job within its class.
Thus,
\begin{align*}
\OPT 
&\geq \sum\nolimits_{\PhaseIdx} \sum\nolimits_{i=1}^{\NumJobsClass_\PhaseIdx} \left( \NumJobsGt{\PhaseIdx} + i \right) \cdot \frac{\AreaLB_{\PhaseIdx}}{\kvmem} 
= \sum\nolimits_{\PhaseIdx} \left( \NumJobsGt{\PhaseIdx} \NumJobsClass_\PhaseIdx \cdot \frac{\AreaLB_{\PhaseIdx}}{\kvmem} + \frac{\AreaLB_{\PhaseIdx}}{\kvmem} \cdot \frac{\NumJobsClass_\PhaseIdx(\NumJobsClass_\PhaseIdx + 1)}{2} \right)
\\
&= \BetweenClasses + \WithinClass,
\end{align*}
which completes the proof.
\end{proof}

\lemGbaPackEffciency*
\begin{proof}[Proof of \Cref{lem:GBA:per-class-refined-beta}]
By the maximality of $\paral_\PhaseIdx$ (see \Cref{lem:staggered:max-parallelism}), we have
\begin{align*}
\kvmem 
&< \initialLen(\paral_\PhaseIdx + 1) + \frac{\sliceLenP{\PhaseIdx}(\paral_\PhaseIdx + 1) + \sliceLenP{\PhaseIdx} + (\paral_\PhaseIdx + 1) - 1}{2} \nonumber 
\leq \initialLen \paral_\PhaseIdx + \initialLen + \frac{\sliceLenP{\PhaseIdx} \paral_\PhaseIdx}{2} + \sliceLenP{\PhaseIdx} + \frac{\paral_\PhaseIdx}{2}.
\end{align*}
Dividing both sides by $\AreaLB_{\PhaseIdx}$ and multiplying by $\sliceLenP{\PhaseIdx}/\paral_\PhaseIdx$ yields
\begin{align*}
\frac{\sliceLenP{\PhaseIdx}}{\paral_\PhaseIdx} \cdot \frac{\kvmem}{\AreaLB_{\PhaseIdx}}
<
\frac{
    \sliceLenP{\PhaseIdx} \initialLen \left(1 + \frac{1}{\paral_\PhaseIdx}\right)
    + \frac{\sliceLenP{\PhaseIdx}^2}{2} \left(1 + \frac{2}{\paral_\PhaseIdx}\right)
    + \frac{\sliceLenP{\PhaseIdx}}{2}
}{
    \AreaLB_{\PhaseIdx}
}.
\end{align*}
Let $y \triangleq \sliceLowerBound \responseScalar^{\PhaseIdx - 1}$, so that $\AreaLB_{\PhaseIdx} = \Area(y,\initialLen) = \initialLen y + \frac{y(y+1)}{2}$.  
Since $\sliceLenP{\PhaseIdx} \leq \responseScalar y$, substituting this upper bound and dividing numerator and denominator by $y$ gives
\begin{align*}
\frac{\sliceLenP{\PhaseIdx}}{\paral_\PhaseIdx} \cdot \frac{\kvmem}{\AreaLB_{\PhaseIdx}}
\;\leq\;
\TmpFunc_{\paral_\PhaseIdx}(y),
\end{align*}
where for any $\paral \geq 1$ and $y > 0$,
\begin{align*}
\TmpFunc_\paral(y) \;\triangleq\;
\frac{
    \responseScalar \initialLen \left(1 + \frac{1}{\paral}\right)
    + \responseScalar^2 y \left(\frac{1}{2} + \frac{1}{\paral}\right)
    + \frac{\responseScalar}{2}
}{
    \initialLen + \frac{y + 1}{2}
}.
\end{align*}
By computing its derivative, it can be verified that $\TmpFunc_\paral(y)$ is non-decreasing in $y$.
Recall that the initialization ensures $\sliceLowerBound \responseScalar^{\MaxPhaseIdx} = \kvmem - \initialLen$. Therefore, for any $\PhaseIdx \leq \MaxPhaseIdx$,
\begin{align*}
y = \sliceLowerBound \responseScalar^{\PhaseIdx - 1}
\;\leq\;
\frac{\kvmem - \initialLen}{\responseScalar}
\;\triangleq\; y_0.
\end{align*}
Since $\TmpFunc_{\paral_\PhaseIdx}$ is non-decreasing, $\TmpFunc_{\paral_\PhaseIdx}(y) \leq \TmpFunc_{\paral_\PhaseIdx}(y_0)$. Substituting and simplifying, we obtain
\begin{align*}
\TmpFunc_{\paral_\PhaseIdx}(y_0)
= \frac{
    \responseScalar \left( \frac{\initialLen}{2} + \kvmem \left( \frac{1}{2} + \frac{1}{\paral_\PhaseIdx} \right) + \frac{1}{2} \right)
}{
    \initialLen + \frac{\kvmem - \initialLen}{2 \responseScalar} + \frac{1}{2}
}.
\end{align*}
Dividing numerator and denominator by $\kvmem$ and letting $\promptRatio \triangleq \initialLen / \kvmem$, this becomes
\begin{align*}
\TmpFunc_{\paral_\PhaseIdx}(y_0)
= \frac{\responseScalar}{\paral_\PhaseIdx} \cdot
\frac{
    (\paral_\PhaseIdx + 2) + \paral_\PhaseIdx \promptRatio + \dfrac{\paral_\PhaseIdx}{\kvmem}
}{
    2 \promptRatio + \dfrac{1 - \promptRatio}{\responseScalar} + \dfrac{1}{\kvmem}
}
\triangleq \Psi(\responseScalar, \promptRatio, \paral_\PhaseIdx, \kvmem)
\end{align*}
Finally, by computing its derivative, it can be verified that auxiliary function $\Psi(\responseScalar, \promptRatio, \paral_\PhaseIdx, \kvmem)$ is
(i) strictly decreasing in $\promptRatio$, and
(ii) strictly increasing in $\kvmem$.
Therefore, for any
$\promptRatio \geq 0$
 and $\kvmem \geq 1$, we have
\begin{align*}
\Psi(\responseScalar, \promptRatio, \paral_\PhaseIdx, \kvmem)
\;\leq\;
\Psi(\responseScalar, 0, \paral_\PhaseIdx, \infty)
\;=\;
\responseScalar^2 \cdot \left(1 + \frac{2}{\paral_\PhaseIdx}\right).
\end{align*}
Substituting this bound into the previous inequality gives
\begin{align*}
\frac{\sliceLenP{\PhaseIdx}}{\paral_\PhaseIdx}
\;\leq\;
\responseScalar^2\cdot  \left(1 + \frac{2}{\paral_\PhaseIdx}\right) \cdot \frac{\AreaLB_{\PhaseIdx}}{\kvmem},
\end{align*}
which completes the proof.
\end{proof}

\lemAreaGbaUB*
\begin{proof}[Proof of \Cref{lem:GBA:phase-area-upper-bound}]
By construction, {\GBA} processes classes sequentially in increasing order of $\PhaseIdx$. All jobs in class $\PhaseIdx$ begin execution only after all jobs in classes $\PhaseIdxSecond < \PhaseIdx$ have completed. 
For any job in class $\PhaseIdx$, its completion time equals:
\begin{align*}
\underbrace{\text{(total time spent on all prior classes)}}_{\leq\, \ClassPrefix_\PhaseIdx}
\;+\;
\underbrace{\text{(its completion time within class $\PhaseIdx$ under \textsc{SPS})}}_{\leq\, \sliceLenP{\PhaseIdx} + \left\lfloor \frac{i \cdot \sliceLenP{\PhaseIdx}}{\paral_\PhaseIdx} \right\rfloor},
\end{align*}
where $i \in \{0, \dots, \NumJobsClass_\PhaseIdx - 1\}$ indexes the job within the class.
Summing over all $\NumJobsClass_\PhaseIdx$ jobs in class $\PhaseIdx$, the total contribution to flow time is at most
$\NumJobsClass_\PhaseIdx \cdot \ClassPrefix_\PhaseIdx + \Ladder_\PhaseIdx$.
Summing over all classes $\PhaseIdx$ yields the desired bound:
\begin{align*}
\GBA \leq \sum\nolimits_{\PhaseIdx} \left( \NumJobsClass_\PhaseIdx \cdot \ClassPrefix_\PhaseIdx + \Ladder_\PhaseIdx \right)
\end{align*}
which completes the proof as desired.
\end{proof}

We now ready to prove \Cref{thm:GBA}.

\begin{proof}[Proof of \Cref{thm:GBA}]
We bound the total flow time of \textsc{GBA} by analyzing its decomposition into within-class and between-class components. Recall from \Cref{lem:GBA:phase-area-upper-bound} that
\begin{align*}
\GBA \leq \UpperGBA = \sum\nolimits_{\PhaseIdx} \left( \NumJobsClass_\PhaseIdx \cdot \ClassPrefix_\PhaseIdx + \Ladder_\PhaseIdx \right).
\end{align*}
We bound the two terms separately.

\xhdr{Within-class ladder term.}
From the definition in Eqn.~\eqref{eq:Qk-stag-def},
\begin{align*}
\Ladder_\PhaseIdx
= \NumJobsClass_\PhaseIdx \sliceLenP{\PhaseIdx} + \sum\nolimits_{i=0}^{\NumJobsClass_\PhaseIdx - 1} \left\lfloor \frac{i \sliceLenP{\PhaseIdx}}{\paral_\PhaseIdx} \right\rfloor
\leq \NumJobsClass_\PhaseIdx \sliceLenP{\PhaseIdx} + \frac{\sliceLenP{\PhaseIdx}}{\paral_\PhaseIdx} \cdot \frac{\NumJobsClass_\PhaseIdx(\NumJobsClass_\PhaseIdx - 1)}{2}.
\end{align*}
By \Cref{lem:GBA:per-class-refined-beta}, we obtain
\begin{align*}
\sum\nolimits_{\PhaseIdx} \Ladder_\PhaseIdx
\leq \sum\nolimits_{\PhaseIdx} \NumJobsClass_\PhaseIdx \sliceLenP{\PhaseIdx}
+ \sum \nolimits_{\PhaseIdx} \responseScalar^2\cdot  \left(1+\frac{2}{\paral_\PhaseIdx}\right) \cdot \frac{\AreaLB_{\PhaseIdx}}{\kvmem} \cdot \frac{\NumJobsClass_\PhaseIdx(\NumJobsClass_\PhaseIdx - 1)}{2}.
\end{align*}
Noting that $\frac{n(n-1)}{2} < \frac{n(n+1)}{2}$ and $\minparal = \min_\PhaseIdx \paral_\PhaseIdx$ (as defined in the theorem), we obtain
\begin{align*}
\sum\nolimits_{\PhaseIdx} \Ladder_\PhaseIdx
\leq \sum\nolimits_{\PhaseIdx} \NumJobsClass_\PhaseIdx \sliceLenP{\PhaseIdx}
+ \responseScalar^2\cdot \left(1+\frac{2}{\minparal}\right) \cdot \WithinClass~.
\end{align*}
Since each job $i$ has response length $\responseLen_i$ and $\sliceLenP{\PhaseIdx} \leq \responseScalar \responseLen_i$ for all $i \in \JobsClass_\PhaseIdx$, we have
\begin{align*}
\sum\nolimits_{\PhaseIdx} \NumJobsClass_\PhaseIdx \sliceLenP{\PhaseIdx} \leq \responseScalar \sum\nolimits_{i=1}^{\NumberJobs} \responseLen_i \leq \responseScalar \cdot \OPT,
\end{align*}
because the optimal schedule must spend at least $\responseLen_i$ rounds on each job. Hence,
\begin{align*}
\sum\nolimits_{\PhaseIdx} \Ladder_\PhaseIdx \leq \responseScalar^2\cdot \left(1+\frac{2}{\minparal}\right) \cdot \WithinClass + \responseScalar \cdot \OPT.
\end{align*}

\xhdr{Between-class prefix term.}
By definition in Eqn.~\eqref{eq:Sk-stag-def} and summation exchange,
\begin{align*}
\sum\nolimits_{\PhaseIdx} \NumJobsClass_\PhaseIdx \ClassPrefix_\PhaseIdx
= \sum\nolimits_{\PhaseIdxSecond} \NumJobsGt{\PhaseIdxSecond} \left( \sliceLenP{\PhaseIdxSecond} + \left\lfloor \frac{\NumJobsClass_\PhaseIdxSecond \sliceLenP{\PhaseIdxSecond}}{\paral_\PhaseIdxSecond} \right\rfloor \right)
\leq (\mathrm{I}) + (\mathrm{II}),
\end{align*}
where
\begin{align*}
(\mathrm{I}) = \sum\nolimits_{\PhaseIdxSecond} \NumJobsGt{\PhaseIdxSecond} \sliceLenP{\PhaseIdxSecond},
\quad
(\mathrm{II}) = \sum\nolimits_{\PhaseIdxSecond} \NumJobsGt{\PhaseIdxSecond} \cdot \frac{\NumJobsClass_\PhaseIdxSecond \sliceLenP{\PhaseIdxSecond}}{\paral_\PhaseIdxSecond}.
\end{align*}
To bound $(\mathrm{I})$, recall that $\sliceLenHatP{\PhaseIdx} = \sliceLowerBound \responseScalar^\PhaseIdx$ and $\sliceLenP{\PhaseIdx} = \lfloor \sliceLenHatP{\PhaseIdx} \rfloor$. Then for any $\PhaseIdx$,
\begin{align}
\label{eq:geometric sum}
\sum\nolimits_{\PhaseIdxSecond < \PhaseIdx} \sliceLenP{\PhaseIdxSecond}
\leq \sum\nolimits_{\PhaseIdxSecond < \PhaseIdx} \sliceLenHatP{\PhaseIdxSecond}
= \frac{\sliceLenHatP{\PhaseIdx} - \sliceLowerBound}{\responseScalar - 1}
\leq \frac{\sliceLenP{\PhaseIdx} + 1 - 1}{\responseScalar - 1}
= \frac{\sliceLenP{\PhaseIdx}}{\responseScalar - 1},
\end{align}
since $\sliceLowerBound \geq 1$ and $\sliceLenP{\PhaseIdx} \geq \sliceLenHatP{\PhaseIdx} - 1$. Therefore,
\begin{align*}
(\mathrm{I}) = \sum\nolimits_{\PhaseIdx} \NumJobsClass_\PhaseIdx \sum\nolimits_{\PhaseIdxSecond < \PhaseIdx} \sliceLenP{\PhaseIdxSecond}
\leq \frac{1}{\responseScalar - 1} \sum\nolimits_{\PhaseIdx} \NumJobsClass_\PhaseIdx \sliceLenP{\PhaseIdx}
\leq \frac{\responseScalar}{\responseScalar - 1} \cdot \OPT.
\end{align*}
For $(\mathrm{II})$, applying \Cref{lem:GBA:per-class-refined-beta} gives
\begin{align*}
(\mathrm{II}) \leq \sum\nolimits_{\PhaseIdxSecond} \responseScalar^2\cdot \left(1+\frac{2}{\paral_\PhaseIdx}\right) \cdot  \NumJobsGt{\PhaseIdxSecond} \cdot \frac{\AreaLB_{\PhaseIdxSecond}}{\kvmem} \cdot \NumJobsClass_\PhaseIdxSecond
\leq \responseScalar^2\cdot \left(1+\frac{2}{\minparal}\right) \cdot \BetweenClasses.
\end{align*}
Thus,
\begin{align*}
\sum\nolimits_{\PhaseIdx} \NumJobsClass_\PhaseIdx \ClassPrefix_\PhaseIdx
\leq \responseScalar^2\cdot \left(1+\frac{2}{\minparal}\right) \cdot \BetweenClasses + \frac{\responseScalar}{\responseScalar - 1} \cdot \OPT.
\end{align*}

\xhdr{Combining the bounds.}
Combining all the pieces above and using $\OPT \geq \WithinClass + \BetweenClasses$ from \Cref{lem:GBA:phase-area-lower-bound}, we obtain
\begin{align*}
\GBA\leq \UpperGBA
&\leq \responseScalar^2\cdot \left(1+\frac{2}{\minparal}\right) \cdot (\WithinClass + \BetweenClasses) + \left( \responseScalar + \frac{\responseScalar}{\responseScalar - 1} \right) \cdot \OPT
\\
&\leq \left( \responseScalar^2\cdot \left(1+\frac{2}{\minparal}\right) + \responseScalar + \frac{\responseScalar}{\responseScalar - 1} \right) \cdot \OPT.
\end{align*}
as claimed.
To obtain the numerical constants, we optimize over $\responseScalar > 1$:
\begin{itemize}
    \item In the worst case ($\minparal = 1$), the bound becomes $3\responseScalar^2 + \responseScalar + \frac{\responseScalar}{\responseScalar - 1}$. Minimizing this expression yields $\responseScalar = 4/3$ and a value of ${32}/{3} \approx 10.6667$.
    \item In the large-memory regime ($\minparal \to \infty$), the bound reduces to $\responseScalar^2 + \responseScalar + \frac{\responseScalar}{\responseScalar - 1}$. Minimization gives $\responseScalar = 3/2$ and a value of $6.75$.
\end{itemize}
This completes the proof of \Cref{thm:GBA}.
\end{proof}

\section{Competitive Ratio Analysis of Geometric Slicing Algorithm}
\label{sec:algorithms}
In this section, we continue to analyze the performance of {\GeoSALG} ({\GSA}) introduced in \Cref{sec:algorithm description}.

The analysis of {\GSA} proceeds by leveraging the structural similarity between the non-clairvoyant {\GSA} and its clairvoyant counterpart {\GBA} (analyzed in \Cref{sec:clairvoyant greedy}). The core insight is that, despite operating without knowledge of job response lengths, {\GSA}’s geometric phase design ensures that its extra cost---due to processing all remaining jobs in every phase---can be bounded by a constant factor times the cost of {\GBA}. This yields a two-step competitive analysis: first bounding {\GSA} against {\GBA}, then combining with {\GBA}'s approximation guarantee against $\OPT$.

\subsection{Key Steps and Technical Lemmas}

\label{sec:GSA:key steps}

To analyze the flow time of {\GSA}, we adopt a decomposition analogous to that used for {\GBA} in \Cref{sec:GBA:key steps}. The key distinction lies in how jobs are scheduled across phases: whereas {\GBA} processes only class-$\PhaseIdx$ jobs in phase $\PhaseIdx$, {\GSA}---operating without knowledge of response lengths---processes \emph{all remaining jobs} in every phase. This non-clairvoyant behavior inflates the flow time, which we decompose into three components:
(i) the \emph{ladder term} $\Ladder_\PhaseIdx$, identical to that in {\GBA} (see Eqn.~\eqref{eq:Qk-stag-def}), capturing internal completion times within a class;
(ii) the \emph{spillover term} $\Spillover_\PhaseIdx$, reflecting the extra runtime incurred by jobs that span multiple phases; and
(iii) the \emph{phase-prefix term} $\PhasePrefix_\PhaseIdx$, representing waiting time due to all prior phases, now amplified by the larger active batches in {\GSA}.

We reuse the notation from the {\GBA} analysis (see \Cref{sec:GBA:key steps}):  
$\JobsClass_\PhaseIdx$ and $\NumJobsClass_\PhaseIdx$ denote the set and number of jobs in class $\PhaseIdx$;  
$\sliceLenP{\PhaseIdx} = \lfloor \sliceLenHatP{\PhaseIdx} \rfloor = \lfloor \sliceLowerBound \responseScalar^\PhaseIdx \rfloor$ is the integer slice length used by {\SPS} in phase $\PhaseIdx$;  
$\paral_\PhaseIdx$ is the corresponding degree of parallelism;
and $\NumJobsGt{\PhaseIdx}$, $\NumJobsGe{ \PhaseIdx}$ are the numbers of jobs in classes strictly greater than $\PhaseIdx$, and in classes greater than or equal to $\PhaseIdx$, respectively.
The key steps are as follows:

\xhdr{Step 1: Upper Bound on {\GSA} via Packing Efficiency (\Cref{lem:GSA:phase-area-upper-bound}).}  
For {\GSA}, every unfinished job participates in all phases up to its completion. To capture this, we define two additional quantities for each phase $\PhaseIdx$:
the \emph{phase spillover cost}
\begin{align}
\label{eq:Dk-stag-def}
\Spillover_\PhaseIdx \;\triangleq\; \Big\lceil \frac{\NumJobsGt{\PhaseIdx} \cdot \sliceLenP{\PhaseIdx}}{\paral_\PhaseIdx} \Big\rceil,
\end{align}
which upper bounds the extra rounds needed to accommodate jobs from later classes, and the \emph{phase-prefix cost}
\begin{align}
\label{eq:Tk-stag-def}
\PhasePrefix_\PhaseIdx \;\triangleq\; \sum\nolimits_{\PhaseIdxSecond < \PhaseIdx} \left( \Big\lfloor \frac{\NumJobsGe{\PhaseIdxSecond} \cdot \sliceLenP{\PhaseIdxSecond}}{\paral_\PhaseIdxSecond} \Big\rfloor + \sliceLenP{\PhaseIdxSecond} \right),
\end{align}
which accounts for the cumulative time spent on all prior phases.
With these definitions, the total flow time of {\GSA} is bounded as follows.

\begin{restatable}[{\GSA} Upper Bound via Packing Efficiency]{lemma}{lemAreaGsaUB}
\label{lem:GSA:phase-area-upper-bound}
The total flow time of {\GSA} satisfies
\begin{align*}
\GSA \;\leq\; \sum\nolimits_{\PhaseIdx} \NumJobsClass_\PhaseIdx \left( \PhasePrefix_\PhaseIdx + \Spillover_\PhaseIdx \right) + \sum\nolimits_{\PhaseIdx} \Ladder_\PhaseIdx,
\end{align*}
where $\Ladder_\PhaseIdx$, $\Spillover_\PhaseIdx$, and $\PhasePrefix_\PhaseIdx$ are as defined in Eqn.~\eqref{eq:Qk-stag-def}, Eqn.~\eqref{eq:Dk-stag-def}, and Eqn.~\eqref{eq:Tk-stag-def}, respectively.
\end{restatable}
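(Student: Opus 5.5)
The plan is to bound the completion time of each job individually and then sum over classes, following the proof of \Cref{lem:GBA:phase-area-upper-bound}. The first step is to fix the structure of {\GSA}. In phase $\PhaseIdxSecond$ the remaining set $\remainJobs_{\PhaseIdxSecond}$ consists exactly of the jobs in classes $\ge \PhaseIdxSecond$, because a job of class $\PhaseIdxSecond'$ satisfies $\responseLen_i \le \sliceLenHatP{\PhaseIdxSecond'}$. Since $\responseLen_i$ is an integer, this gives $\responseLen_i \le \sliceLenP{\PhaseIdxSecond'}$, so the job completes in phase $\PhaseIdxSecond'$. Conversely, $\responseLen_i > \sliceLenHatP{\PhaseIdxSecond'-1} \ge \sliceLenP{\PhaseIdxSecond'-1}$, so the job is killed in every earlier phase. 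Consequently $|\remainJobs_{\PhaseIdxSecond}| = \NumJobsGe{\PhaseIdxSecond}$.

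Next I bound the duration of each phase. Phase $\PhaseIdxSecond$ invokes {\SPS} on $\NumJobsGe{\PhaseIdxSecond}$ jobs, and the last job starts at $\lfloor (\NumJobsGe{\PhaseIdxSecond}-1)\sliceLenP{\PhaseIdxSecond}/\paral_{\PhaseIdxSecond}\rfloor$. Every job occupies at most $\sliceLenP{\PhaseIdxSecond}$ rounds, possibly fewer if it completes early. So the phase ends by $\lfloor \NumJobsGe{\PhaseIdxSecond}\sliceLenP{\PhaseIdxSecond}/\paral_{\PhaseIdxSecond}\rfloor + \sliceLenP{\PhaseIdxSecond}$ rounds after it starts. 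Summing over $\PhaseIdxSecond < \PhaseIdx$ shows that phase $\PhaseIdx$ begins no later than $\PhasePrefix_\PhaseIdx$. I also need the phases to run back to back with no overlap, so that memory feasibility within a phase is governed by \Cref{lem:staggered:peak-memory}. Early completions only reduce memory use, so feasibility is preserved.

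The step I expect to be the main obstacle is the within-phase ordering. The natural choice is to order $\remainJobs_\PhaseIdx$ inside phase $\PhaseIdx$ so that the $\NumJobsGt{\PhaseIdx}$ later-class jobs come first and the $\NumJobsClass_\PhaseIdx$ class-$\PhaseIdx$ jobs follow. This is the worst case, and the bound has to hold for any order the algorithm uses. Under this ordering, the class-$\PhaseIdx$ job with within-class index $i$ has overall index $\NumJobsGt{\PhaseIdx}+i$. Its start time is then
\[
\left\lfloor \frac{(\NumJobsGt{\PhaseIdx}+i)\sliceLenP{\PhaseIdx}}{\paral_\PhaseIdx}\right\rfloor \le \Big\lceil \frac{\NumJobsGt{\PhaseIdx}\sliceLenP{\PhaseIdx}}{\paral_\PhaseIdx}\Big\rceil + \Big\lfloor \frac{i\sliceLenP{\PhaseIdx}}{\paral_\PhaseIdx}\Big\rfloor,
\]
which uses $\lfloor a+b\rfloor \le \lceil a\rceil + \lfloor b\rfloor$. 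The job's completion time within the phase is therefore at most $\Spillover_\PhaseIdx + \lfloor i\sliceLenP{\PhaseIdx}/\paral_\PhaseIdx\rfloor + \sliceLenP{\PhaseIdx}$.

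To cover an arbitrary order, I will argue that any order places each class-$\PhaseIdx$ job at some index. The start-time map is monotone in the index. The sum over class-$\PhaseIdx$ jobs of $\lfloor j\sliceLenP{\PhaseIdx}/\paral_\PhaseIdx\rfloor$, where $j$ ranges over their $\NumJobsClass_\PhaseIdx$ distinct indices in $\{0,\dots,\NumJobsGe{\PhaseIdx}-1\}$, is maximized when those indices are the top $\NumJobsClass_\PhaseIdx$ ones. Thus the aggregate bound holds even though an individual job's bound may not.

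Finally, I add the start time of phase $\PhaseIdx$, which is at most $\PhasePrefix_\PhaseIdx$, and sum over $i = 0,\dots,\NumJobsClass_\PhaseIdx-1$. This gives a total of at most $\NumJobsClass_\PhaseIdx(\PhasePrefix_\PhaseIdx+\Spillover_\PhaseIdx) + \Ladder_\PhaseIdx$ for class $\PhaseIdx$, and summing over $\PhaseIdx$ yields the stated bound.
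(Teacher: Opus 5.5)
Your proposal is correct and uses the same per-job decomposition as the paper: time in prior phases bounded by $T_p$, spillover bounded by $\Delta_p$, and within-class offsets summing to $Q_p$. It is more careful than the paper's argument, which only asserts these bounds. Four details the paper leaves implicit are made explicit in your version: the identification of the remaining set with classes $\ge p$, the phase-duration bound, the inequality $\lfloor a+b\rfloor \le \lceil a\rceil + \lfloor b\rfloor$, and the monotonicity argument that covers an arbitrary within-phase order.
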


\xhdr{Step 2: Upper Bound on Spillover Cost (\Cref{lem:spill-stag}).}  
We show that the total spillover cost incurred by {\GSA} is bounded by the prefix cost of its clairvoyant counterpart {\GBA} plus one ladder term. Specifically, we establish the following bound.
\begin{restatable}[Upper Bound on Spillover Cost of {\GSA}]{lemma}{lemGsaSpillover}
\label{lem:spill-stag}
The total spillover cost of {\GSA} satisfies
\begin{align*}
\sum\nolimits_{\PhaseIdx} \NumJobsClass_\PhaseIdx \cdot \Spillover_\PhaseIdx
\;\leq\;
\sum\nolimits_{\PhaseIdx} \NumJobsClass_\PhaseIdx \cdot \ClassPrefix_\PhaseIdx
\;+\;
\sum\nolimits_{\PhaseIdx} \Ladder_\PhaseIdx.
\end{align*}
\end{restatable}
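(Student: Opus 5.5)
The plan is to expand both sides of the inequality in \Cref{lem:spill-stag} into sums over ordered pairs of classes $\PhaseIdx<\PhaseIdxSecond$. Then compare them class by class: for each fixed $\PhaseIdx$, take its contribution to the left-hand side, and match it against the part of $\sum\nolimits_{\PhaseIdxSecond}\NumJobsClass_{\PhaseIdxSecond}\ClassPrefix_{\PhaseIdxSecond}$ in which $\PhaseIdx$ appears as an \emph{earlier} class, together with the ladder term $\Ladder_\PhaseIdx$.

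First, I swap the order of summation in the class-prefix sum, exactly as in the proof of \Cref{thm:GBA}:
\begin{align*}
\sum\nolimits_{\PhaseIdxSecond} \NumJobsClass_{\PhaseIdxSecond} \ClassPrefix_{\PhaseIdxSecond}
= \sum\nolimits_{\PhaseIdx} \NumJobsGt{\PhaseIdx} \left( \Big\lfloor \frac{\NumJobsClass_\PhaseIdx \sliceLenP{\PhaseIdx}}{\paral_\PhaseIdx} \Big\rfloor + \sliceLenP{\PhaseIdx} \right).
\end{align*}
The claim then follows from the per-class inequality
\begin{align*}
\NumJobsClass_\PhaseIdx \Big\lceil \frac{\NumJobsGt{\PhaseIdx} \sliceLenP{\PhaseIdx}}{\paral_\PhaseIdx} \Big\rceil
\;\le\; \NumJobsGt{\PhaseIdx} \left( \Big\lfloor \frac{\NumJobsClass_\PhaseIdx \sliceLenP{\PhaseIdx}}{\paral_\PhaseIdx} \Big\rfloor + \sliceLenP{\PhaseIdx} \right) + \Ladder_\PhaseIdx ,
\end{align*}
summed over all $\PhaseIdx$.

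To prove the per-class inequality, I handle the rounding on each side separately. On the left, removing the ceiling costs at most one unit per job, so the left side is at most $\NumJobsClass_\PhaseIdx \NumJobsGt{\PhaseIdx}\sliceLenP{\PhaseIdx}/\paral_\PhaseIdx + \NumJobsClass_\PhaseIdx$. On the right, removing the floor costs at most one unit per later-class job. Since $\sliceLenP{\PhaseIdx} = \lfloor \sliceLowerBound\responseScalar^\PhaseIdx\rfloor \ge 1$ (because $\sliceLowerBound \ge 1$), we get
\begin{align*}
\NumJobsGt{\PhaseIdx}\left(\Big\lfloor \frac{\NumJobsClass_\PhaseIdx \sliceLenP{\PhaseIdx}}{\paral_\PhaseIdx} \Big\rfloor + \sliceLenP{\PhaseIdx}\right)
\ge \NumJobsGt{\PhaseIdx}\left(\frac{\NumJobsClass_\PhaseIdx \sliceLenP{\PhaseIdx}}{\paral_\PhaseIdx} - 1 + \sliceLenP{\PhaseIdx}\right)
\ge \frac{\NumJobsClass_\PhaseIdx\NumJobsGt{\PhaseIdx}\sliceLenP{\PhaseIdx}}{\paral_\PhaseIdx}.
\end{align*}
So the symmetric product terms $\NumJobsClass_\PhaseIdx\NumJobsGt{\PhaseIdx}\sliceLenP{\PhaseIdx}/\paral_\PhaseIdx$ cancel. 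The only leftover is the additive $\NumJobsClass_\PhaseIdx$ coming from the ceiling on the left. This is absorbed by the ladder term, because by \eqref{eq:Qk-stag-def} we have $\Ladder_\PhaseIdx \ge \NumJobsClass_\PhaseIdx \sliceLenP{\PhaseIdx} \ge \NumJobsClass_\PhaseIdx$. Degenerate cases ($\NumJobsClass_\PhaseIdx = 0$ or $\NumJobsGt{\PhaseIdx}=0$) are immediate: if $\NumJobsClass_\PhaseIdx=0$ the left side vanishes, and if $\NumJobsGt{\PhaseIdx}=0$ the ceiling is zero.

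The only delicate point is the rounding bookkeeping. The ceiling in $\Spillover_\PhaseIdx$ is taken over the aggregate $\NumJobsGt{\PhaseIdx}$, whereas the floor in $\ClassPrefix$ is per class $\NumJobsClass_\PhaseIdx$. The two rounding losses therefore do not pair up directly, and they must be charged to different sources: the $+\NumJobsGt{\PhaseIdx}$ loss from the floor to the $\sliceLenP{\PhaseIdx}\ge1$ term inside $\ClassPrefix$, and the $+\NumJobsClass_\PhaseIdx$ loss from the ceiling to $\Ladder_\PhaseIdx$. I expect this to be the main (though mild) obstacle: checking that both charges go through, and that no term is used twice.
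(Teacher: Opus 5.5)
Your proposal is correct and is essentially the paper's proof: the paper likewise bounds $\lceil N_{>p}\tau_p/k_p\rceil \le N_{>p}\tau_p/k_p + 1$, multiplies $n_p\tau_p/k_p \le \lfloor n_p\tau_p/k_p\rfloor + \tau_p$ (which uses $\tau_p \ge 1$) by $N_{>p}$ and swaps summation to recover $\sum_p n_p S_p$, and absorbs the leftover $\sum_p n_p$ into $\sum_p Q_p$ via $Q_p \ge n_p\tau_p \ge n_p$. Your per-class grouping is only a regrouping of these sums and holds for every $p$; it is not the per-phase inequality $n_p\Delta_p \le n_p S_p + Q_p$, which the paper warns can fail.
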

We emphasize that this inequality holds only in aggregate over all phases; it may be violated for individual phases (i.e., $\NumJobsClass_\PhaseIdx \cdot \Spillover_\PhaseIdx \leq \NumJobsClass_\PhaseIdx \cdot \ClassPrefix_\PhaseIdx + \Ladder_\PhaseIdx$ need not hold for a fixed $\PhaseIdx$).

\xhdr{Step 3: Upper Bound on Inflated Prefix Cost (\Cref{lem:prefix-stag}).}  
We bound the inflated phase-prefix term $\PhasePrefix_\PhaseIdx$ using a novel counting inequality (see \Cref{lem:ceiling}) that relates the ceiling of aggregate job counts to the sum of individual contributions. Combined with the geometric increment of slice lengths, this yields a bound on the total phase-prefix cost of {\GSA} in terms of {\GBA}’s ladder and prefix terms.
\begin{restatable}[Upper Bound on Phase-Prefix Cost of {\GSA}]{lemma}{lemGsaPrefix}
\label{lem:prefix-stag}
For any scaling factor $\responseScalar \in (1, \infty)$, the total phase-prefix cost of {\GSA} satisfies
\begin{align*}
\sum\nolimits_{\PhaseIdx} \NumJobsClass_\PhaseIdx \cdot \PhasePrefix_\PhaseIdx
\;\leq\;
\left(1 + \frac{2}{\responseScalar - 1}\right)
\sum\nolimits_{\PhaseIdx} \NumJobsClass_\PhaseIdx \cdot \ClassPrefix_\PhaseIdx
\;+\;
\frac{2}{\responseScalar - 1}
\sum\nolimits_{\PhaseIdx} \Ladder_\PhaseIdx.
\end{align*}
\end{restatable}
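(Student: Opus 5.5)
The plan is a direct comparison of the two sums after exchanging the order of summation, followed by a charging of the excess to {\GBA}'s prefix and ladder terms using the geometric growth of slice lengths. By the same summation exchange used in the proof of \Cref{thm:GBA},
\begin{align*}
\sum\nolimits_{\PhaseIdx} \NumJobsClass_\PhaseIdx \PhasePrefix_\PhaseIdx = \sum\nolimits_{\PhaseIdxSecond} \NumJobsGt{\PhaseIdxSecond}\Big( \Big\lfloor \tfrac{\NumJobsGe{\PhaseIdxSecond}\sliceLenP{\PhaseIdxSecond}}{\paral_\PhaseIdxSecond}\Big\rfloor + \sliceLenP{\PhaseIdxSecond}\Big).
\end{align*}
The analogous identity holds for $\sum_\PhaseIdx \NumJobsClass_\PhaseIdx \ClassPrefix_\PhaseIdx$, with $\NumJobsClass_\PhaseIdxSecond$ in place of $\NumJobsGe{\PhaseIdxSecond}$. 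Since $\NumJobsGe{\PhaseIdxSecond} = \NumJobsClass_\PhaseIdxSecond + \NumJobsGt{\PhaseIdxSecond}$, I would use $\lfloor a+b\rfloor \le \lfloor a\rfloor + \lceil b\rceil$, or more precisely the counting inequality of \Cref{lem:ceiling}. This gives
\begin{align*}
\sum\nolimits_{\PhaseIdx} \NumJobsClass_\PhaseIdx \PhasePrefix_\PhaseIdx \le \sum\nolimits_{\PhaseIdx} \NumJobsClass_\PhaseIdx \ClassPrefix_\PhaseIdx + E, \qquad E \triangleq \sum\nolimits_{\PhaseIdxSecond} \NumJobsGt{\PhaseIdxSecond}\Big\lceil \tfrac{\NumJobsGt{\PhaseIdxSecond}\sliceLenP{\PhaseIdxSecond}}{\paral_\PhaseIdxSecond}\Big\rceil .
\end{align*}
It then suffices to show $E \le \frac{2}{\responseScalar-1}\big(\sum_\PhaseIdx \NumJobsClass_\PhaseIdx\ClassPrefix_\PhaseIdx + \sum_\PhaseIdx \Ladder_\PhaseIdx\big)$.

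To bound $E$, I would ignore rounding for now and expand the square:
\begin{align*}
\NumJobsGt{\PhaseIdxSecond}^2 = \sum\nolimits_{\PhaseIdx>\PhaseIdxSecond}\big(\NumJobsClass_\PhaseIdx^2 + 2\NumJobsClass_\PhaseIdx\NumJobsGt{\PhaseIdx}\big).
\end{align*}
Exchanging sums again yields
\begin{align*}
E \approx \sum\nolimits_{\PhaseIdx}\big(\NumJobsClass_\PhaseIdx^2 + 2\NumJobsClass_\PhaseIdx\NumJobsGt{\PhaseIdx}\big)\sum\nolimits_{\PhaseIdxSecond<\PhaseIdx}\tfrac{\sliceLenP{\PhaseIdxSecond}}{\paral_\PhaseIdxSecond}.
\end{align*}
Next I would use two facts. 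First, $\paral_\PhaseIdxSecond \ge \paral_\PhaseIdx$ for $\PhaseIdxSecond<\PhaseIdx$: $\MemPeak$ is increasing in the slice length, so a shorter slice admits at least as much parallelism. Second, the geometric bound \eqref{eq:geometric sum} gives $\sum_{\PhaseIdxSecond<\PhaseIdx}\sliceLenP{\PhaseIdxSecond} \le \sliceLenP{\PhaseIdx}/(\responseScalar-1)$. Together these give
\begin{align*}
\sum\nolimits_{\PhaseIdxSecond<\PhaseIdx}\tfrac{\sliceLenP{\PhaseIdxSecond}}{\paral_\PhaseIdxSecond} \le \tfrac{1}{\responseScalar-1}\cdot\tfrac{\sliceLenP{\PhaseIdx}}{\paral_\PhaseIdx}.
\end{align*}
Hence
\begin{align*}
E \lesssim \tfrac{1}{\responseScalar-1}\sum\nolimits_\PhaseIdx\Big(\NumJobsClass_\PhaseIdx^2\tfrac{\sliceLenP{\PhaseIdx}}{\paral_\PhaseIdx} + 2\NumJobsGt{\PhaseIdx}\NumJobsClass_\PhaseIdx\tfrac{\sliceLenP{\PhaseIdx}}{\paral_\PhaseIdx}\Big).
\end{align*}
The second summand is charged to $2\sum_\PhaseIdx \NumJobsClass_\PhaseIdx \ClassPrefix_\PhaseIdx = 2\sum_\PhaseIdx \NumJobsGt{\PhaseIdx}\big(\lfloor \NumJobsClass_\PhaseIdx\sliceLenP{\PhaseIdx}/\paral_\PhaseIdx\rfloor + \sliceLenP{\PhaseIdx}\big)$. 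The extra $+\sliceLenP{\PhaseIdx}$ inside each bracket absorbs the loss from the floor. The first summand is charged to $2\Ladder_\PhaseIdx$, since $\Ladder_\PhaseIdx \ge \NumJobsClass_\PhaseIdx\sliceLenP{\PhaseIdx} + \sum_{i<\NumJobsClass_\PhaseIdx}\lfloor i\sliceLenP{\PhaseIdx}/\paral_\PhaseIdx\rfloor$ and $2\sum_{i<n}\lfloor i\tau/k\rfloor + 2n\tau \ge n^2\tau/k$ (each floor loses less than $1 \le \tau$).

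The main obstacle is the rounding. The ceiling in $E$ creates an additive $+\NumJobsGt{\PhaseIdxSecond}$ per phase, and the inequality $\lceil \NumJobsGt{\PhaseIdxSecond}\sliceLenP{\PhaseIdxSecond}/\paral_\PhaseIdxSecond\rceil \le \sum_{\PhaseIdx>\PhaseIdxSecond}(\cdots)$ must survive when distributed over classes. My first attempt would be to avoid ever expanding a ceiling. I would write $\lceil \NumJobsGt{\PhaseIdxSecond}\sliceLenP{\PhaseIdxSecond}/\paral_\PhaseIdxSecond\rceil \le \sum_{\PhaseIdx>\PhaseIdxSecond}\big(\lfloor \NumJobsClass_\PhaseIdx\sliceLenP{\PhaseIdxSecond}/\paral_\PhaseIdxSecond\rfloor + 1\big)$, which is presumably the content of \Cref{lem:ceiling}. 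The resulting $+1$ per (class, earlier phase) pair is then paid from the $+\sliceLenP{\PhaseIdx}$ terms already present in $\ClassPrefix$ and $\Ladder$, via $\#\{\PhaseIdxSecond<\PhaseIdx\} \le \sum_{\PhaseIdxSecond<\PhaseIdx}\sliceLenP{\PhaseIdxSecond} \le \sliceLenP{\PhaseIdx}/(\responseScalar-1)$. This is exactly what keeps the constant at $2/(\responseScalar-1)$.
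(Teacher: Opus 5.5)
Your strategy is sound, and after the common first step it departs from the paper's. The problem is the rounding step: as written, your bookkeeping for the ceiling does not close.

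\textbf{Comparison with the paper.} Both arguments reduce the claim to bounding $E=\sum_q N_{>q}\lceil N_{>q}\tau_q/k_q\rceil$ by $\frac{2}{\alpha-1}\bigl(\sum_p n_pS_p+\sum_pQ_p\bigr)$. The paper then uses $\lceil \tau x\rceil\le\tau\lceil x\rceil$ together with \Cref{lem:ceiling}. That lemma is not the splitting inequality you guessed; it states $N\lceil N/k\rceil\le 2\sum_{u=1}^{N}\lceil u/k\rceil$. The paper indexes the $u$-th job above phase $q$ by (class, rank), splits $\lceil u/k_q\rceil$ subadditively, and uses $k_q\ge k_t$. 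The diagonal pieces land in $Q_t$ via $\tau_t\sum_{r\le n_t}\lceil r/k_t\rceil\le Q_t$, and the cross pieces land in $S$. All rounding is thus absorbed exactly inside \Cref{lem:ceiling}, which is also the source of the factor $2$.

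You instead expand $N_{>q}^2=\sum_{p>q}(n_p^2+2n_pN_{>p})$. This produces the factor $2$ on the cross terms algebraically and avoids both \Cref{lem:ceiling} and the rank enumeration. Your justification of $k_q\ge k_p$ through monotonicity of $\MemPeak$ in the slice length is the right one; \Cref{lem:staggered:peak-memory} only records monotonicity in $k$. The cost of your route is that the ceiling in $E$ must be paid for separately.

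\textbf{The gap: two repairs are needed.}

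First, do not split the inner ceiling over classes. The inequality $\lceil N_{>q}\tau_q/k_q\rceil\le\sum_{p>q}(\lfloor n_p\tau_q/k_q\rfloor+1)$, once multiplied by the outer $N_{>q}$, costs $\sum_q N_{>q}\cdot\#\{p>q\}$. Your charging via $\#\{q<p\}\le\tau_p/(\alpha-1)$ pays for $\sum_p n_p\,\#\{q<p\}$, which is a different quantity. With one job in each class $1,\dots,\ell$, the first is of order $\ell^3$ and the second of order $\ell^2$. Instead, apply $\lceil x\rceil\le x+1$ once per phase. The extra cost is then $\sum_q N_{>q}=\sum_p p\,n_p\le\frac{1}{\alpha-1}\sum_p n_p\tau_p$.

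Second, this extra cost must be charged to $Q_p$, not to $S$. After the floors are absorbed, the slack left in the $S$-terms is $\frac{2}{\alpha-1}N_{>p}(\tau_p-\{n_p\tau_p/k_p\})$, where $\{\cdot\}$ is the fractional part. When $\tau_p=1$ this can be as small as $\frac{2}{\alpha-1}N_{>p}/k_p$. So the diagonal term and the ceiling term together require
\[
\tau_p\Bigl(\frac{n_p^2}{k_p}+n_p\Bigr)\;\le\;2Q_p .
\]
This inequality is true but tight. It follows from $Q_p\ge\tau_p\sum_{r=1}^{n_p}\lceil r/k_p\rceil$ and, writing $n=ak+b$ with $0\le b<k$, the identity $2\sum_{r=1}^{n}\lceil r/k\rceil-\frac{n^2}{k}-n=b\bigl(1-\frac{b}{k}\bigr)\ge0$. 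Equality holds when $\tau_p=1$ and $k_p\mid n_p$.

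Your estimate that each floor loses less than $1\le\tau$ is too weak here. When $\tau_p=1$ it gives only about $n_p(n_p-1)/k_p$ as a lower bound on $2Q_p$. That falls short even of the diagonal term alone, so it cannot certify an inequality with zero slack.

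With these two fixes, your route yields exactly the stated constants.
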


\xhdr{Step 4: Combining Bounds and Optimizing the Scaling Factor $\responseScalar$.}  
By combining the upper bound on {\GSA} (from \Cref{lem:GSA:phase-area-upper-bound}) with the upper bound on {\GBA} (i.e., $\UpperGBA$ in Lemma~\ref{lem:GBA:phase-area-upper-bound}) and the approximation guarantee of $\UpperGBA$ relative to $\OPT$ established in Theorem~\ref{thm:GBA},\footnote{\label{footnote:GBA UB}The analysis in Theorem~\ref{thm:GBA} applies verbatim to $\UpperGBA$. Though $\UpperGBA$ is an upper bound on $\GBA$ and the proof compares this quantity directly to $\OPT$.} we obtain the competitive ratio stated in Theorem~\ref{thm:GSA}.
Optimizing the scaling factor $\responseScalar$ yields the constants given in Theorem~\ref{thm:GSA}: approximately $61.92$ in the general case and $32$ in the large-memory regime.

This approach not only establishes the first constant-competitive algorithm for non-clairvoyant KV-cache scheduling but also demonstrates the power of geometric phase alignment in bridging clairvoyant and non-clairvoyant settings.

\subsection{Proofs in the Analysis of {\GSA}}
\label{sec:GSA:missing proofs}

In this section, we include all the proofs of the technical lemmas and then establish \Cref{thm:GBA}.

\lemAreaGsaUB*

\begin{proof}
By construction, {\GSA} processes all remaining jobs in every phase. Consider any job $i$ that belongs to class $\PhaseIdx$ (i.e., it completes in phase $\PhaseIdx$). Its contribution to the total flow time consists of three parts:

\xhdr{Time spent in prior phases:} In each phase $\PhaseIdxSecond < \PhaseIdx$, job $i$ is active and contributes to the execution of the batch. The total time spent across all prior phases is exactly the duration of those phases, which is upper bounded by $\PhasePrefix_\PhaseIdx$ as defined in Eqn.~\eqref{eq:Tk-stag-def}.

\xhdr{Spillover time in phase $\PhaseIdx$:} At the start of phase $\PhaseIdx$, the active batch includes all unfinished jobs, namely the $\NumJobsGe{\PhaseIdx} = \NumJobsClass_\PhaseIdx + \NumJobsGt{\PhaseIdx}$ jobs from class $\PhaseIdx$ and later classes. The extra load from the $\NumJobsGt{\PhaseIdx}$ jobs from later classes increases the runtime of the phase. The additional rounds needed due to this ``spillover'' are upper bounded by $\Spillover_\PhaseIdx = \lceil \NumJobsGt{\PhaseIdx} \cdot \sliceLenP{\PhaseIdx} / \paral_\PhaseIdx \rceil$, as in Eqn.~\eqref{eq:Dk-stag-def}.

\xhdr{Internal completion time within class $\PhaseIdx$:} Once the spillover effect is accounted for, the relative ordering of the $\NumJobsClass_\PhaseIdx$ jobs in class $\PhaseIdx$ under {\SPS} yields a total internal completion time bounded by the ladder term $\Ladder_\PhaseIdx$ defined in Eqn.~\eqref{eq:Qk-stag-def}.

Summing over all jobs in class $\PhaseIdx$, the total contribution to flow time is at most
\begin{align*}
\NumJobsClass_\PhaseIdx \cdot (\PhasePrefix_\PhaseIdx + \Spillover_\PhaseIdx) + \Ladder_\PhaseIdx.
\end{align*}
Finally, summing over all phases $\PhaseIdx$ gives the desired upper bound:
\begin{align*}
\GSA \leq \sum\nolimits_{\PhaseIdx} \NumJobsClass_\PhaseIdx \left( \PhasePrefix_\PhaseIdx + \Spillover_\PhaseIdx \right) + \sum\nolimits_{\PhaseIdx} \Ladder_\PhaseIdx,
\end{align*}
which completes the proof.
\end{proof}

\lemGsaSpillover*
\begin{proof}[Proof of \Cref{lem:spill-stag}]
Let $\localratio_\PhaseIdx \triangleq \sliceLenP{\PhaseIdx} / \paral_\PhaseIdx$. Since $\lceil x \rceil \leq x + 1$, we have
\begin{align*}
\Spillover_\PhaseIdx
= \left\lceil \NumJobsGt{\PhaseIdx} \cdot \localratio_\PhaseIdx \right\rceil
\leq \NumJobsGt{\PhaseIdx} \cdot \localratio_\PhaseIdx + 1.
\end{align*}
Therefore,
\begin{align*}
\sum\nolimits_{\PhaseIdx} \NumJobsClass_\PhaseIdx \cdot \Spillover_\PhaseIdx
\leq \sum\nolimits_{\PhaseIdx} \NumJobsGt{\PhaseIdx}\NumJobsClass_\PhaseIdx \cdot \localratio_\PhaseIdx
 + \sum\nolimits_{\PhaseIdx} \NumJobsClass_\PhaseIdx.
\end{align*}
Since $\sliceLenP{\PhaseIdx} \geq 1$ and $x < \lfloor x \rfloor + 1$, we obtain
\begin{align*}
\NumJobsClass_\PhaseIdx \cdot \localratio_\PhaseIdx
\leq \left\lfloor \frac{\NumJobsClass_\PhaseIdx \cdot \sliceLenP{\PhaseIdx}}{\paral_\PhaseIdx} \right\rfloor + \sliceLenP{\PhaseIdx}.
\end{align*}
Multiplying by $\NumJobsGt{\PhaseIdx}$ and summing gives
\begin{align*}
\sum\nolimits_{\PhaseIdx} \NumJobsGt{\PhaseIdx}\NumJobsClass_\PhaseIdx \cdot \localratio_\PhaseIdx
&\leq \sum\nolimits_{\PhaseIdx} \NumJobsGt{\PhaseIdx} \left( \left\lfloor \frac{\NumJobsClass_\PhaseIdx \cdot \sliceLenP{\PhaseIdx}}{\paral_\PhaseIdx} \right\rfloor + \sliceLenP{\PhaseIdx} \right) 
= \sum\nolimits_{\PhaseIdx} \NumJobsClass_\PhaseIdx \cdot \ClassPrefix_\PhaseIdx,
\end{align*}
where the equality follows by swapping the order of summation in the definition of $\ClassPrefix_\PhaseIdx$ (Eqn.~\eqref{eq:Sk-stag-def}).
Finally, by definition of $\Ladder_\PhaseIdx$ in Eqn.~\eqref{eq:Qk-stag-def}, we have
\begin{align*}
\sum\nolimits_{\PhaseIdx} \NumJobsClass_\PhaseIdx
&\leq \sum\nolimits_{\PhaseIdx} \NumJobsClass_\PhaseIdx \cdot \sliceLenP{\PhaseIdx}
\leq \sum\nolimits_{\PhaseIdx} \Ladder_\PhaseIdx,
\end{align*}
Combining the bounds completes the proof.
\end {proof}

\lemGsaPrefix*

The proof of \Cref{lem:prefix-stag} relies on the following technical claim. 

\begin{lemma}
\label{lem:ceiling}
For any integers $\NumberJobs \geq 1$ and $\paral \geq 1$, the following inequality holds: 
\begin{align*}
\NumberJobs \cdot \left\lceil \frac{\NumberJobs}{\paral} \right\rceil
\;\leq\;
2 \sum\nolimits_{u=1}^{\NumberJobs} \left\lceil \frac{u}{\paral} \right\rceil.
\end{align*}
\end{lemma}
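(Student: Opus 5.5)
Write $\NumberJobs = q\paral + r$ with integers $q \geq 0$ and $1 \leq r \leq \paral$. This choice of $r$ in $\{1,\dots,\paral\}$ rather than $\{0,\dots,\paral-1\}$ gives $\lceil \NumberJobs/\paral\rceil = q+1$ exactly. The plan is to compute both sides in closed form using this block decomposition and then compare them directly.

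For the right-hand side, group the indices $u=1,\dots,\NumberJobs$ into consecutive blocks of length $\paral$. Block $j \in \{1,\dots,q\}$ consists of $u \in \{(j-1)\paral+1,\dots,j\paral\}$, and on it $\lceil u/\paral\rceil = j$. The final partial block consists of $u \in \{q\paral+1,\dots,q\paral+r\}$, and on it $\lceil u/\paral\rceil = q+1$. Summing block by block gives
\begin{align*}
\sum\nolimits_{u=1}^{\NumberJobs} \left\lceil \frac{u}{\paral}\right\rceil = \paral\cdot\frac{q(q+1)}{2} + r(q+1),
\end{align*}
so the right-hand side equals $(q+1)(\paral q + 2r)$. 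The left-hand side is $(q\paral + r)(q+1)$.

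Dividing both sides by $q+1>0$, the claimed inequality reduces to $q\paral + r \leq q\paral + 2r$, i.e., $r \geq 0$, which holds. I do not expect any step here to be a real obstacle. The only delicate point is the boundary case where $\paral$ divides $\NumberJobs$, and the convention $r \in \{1,\dots,\paral\}$ is exactly what handles it: it keeps the last partial block nonempty and makes the formula $\lceil \NumberJobs/\paral\rceil = q+1$ exact in all cases.

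As a sanity check on the constant, take $\paral$ large with $\NumberJobs \leq \paral$, so that $q=0$. The left-hand side is then $\NumberJobs$ and the right-hand side is $2\NumberJobs$. This shows the factor $2$ is loose at this end but confirms the inequality is consistent there. The factor $2$ is what is actually needed when $q$ is large and $r$ is small, where the two sides approach each other up to the factor $\paral q/(\paral q/2)$ coming from the triangular sum.
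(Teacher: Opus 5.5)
Your proof is correct and follows the paper's argument: the same block-by-block closed form $\sum_{u=1}^{\NumberJobs}\lceil u/\paral\rceil = \paral\, q(q+1)/2 + (q+1)r$, and the same reduction to comparing $q\paral + r$ with $q\paral + 2r$. The only difference is that you take the remainder in $\{1,\dots,\paral\}$ rather than $\{0,\dots,\paral-1\}$, which merges the paper's two cases ($r=0$ and $r>0$) into a single computation.
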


\begin{proof}
Let $q \triangleq \lfloor \NumberJobs / \paral \rfloor$ and write $\NumberJobs = q \paral + r$, where $0 \leq r < \paral$.  
Then $\lceil \NumberJobs / \paral \rceil = q$ if $r = 0$, and $q + 1$ otherwise.
We first compute the sum on the right-hand side:
\begin{align*}
\sum\nolimits_{u=1}^{\NumberJobs} \left\lceil \frac{u}{\paral} \right\rceil
= \sum\nolimits_{\ell=1}^{q} \ell \cdot \paral + (q + 1) \cdot r
= \frac{q(q + 1)}{2} \cdot \paral + (q + 1) \cdot r.
\end{align*}
We now verify the inequality in both cases.

\xhdr{Case 1: $r = 0$.}  
    In this case, $\NumberJobs = q \paral$ and $\lceil \NumberJobs / \paral \rceil = q$. The left-hand side becomes $q \paral \cdot q = q^2 \paral$, while the right-hand side is
    \begin{align*}
    2 \cdot \frac{q(q + 1)}{2} \cdot \paral = q(q + 1) \paral.
    \end{align*}
    Since $q^2 \leq q(q + 1)$, the inequality holds.

\xhdr{Case 2: $r > 0$.}  
    In this case, $\lceil \NumberJobs / \paral \rceil = q + 1$, and the left-hand side is
    \begin{align*}
    \NumberJobs \cdot (q + 1) = (q \paral + r)(q + 1).
    \end{align*}
    The right-hand side is
    \begin{align*}
    2 \left( \frac{q(q + 1)}{2} \cdot \paral + (q + 1) \cdot r \right)
    = (q + 1)(q \paral + 2r).
    \end{align*}
    Thus, the desired inequality reduces to
    \begin{align*}
    q \paral + r \;\leq\; q \paral + 2r,
    \end{align*}
    which holds since $r \geq 1$. This completes the proof as desired.
\end{proof}

\begin{proof}[Proof of \Cref{lem:prefix-stag}]
We begin by expanding the left-hand side and decomposing $\NumJobsGe{\PhaseIdxSecond} = \NumJobsClass_\PhaseIdxSecond + \NumJobsGt{\PhaseIdxSecond}$:
\begin{align*}
\sum\nolimits_{\PhaseIdx} \NumJobsClass_\PhaseIdx \cdot \PhasePrefix_\PhaseIdx
= \sum\nolimits_{\PhaseIdxSecond} \NumJobsGt{\PhaseIdxSecond} \left( \sliceLenP{\PhaseIdxSecond} + \left\lfloor \frac{\NumJobsGe{\PhaseIdxSecond} \cdot \sliceLenP{\PhaseIdxSecond}}{\paral_\PhaseIdxSecond} \right\rfloor \right).
\end{align*}
Splitting the floor term and using $\lfloor a + b \rfloor \leq \lfloor a \rfloor + \lceil b \rceil$, we obtain
\begin{align*}
\sum\nolimits_{\PhaseIdx} \NumJobsClass_\PhaseIdx \cdot \PhasePrefix_\PhaseIdx
\leq 
\underbrace{
    \sum\nolimits_{\PhaseIdxSecond} \NumJobsGt{\PhaseIdxSecond} \left( \sliceLenP{\PhaseIdxSecond} + \left\lfloor \frac{\NumJobsClass_\PhaseIdxSecond \cdot \sliceLenP{\PhaseIdxSecond}}{\paral_\PhaseIdxSecond} \right\rfloor \right)
}_{(\mathrm{I}) \equiv \sum\nolimits_{\PhaseIdx} \NumJobsClass_\PhaseIdx \cdot \ClassPrefix_\PhaseIdx}
+
\underbrace{
    \sum\nolimits_{\PhaseIdxSecond} \NumJobsGt{\PhaseIdxSecond} \cdot \left\lceil \frac{\NumJobsGt{\PhaseIdxSecond} \cdot \sliceLenP{\PhaseIdxSecond}}{\paral_\PhaseIdxSecond} \right\rceil
}_{(\mathrm{II})}.
\end{align*}
We now bound term $(\mathrm{II})$. Since $\sliceLenP{\PhaseIdxSecond}$ is an integer,
\begin{align*}
(\mathrm{II})
\leq
\sum\nolimits_{\PhaseIdxSecond} \NumJobsGt{\PhaseIdxSecond} \cdot \sliceLenP{\PhaseIdxSecond} \cdot \left\lceil \frac{\NumJobsGt{\PhaseIdxSecond}}{\paral_\PhaseIdxSecond} \right\rceil.
\end{align*}
Applying the counting inequality from \ref{lem:ceiling}, we have
\begin{align*}
\NumJobsGt{\PhaseIdxSecond} \cdot \left\lceil \frac{\NumJobsGt{\PhaseIdxSecond}}{\paral_\PhaseIdxSecond} \right\rceil
\leq
2 \sum\nolimits_{u=1}^{\NumJobsGt{\PhaseIdxSecond}} \left\lceil \frac{u}{\paral_\PhaseIdxSecond} \right\rceil,
\end{align*}
and thus
\begin{align}
\label{eq:II-stag-detail}
(\mathrm{II}) \leq 2 \sum\nolimits_{\PhaseIdxSecond} \sliceLenP{\PhaseIdxSecond} \sum\nolimits_{u=1}^{\NumJobsGt{\PhaseIdxSecond}} \left\lceil \frac{u}{\paral_\PhaseIdxSecond} \right\rceil.
\end{align}
We now enumerate the $\NumJobsGt{\PhaseIdxSecond}$ jobs in classes strictly above $\PhaseIdxSecond$ by their class $\PhaseIdxFourth > \PhaseIdxSecond$ and within-class rank $\RankIdx \in \{1, \dots, \NumJobsClass_\PhaseIdxFourth\}$. By the definition of $\paral$ and the monotonicity of $\MemPeak$ (\Cref{lem:staggered:peak-memory}), for $\PhaseIdxFourth > \PhaseIdxSecond$, we have $\paral_\PhaseIdxSecond \geq \paral_\PhaseIdxFourth$, which implies $1/\paral_\PhaseIdxSecond \leq 1/\paral_\PhaseIdxFourth$. Using this monotonicity,
\begin{align*}
\sum\nolimits_{u=1}^{\NumJobsGt{\PhaseIdxSecond}} \left\lceil \frac{u}{\paral_\PhaseIdxSecond} \right\rceil
&\leq
\sum\nolimits_{\substack{\PhaseIdxFourth > \PhaseIdxSecond :\NumJobsClass_\PhaseIdxFourth \geq 1}}
\sum\nolimits_{\RankIdx=1}^{\NumJobsClass_\PhaseIdxFourth}
\left(
    \left\lceil \frac{\RankIdx}{\paral_\PhaseIdxFourth} \right\rceil
    +
    \sum\nolimits_{\PhaseIdxThird = \PhaseIdxSecond + 1}^{\PhaseIdxFourth - 1}
    \left\lceil \frac{\NumJobsClass_\PhaseIdxThird}{\paral_\PhaseIdxSecond} \right\rceil
\right) \\
&\leq
\sum\nolimits_{\substack{\PhaseIdxFourth > \PhaseIdxSecond : \NumJobsClass_\PhaseIdxFourth \geq 1}}
\sum\nolimits_{\RankIdx=1}^{\NumJobsClass_\PhaseIdxFourth}
\left(
    \left\lceil \frac{\RankIdx}{\paral_\PhaseIdxFourth} \right\rceil
    +
    \sum\nolimits_{\PhaseIdxThird = \PhaseIdxSecond + 1}^{\PhaseIdxFourth - 1}
    \left\lceil \frac{\NumJobsClass_\PhaseIdxThird}{\paral_\PhaseIdxThird} \right\rceil
\right)
\end{align*}
Substituting into \eqref{eq:II-stag-detail} and splitting the sum yields
\begin{align*}
(\mathrm{II}) \leq 2\cdot \left( \mathrm{(II.A)} + \mathrm{(II.B)} \right),
\end{align*}
where
\begin{align*}
\mathrm{(II.A)} \triangleq \sum\nolimits_{\PhaseIdxSecond} \sliceLenP{\PhaseIdxSecond} \sum\nolimits_{\PhaseIdxFourth > \PhaseIdxSecond} \sum\nolimits_{\RankIdx=1}^{\NumJobsClass_\PhaseIdxFourth} \left\lceil \frac{\RankIdx}{\paral_\PhaseIdxFourth} \right\rceil,
\;\;\mbox{and}\;\;
\mathrm{(II.B)} \triangleq \sum\nolimits_{\PhaseIdxSecond} \sliceLenP{\PhaseIdxSecond} \sum\nolimits_{\PhaseIdxFourth > \PhaseIdxSecond} \NumJobsClass_\PhaseIdxFourth \sum\nolimits_{\PhaseIdxThird = \PhaseIdxSecond + 1}^{\PhaseIdxFourth - 1} \left\lceil \frac{\NumJobsClass_\PhaseIdxThird}{\paral_\PhaseIdxThird} \right\rceil.
\end{align*}
We now bound each term using the geometric increment of slice lengths. By Eqn.~\eqref{eq:geometric sum},
\begin{align*}
\sum\nolimits_{\PhaseIdxSecond < \PhaseIdxFourth} \sliceLenP{\PhaseIdxSecond} \leq \frac{\sliceLenP{\PhaseIdxFourth}}{\responseScalar - 1}.
\end{align*}
For (II.A), interchanging the order of summation gives
\begin{align*}
\mathrm{(II.A)}
&= \sum\nolimits_{\PhaseIdxFourth} \left( \sum\nolimits_{\PhaseIdxSecond < \PhaseIdxFourth} \sliceLenP{\PhaseIdxSecond} \right) \sum\nolimits_{\RankIdx=1}^{\NumJobsClass_\PhaseIdxFourth} \left\lceil \frac{\RankIdx}{\paral_\PhaseIdxFourth} \right\rceil 
\leq \frac{1}{\responseScalar - 1} \sum\nolimits_{\PhaseIdxFourth} \sliceLenP{\PhaseIdxFourth} \sum\nolimits_{\RankIdx=1}^{\NumJobsClass_\PhaseIdxFourth} \left\lceil \frac{\RankIdx}{\paral_\PhaseIdxFourth} \right\rceil.
\end{align*}
Since $\sliceLenP{\PhaseIdxFourth}$ is integral, we have
\begin{align*}
\sliceLenP{\PhaseIdxFourth} \sum\nolimits_{\RankIdx=1}^{\NumJobsClass_\PhaseIdxFourth} \left\lceil \frac{\RankIdx}{\paral_\PhaseIdxFourth} \right\rceil
= \NumJobsClass_\PhaseIdxFourth \sliceLenP{\PhaseIdxFourth} + \sum\nolimits_{\RankIdx=0}^{\NumJobsClass_\PhaseIdxFourth - 1} \sliceLenP{\PhaseIdxFourth} \left\lfloor \frac{\RankIdx}{\paral_\PhaseIdxFourth} \right\rfloor
\leq \Ladder_\PhaseIdxFourth,
\end{align*}
where the inequality follows from $\sliceLenP{\PhaseIdxFourth} \lfloor \RankIdx / \paral_\PhaseIdxFourth \rfloor \leq \lfloor \RankIdx \sliceLenP{\PhaseIdxFourth} / \paral_\PhaseIdxFourth \rfloor$. Hence,
\begin{align*}
\mathrm{(II.A)} \leq \frac{1}{\responseScalar - 1} \sum\nolimits_{\PhaseIdxFourth} \Ladder_\PhaseIdxFourth.
\end{align*}
For (II.B), swapping summations yields
\begin{align*}
\mathrm{(II.B)}
&= \sum\nolimits_{\PhaseIdxFourth} \left( \sum\nolimits_{\PhaseIdxSecond < \PhaseIdxFourth} \sliceLenP{\PhaseIdxSecond} \right) \NumJobsGt{\PhaseIdxFourth} \left\lceil \frac{\NumJobsClass_\PhaseIdxFourth}{\paral_\PhaseIdxFourth} \right\rceil \leq \frac{1}{\responseScalar - 1} \sum\nolimits_{\PhaseIdxFourth} \NumJobsGt{\PhaseIdxFourth} \sliceLenP{\PhaseIdxFourth} \left\lceil \frac{\NumJobsClass_\PhaseIdxFourth}{\paral_\PhaseIdxFourth} \right\rceil \\
&\leq \frac{1}{\responseScalar - 1} \sum\nolimits_{\PhaseIdxFourth} \NumJobsGt{\PhaseIdxFourth} \left( \left\lfloor \frac{\NumJobsClass_\PhaseIdxFourth \sliceLenP{\PhaseIdxFourth}}{\paral_\PhaseIdxFourth} \right\rfloor + \sliceLenP{\PhaseIdxFourth} \right) = \frac{1}{\responseScalar - 1} \sum\nolimits_{\PhaseIdx} \NumJobsClass_\PhaseIdx \sum\nolimits_{\PhaseIdxSecond < \PhaseIdx} \left( \left\lfloor \frac{\NumJobsClass_\PhaseIdxSecond \sliceLenP{\PhaseIdxSecond}}{\paral_\PhaseIdxSecond} \right\rfloor + \sliceLenP{\PhaseIdxSecond} \right) = \frac{1}{\responseScalar - 1} \sum\nolimits_{\PhaseIdx} \NumJobsClass_\PhaseIdx \cdot \ClassPrefix_\PhaseIdx.
\end{align*}
Combining the bounds on (I), (II.A), and (II.B), we obtain
\begin{align*}
\sum\nolimits_{\PhaseIdx} \NumJobsClass_\PhaseIdx \cdot \PhasePrefix_\PhaseIdx
\leq
\left(1 + \frac{2}{\responseScalar - 1}\right) \sum\nolimits_{\PhaseIdx} \NumJobsClass_\PhaseIdx \cdot \ClassPrefix_\PhaseIdx
+ \frac{2}{\responseScalar - 1} \sum\nolimits_{\PhaseIdx} \Ladder_\PhaseIdx,
\end{align*}
which completes the proof of \Cref{lem:prefix-stag} as desired.
\end{proof}

We are now ready to combine all technical pieces and establish \Cref{thm:GSA} for {\GSA}.
\begin{proof}[Proof of \Cref{thm:GSA}]
Invoking \Cref{lem:GSA:phase-area-upper-bound,lem:spill-stag,lem:prefix-stag}, we obtain
\begin{align*}
    \GSA &\leq 
    \sum\nolimits_{\PhaseIdx} \NumJobsClass_\PhaseIdx \left( \PhasePrefix_\PhaseIdx + \Spillover_\PhaseIdx \right) + \sum\nolimits_{\PhaseIdx} \Ladder_\PhaseIdx
    \\
    &\leq \sum\nolimits_{\PhaseIdx} \NumJobsClass_\PhaseIdx \cdot \ClassPrefix_\PhaseIdx
    + 2 \sum\nolimits_{\PhaseIdx} \Ladder_\PhaseIdx
    + \left(1 + \frac{2}{\responseScalar - 1}\right)
\sum\nolimits_{\PhaseIdx} \NumJobsClass_\PhaseIdx \cdot \ClassPrefix_\PhaseIdx
\;+\;
\frac{2}{\responseScalar - 1}
\sum\nolimits_{\PhaseIdx} \Ladder_\PhaseIdx
    \leq 
    \left(2+\frac{2}{\responseScalar-1}\right)\,\UpperGBA
\end{align*}
Invoking \Cref{thm:GBA}, we obtain\textsuperscript{\ref{footnote:GBA UB}}  
\begin{align*}
    \GSA \ \le\ \Bigl(2+\frac{2}{\responseScalar-1}\Bigr)\left(\responseScalar^2\cdot \left(1+\frac{2}{\minparal}\right) +\ \responseScalar\ +\ \frac{\responseScalar}{\responseScalar-1}\right)\,\OPT.
\end{align*}
as claimed. To obtain the numerical constants, we optimize over $\responseScalar > 1$:
\begin{itemize}
    \item In the worst case ($\minparal = 1$), the bound becomes $(2+\frac{2}{\responseScalar-1})(3\responseScalar^2 + \responseScalar + \frac{\responseScalar}{\responseScalar - 1})$. Minimizing this expression yields $\responseScalar = (7+\sqrt{13})/6\approx 1.7676$ and a value of $\frac{92+26\sqrt{13}}{3} \approx 61.9147$.
    \item In the large-memory regime ($\minparal \to \infty$), the bound reduces to $(2+\frac{2}{\responseScalar-1})(\responseScalar^2 + \responseScalar + \frac{\responseScalar}{\responseScalar - 1})$. Minimization gives $\responseScalar = 2$ and a value of $32$.
\end{itemize}
This completes the proof of \Cref{thm:GSA}.
\end{proof}

\section{Conclusion and Future Directions}
\label{sec:conclusion}
This work demonstrates that constant-competitive scheduling is achievable for LLM inference under dynamic KV-cache constraints, even without knowledge of response lengths. Our results suggest two broader insights. First, the staggered pipeline mechanism shows that \emph{desynchronizing} job execution can be fundamentally more efficient than batching greedily (and thus simultaneously). Second, geometric slicing illustrates how disciplined preemption can tame non-clairvoyance: by accepting controlled overhead from periodic restarts, we avoid the unbounded delays that arise from prevalent heuristics.

We highlight several promising directions for future work.
First, it would be valuable to explore more general models that incorporate online arrivals or heterogeneous prompt lengths. Our analysis focuses on the offline batch setting with identical prompt lengths. As discussed in \Cref{sec:model justificaiton}, relaxing either assumption leads to strong hardness results. While recent works~\citep[e.g.,][]{ALSW25,WYZ-25,ZLMP-25,LDP-25,JJMMPZ-25} have initiated the study of those extension models, it remains largely open to characterize which intermediate settings admit strong performance guarantees.
Second, tightening the theoretical guarantees of our algorithms and establishing fundamental lower bounds is an important next step. Although we present the first constant competitive ratio for the non-clairvoyant setting, no non-trivial lower bounds are currently known. Proving such bounds---or further improving the upper bounds---would deepen our understanding of the inherent limits of non-clairvoyant scheduling under dynamic memory constraints.
Third, extending our framework to multi-machine settings is a natural and practically relevant direction. Our model assumes a single GPU with a fixed KV-cache budget, whereas real-world LLM inference systems often distribute requests across multiple machines or GPUs with heterogeneous memory capacities. Developing algorithms that effectively route or migrate jobs in such environments poses significant new challenges and opportunities.

\newcommand{\newblock}{}
\setlength{\bibsep}{0.0pt}
\bibliographystyle{plainnat}
\OneAndAHalfSpacedXI
{\footnotesize

\newpage
\bibliography{refs}}

\newpage
\renewcommand{\theHchapter}{A\arabic{chapter}}
\renewcommand{\theHsection}{A\arabic{section}}

\ECSwitch
\ECDisclaimer

\section{Numerical Experiments}
\label{apx:numerical}
We evaluate the performance of our proposed algorithms using a discrete-time event simulator that faithfully captures the memory-constrained scheduling dynamics defined in \Cref{sec:prelim}. Our evaluation is designed to bridge the gap between theoretical analysis and practical deployment, answering two key questions: 
\begin{enumerate}
    \item Do the structural benefits of geometric scheduling observed in our worst-case analysis hold in controlled settings?
    \item Can these insights translate into tangible performance gains under realistic, heavy-tailed LLM workloads?
\end{enumerate}
To this end, we organize our experiments into two categories:

\xhdr{Theoretical-Concept Experiments (\Cref{sec:theoretical experiments}).} 
We first deploy small-scale synthetic workloads to isolate key structural properties of the algorithms. These micro-benchmarks serve to validate our theoretical insights---specifically, the ability of staggered starts to smooth memory usage and the robustness of kill-and-restart policies under adversarial conditions---in a controlled environment free from the noise of random distributions.

\xhdr{Trace-Driven Experiments (\Cref{sec:trace-driven experiments}).} 
We then evaluate practical performance using realistic job traces derived from the LMSYS-Chat-1M dataset \citep{ZCSLTLZWXGZSZHZ24}. These macro-benchmarks demonstrate the effectiveness of our algorithms and their heuristic variants in handling the right-skewed generation lengths characteristic of modern LLM inference.

\subsection{Baselines}
\label{sec:baselines}

We first introduce the baseline policies considered in our numerical experiments.

\xhdr{Baselines from Prior Work.}
We use two baselines from prior work for comparison.
\begin{itemize}
    \item {\MCSF} (Memory-Constrained Shortest First) from \citep{JJMMPZ-25}: a clairvoyant greedy algorithm that employs shortest job first scheduling under memory constraints, admitting as many jobs as possible in order of increasing response length.
    \item {\AMin} from \citep{CYZ-25}, a non-clairvoyant algorithm that maintains an estimate $\responseLenEst_i$ of response lengths $\responseLen_i$ based on observed progress, prioritizes jobs with the smallest estimated sizes, and kills the job with minimum estimated remaining length when memory is not sufficient. It also admits as many jobs as possible in order of increasing estimated response length, breaking tie uniformly at random.
\end{itemize}

\xhdr{Practical Baseline.}
We implement a practical baseline {\vLLM} similar to the default vLLM scheduler \citep{KLZSZYGZS-23}, which describes a family of First-Come-First-Served (FCFS) policy: jobs are admitted in order of arrival\footnote{
In our model, all jobs arrive at time $0$. Thus, jobs are ordered by their indices by default.}, and when memory is insufficient, the job with the latest arrival time is evicted to free memory. This ensures fairness among jobs and achieves good average latency in practice.

In our simulator we remove production safeguards (e.g., watermark mechanisms for memory overflow), making it slightly more aggressive; this can only improve its performance in our setting.

\subsection{Theoretical-Concept Experiments}
\label{sec:theoretical experiments}

In this section, we present our two theoretical-concept experiments.

\xhdr{Uniform-Size Distribution for Clairvoyant Algorithms.}
We simulate a workload of $\NumberJobs = 200$ jobs, each having identical response length $\responseLen_i = 16$ and prompt length $\initialLen=0$, under a memory budget of $\kvmem=256$.

As illustrative in \Cref{fig:exp-uniform}, {\MCSF} exhibits synchronized execution waves that repeatedly reach the memory capacity, leading to inefficient resource utilization. In contrast, {\GBA} staggers job start times, maintaining a smoother and more uniform memory profile over time.

\begin{figure}
  \centering
  \begin{minipage}[c]{0.48\textwidth}
    \centering
    \subfloat[{\MCSF}: 21632]{
      \includegraphics[width=\linewidth]{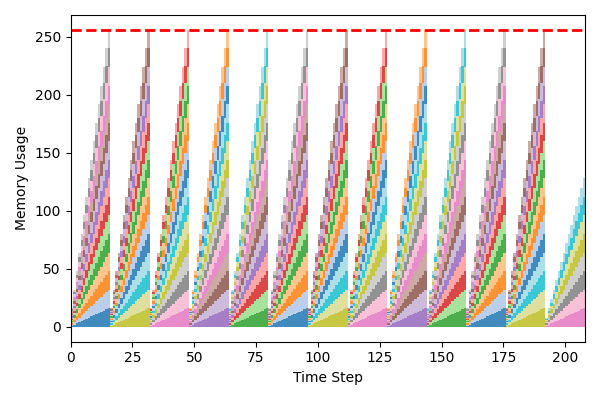}
    }
\end{minipage}
  \begin{minipage}[c]{0.48\textwidth}
    \centering
    \subfloat[{\GBA} (with $\responseScalar = 2$): 14083]{
      \includegraphics[width=\linewidth]{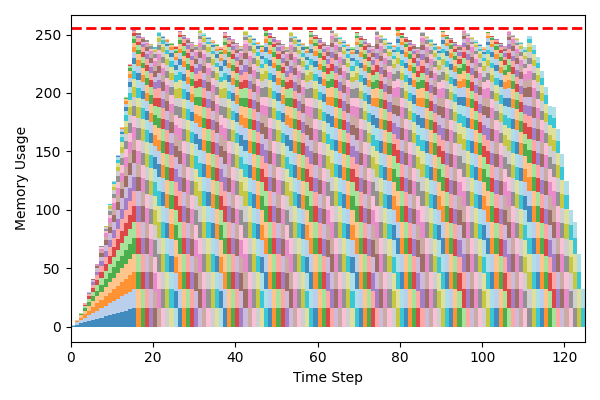}
    }
\end{minipage}
  \caption{Uniform-size distribution results 
  for clairvoyant algorithms.
  The total flow times are indicated in the titles. The $x$-axis represents time (rounds), and the $y$-axis represents the usage of memory in each round, where
  each job is represented by a colored block.
  }
  \label{fig:exp-uniform}
\end{figure}

\xhdr{Two-Point Distribution for Non-Clairvoyant Algorithms.}
We consider an instance where all jobs share a prompt length of $\initialLen = 96$. The response lengths follow a bimodal distribution: six ``long'' jobs have $\responseLen_i = 160$, while the remaining $194$ ``short'' jobs have $\responseLen_i = 1$. The memory budget is fixed at $\kvmem=256$.

The results in \Cref{fig:exp-twopoint} reveal severe head-of-line blocking in both {\vLLM} and {\AMin}. {\GSA}, however, mitigates this by strategically killing long jobs to admit short ones, thereby preventing prolonged blocking phases. Notably, the randomized strategy of {\AMin} yields no significant improvement in this adversarial setting, as it still fall into the ``let-long-jobs-finish'' trap.

\begin{figure}
  \centering
  \begin{minipage}[c]{0.48\textwidth}
    \centering
\subfloat[{\vLLM}: 78319]{
      \includegraphics[width=\linewidth]{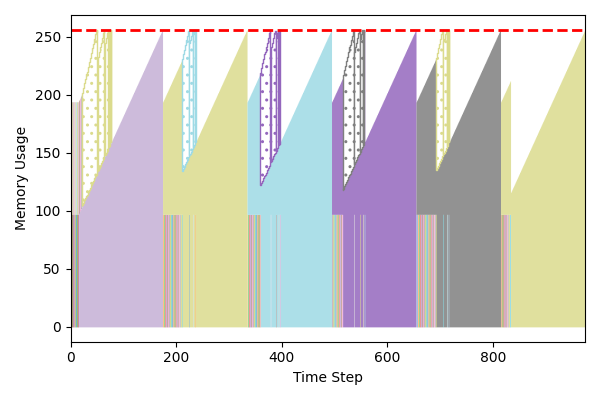}
    }\\
    \subfloat[{\AMin}: 53508]{
      \includegraphics[width=\linewidth]{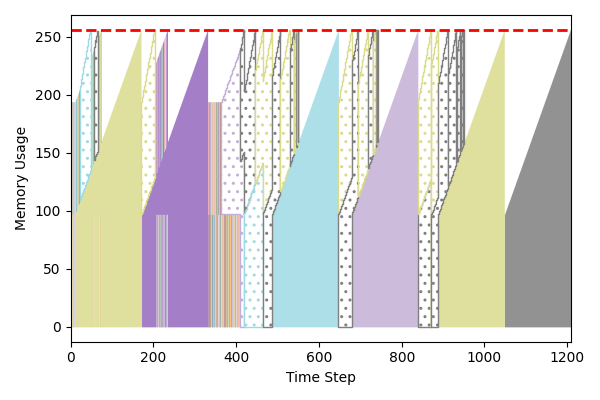}
    }
  \end{minipage}
  \hfill
  \begin{minipage}[c]{0.48\textwidth}
    \centering
\subfloat[{\GSA}($\responseScalar = 2$): 18268]{
      \includegraphics[width=\linewidth]{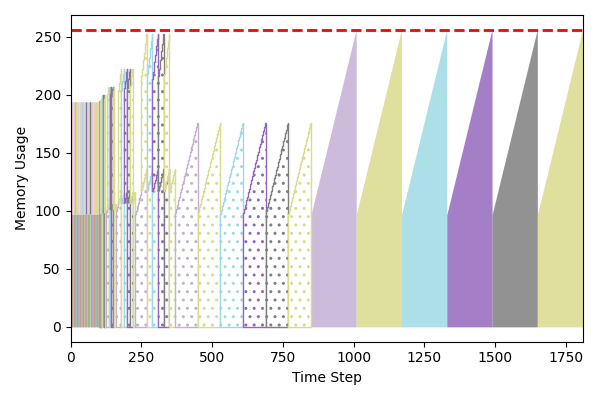}
    }
\end{minipage}
  \caption{Two-point distribution results for non-clairvoyant algorithms. The total flow times are indicated in the titles. The $x$-axis represents time (rounds), and the $y$-axis represents the usage of memory in each round, where
  each finished job is represented by a colored block, and each killed job is represented by a dotted block.
  }
  \label{fig:exp-twopoint}
\end{figure}

\subsection{Trace-Driven Experiments}
\label{sec:trace-driven experiments}

To capture real-world job size distributions, we construct workloads using token counts from the LMSYS-Chat-1M dataset \citep{ZCSLTLZWXGZSZHZ24}. Specifically, the job response length $\responseLen_i$ is defined as the number of tokens in the first assistant response of each conversation. The prompt length is set to $\initialLen = 79$, reflecting the average token count of user messages preceding the response. \Cref{fig:job size distribution} illustrates the resulting response length distribution, which exhibits the right-skewed characteristic typical of LLM inference workloads. For each trial with $\NumberJobs$ jobs used in the simulation, the first $\NumberJobs$ valid rows in the dataset are selected.

\begin{figure}[h]
  \centering
  \includegraphics[width=0.48\linewidth]{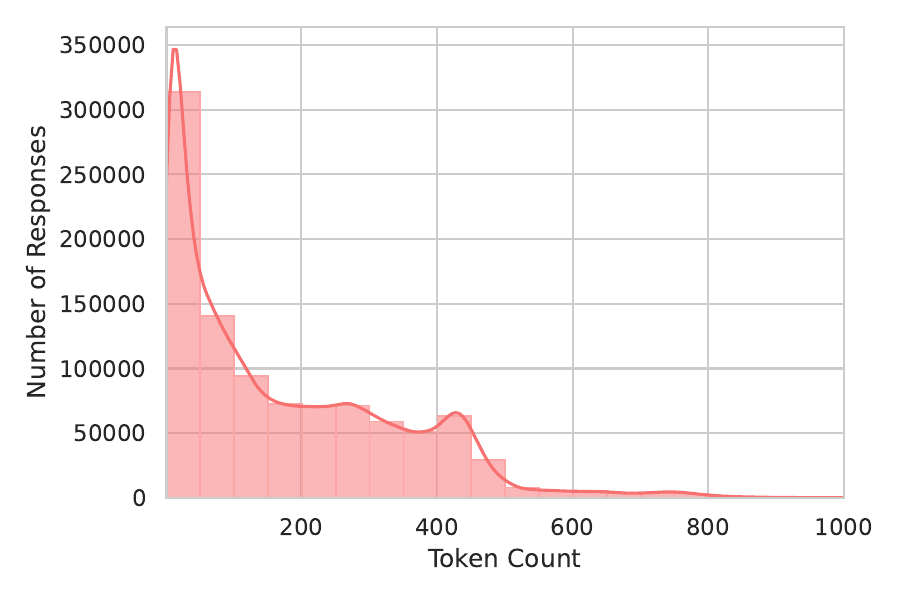}
  \caption{Response length distribution from the LMSYS-Chat-1M dataset, counted using \texttt{cl100k\_base}, the tokenizer of GPT-4 \citep{OpenAI-GPT4-23}.
  }
  \label{fig:job size distribution}
\end{figure}

Due to the right-skewed job size distributions observed in real-world LLM inference workloads, directly implementing {\GBA} and {\GSA} may lead to suboptimal memory utilization and increased idle times:
while fixed, geometrically rounded slice lengths ensure clean theoretical guarantees, in practice most jobs exit before the slice completes and the pipeline remains idle until the slice ends.
To address this, we implement two practical variants of our algorithms for trace-driven experiments while retaining their worst-case approximation / competitiveness guarantees.

\xhdr{Dynamic Refill (\GBAD).}
Our theoretical algorithm {\GBA} utilizes geometric slicing to simplify analysis. While this bridges the analysis from clairvoyant to non-clairvoyant settings, the synchronous rounds and fixed slice lengths are practically inefficient and can be improved by taking 
simple heuristics to eliminate the deliberate synchronization.
{\GBAD}, our heuristic variant of {\GBA}, employs a \textit{dynamic refill} strategy:
\begin{itemize}
  \item 
Whenever it is possible to continue the pipeline with a slice length equal to the smallest unfinished job response length ($\sliceLen = \responseLen_i$) without violating memory feasibility, it admits the job immediately.
\end{itemize}
{\GBAD} never delays any job relative to the baseline {\GBA} schedule; it only fills idle time, and therefore preserves the worst-case guarantee. In our experiments, {\GBAD} achieves the best performance on nearly all instances.

\xhdr{Speculative Runs (\GSAH).}  In the non-clairvoyant setting, we must pessimistically choose the slice length to ensure memory feasibility. However, this may lead to under-utilization of memory due to not fully utilize the available budget.
To mitigate this, {\GSAH}, our heuristic variant of {\GSA}, further enhances dynamic refill with \textit{speculative runs}.
\begin{itemize}
  \item 
When the memory is available after admitting all jobs in the current pipeline, it opportunistically starts new speculative runs for unfinished jobs using the freed memory.
\item When the memory is insufficient, it kills the speculative jobs to ensure the original geometric slicing schedule is maintained.
\item The speculative runs are started and killed in a First-Come-First-Serve (FCFS) manner like {\vLLM}.
\end{itemize}
This heuristic further improves memory utilization by utilizing freed memory for speculative runs, while maintaining the theoretical guarantees of {\GSA}.

\xhdr{Trace-Driven Power-of-Two Rounding Results.}
We begin by evaluating performance using a \emph{power-of-two quantized} workload where response lengths are rounded up to the nearest power of two. This setting serves two purposes. First, from a systems perspective, it mirrors the behavior of memory allocators commonly used in OS and GPU runtimes, where resource reservations effectively snap to power-of-two boundaries to minimize external fragmentation. Second, from an algorithmic perspective, it aligns the workload granularity with the geometric phase structure of {\GSA}. This allows us to isolate the core efficiency of our scheduling logic by reducing the noise from fine-grained size variations, effectively testing the algorithm under a ``structured uncertainty'' regime while retaining practical relevance.
In \Cref{fig:exp-power-two}, we present the mean flow times---a metric that is linearly related to total flow time, but more interpretable with growing $\NumberJobs$---under memory budgets $\kvmem \in \{4096, 8192\}$, and vary $\NumberJobs$ from $100$ to $1000$. {\GBAD} consistently outperforms all baselines, and {\GSAH} also demonstrates competitive performance, steadily surpassing {\vLLM} and {\AMin}. Notably, as $\NumberJobs$ increases, the performance gap between our proposed algorithms and the baselines widens, highlighting their scalability and effectiveness in handling larger workloads.

\begin{figure}[h]
  \centering
  \begin{minipage}[c]{0.48\textwidth}
    \centering
    \subfloat[$\kvmem=4096$]{
      \includegraphics[width=\linewidth]{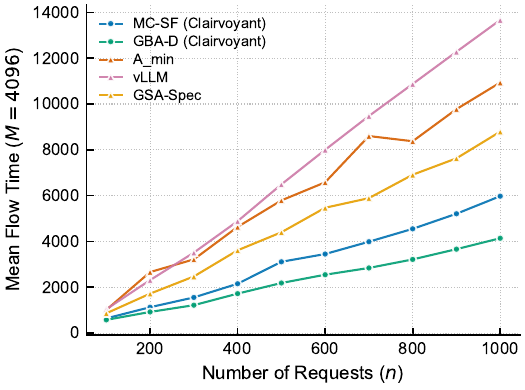}
    }
  \end{minipage}
  \hfill
  \begin{minipage}[c]{0.48\textwidth}
    \centering
    \subfloat[$\kvmem=8192$]{
      \includegraphics[width=\linewidth]{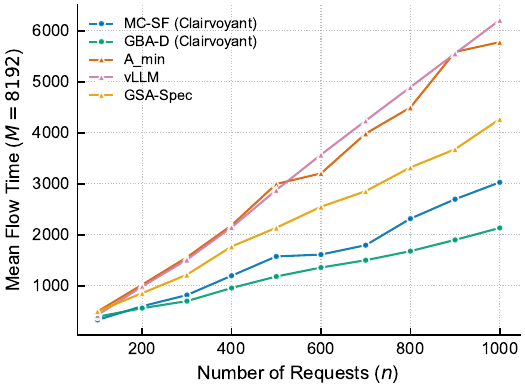}
    }
  \end{minipage}
  \caption{Results of trace-driven power-of-two rounding experiments. Each panel shows mean flow time---total flow time divided by $\NumberJobs$---as we sweep $\NumberJobs$ under a fixed memory budget. {\GSAH} uses $\responseScalar = 2$ and $\sliceLowerBound = 64$.}
  \label{fig:exp-power-two}
\end{figure}

\xhdr{Trace-driven Experiments Results.}
In \Cref{fig:exp-trace}, we present mean flow times under memory budgets $\kvmem \in \{4096, 8192\}$, varying $\NumberJobs$ from $100$ to $1000$, without any rounding of job sizes.

\begin{figure}[h]
  \centering
  \begin{minipage}[c]{0.48\textwidth}
    \centering
    \subfloat[$\kvmem=4096$]{
      \includegraphics[width=\linewidth]{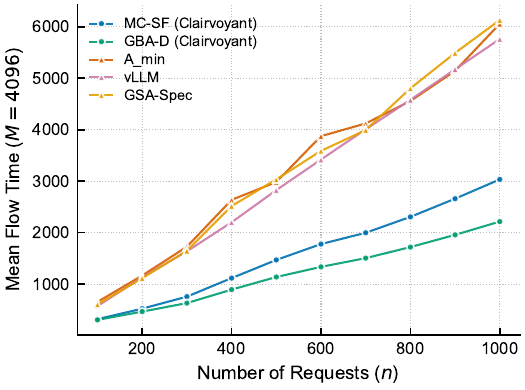}
    }
  \end{minipage}
  \hfill
  \begin{minipage}[c]{0.48\textwidth}
    \centering
    \subfloat[$\kvmem=8192$]{
      \includegraphics[width=\linewidth]{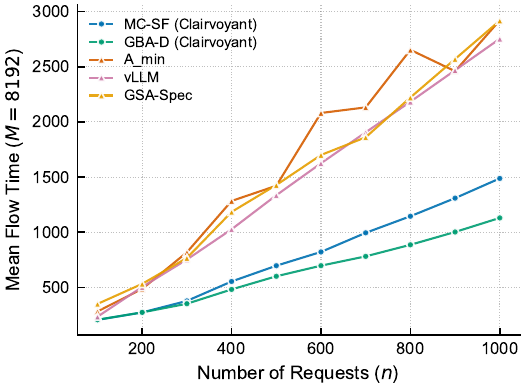}
    }
  \end{minipage}
  \caption{Results of trace-driven experiments. Each panel shows mean flow time---total flow time divided by $\NumberJobs$---as we sweep $\NumberJobs$ under a fixed memory budget. {\GSAH} uses $\responseScalar = 2$ and $\sliceLowerBound = 256$.}
  \label{fig:exp-trace}
\end{figure}

In these trace-driven settings, {\GBAD} continues to demonstrate superior performance, and {\GSAH} remains competitive with the baselines. Comparing these results with the power-of-two experiments suggests that the increased challenge for {\GSAH} stems from memory under-utilization, a consequence of using conservative slice lengths for diverse job sizes. However, by allowing speculative runs and tuning parameters like $\sliceLowerBound$, {\GSAH} can better adapt to the workload characteristics, improving memory utilization and achieving a favorable balance between theoretical guarantees and practical performance.

\section{Missing Proofs in Section~\ref{subsec:staggered}}
\label{apx:missing proof algorithm description}
\subsection{Proof of the Approximation Ratio of 2 in Theorem~\ref{thm:staggered:identical job}}
\label{apx:sps-paral-1}

In this section, we present the formal proof of the approximation ratio of 2 stated in \Cref{thm:staggered:identical job}.

\begin{proof}[Proof of approximation ratio of 2 in \Cref{thm:staggered:identical job}]
Let $L \triangleq \lceil \sliceLen/2 \rceil$. We first establish a necessary condition for feasibility: in any memory-valid schedule where $S_i$ and $C_i$ denote the start and completion times of job $i$ (indexed $0, 1, \dots$), the start times $S_0 \le S_1 \le \dots$ must satisfy
\begin{align}
    S_{i+\paral} - S_i \geq L \qquad \text{for all } i. \label{eq:spacing-condition}
\end{align}
To prove this, assume for the sake of contradiction that $S_{i+\paral} - S_i \leq L - 1$ for some $i$. Consider the system state at time $T \triangleq S_i + \sliceLen - 1$. At this time, job $i$ is in its final round of execution ($\curProgress_{i,T} = \sliceLen - 1$) and consumes $\initialLen + \sliceLen$ memory.
Since $S_{i+r} \le S_{i+\paral} \le S_i + L - 1$ for all $r \in \{1, \dots, \paral\}$, the progress of job $i+r$ at time $T$ satisfies:
$$
    \curProgress_{i+r,T} = T - S_{i+r} \ge (S_i + \sliceLen - 1) - (S_i + L - 1) = \sliceLen - L.
$$
Consequently, each of these $\paral$ jobs consumes at least $\initialLen + (\sliceLen - L) + 1 = \initialLen + \lfloor \sliceLen/2 \rfloor + 1$ memory. Summing over the window of $\paral+1$ jobs ($i, i+1, \dots, i+\paral$), the total memory usage at time $T$ is:
\begin{align*}
    \textsf{Mem}(T) 
    &\geq (\initialLen + \sliceLen) + \paral\left(\initialLen + \left\lfloor \frac{\sliceLen}{2} \right\rfloor + 1\right) 
    = (\paral+1)\initialLen + \sliceLen + \paral + \paral\left\lfloor \frac{\sliceLen}{2} \right\rfloor.
\end{align*}
Using the inequality $\lfloor x/2 \rfloor \geq (x-1)/2$, we have $\lfloor \sliceLen/2 \rfloor \geq (\sliceLen - 1)/2$. Substituting this bound:
\begin{align*}
    \textsf{Mem}(T) 
    &\geq (\paral+1)\initialLen + \sliceLen + \paral + \frac{\paral(\sliceLen - 1)}{2} 
    = (\paral+1)\initialLen + \frac{\sliceLen(\paral+1) + \sliceLen + \paral}{2}.
\end{align*}
Comparing this with the definition of peak memory in \Cref{lem:staggered:peak-memory}:
\begin{align*}
    \MemPeak(\paral+1, \sliceLen, \initialLen) = (\paral+1)\initialLen + \frac{\sliceLen(\paral+1) + \sliceLen + (\paral+1) - \gcd(\paral+1, \sliceLen)}{2}.
\end{align*}
Since $\gcd(\paral+1, \sliceLen) \geq 1$, we have $(\paral+1) - \gcd(\paral+1, \sliceLen) \leq \paral$. Therefore:
$$
    \textsf{Mem}(T) \ge \MemPeak(\paral+1, \sliceLen, \initialLen).
$$
By definition of $\paral$, $\MemPeak(\paral+1, \sliceLen, \initialLen) > \kvmem$. This implies $\textsf{Mem}(T) > \kvmem$, yielding the contradiction to memory feasibility. Thus \eqref{eq:spacing-condition} holds.

We now continue to the approximation ratio. By \eqref{eq:spacing-condition}, any group of $\paral$ jobs in the optimal schedule takes at least $L$ rounds to admit, i.e., $S_i^\OPT \geq \lfloor i/\paral \rfloor L$, and
$$
    C_i^\OPT \geq \left\lfloor \frac{i}{\paral} \right\rfloor L + \sliceLen.
$$
In our algorithm ({\SPS}), the completion time of job $i$ is:
$$
    C_i^{\text{SPS}} = \left\lfloor \frac{i\sliceLen}{\paral} \right\rfloor + \sliceLen.
$$
Using the inequality $\lfloor i\sliceLen/\paral \rfloor < (\lfloor i/\paral \rfloor + 1)\sliceLen$ and $2L \geq \sliceLen$:
\begin{align*}
    C_i^{\text{SPS}} 
    &< \left(\left\lfloor \frac{i}{\paral} \right\rfloor + 1\right)\sliceLen + \sliceLen = \left(\left\lfloor \frac{i}{\paral} \right\rfloor + 2\right)\sliceLen 
    \le 2\left(\left\lfloor \frac{i}{\paral} \right\rfloor L + \sliceLen\right) \le 2C_i^\OPT.
\end{align*}
Thus, $C_i^{\text{SPS}} \le 2C_i^\OPT$ for all $i$, completing the proof.
\end{proof}

\begin{remark} Our analysis of approximation ratio of 2 in \Cref{thm:staggered:identical job} above relies on a rigid structural property, the spacing lower bound in \eqref{eq:spacing-condition}, of the optimal schedule. This is unique to the identical-job setting.
For general instances, {\OPT} can exhibit more complex behavior, ruling out such simple structural comparisons.
\end{remark}

\subsection{Lower Bounds for {\SimPipelineSchedule}}
\label{apx:simps-lower-bounds}

We first restate the {\SimPipelineSchedule} ({\SimPS}) used in \Cref{subsec:staggered}, with more details.
For identical jobs with prompt length $\initialLen$ and response length $\responseLen$, {\SimPS} admits
jobs in batches of size
$$
B \;\triangleq\; \left\lfloor \frac{\kvmem}{\initialLen + \responseLen} \right\rfloor,
$$
runs each batch for $\responseLen$ rounds, and starts a new batch only after the current batch finishes.
All jobs in the same batch complete simultaneously. This procedure describes the behavior of a family of greedy algorithms when $\kvmem$ is a multiple of $\initialLen + \responseLen$.

In the following, we use $\varepsilon$ to denote $o(1)$, a small positive constant that can be arbitrarily close to $0$ in the asymptotic regime, to avoid confusion with response length $\responseLen$. 

\begin{proposition}[Asymptotic lower bound of $2$ for {\SimPS}]
\label{prop:simps-lb-2}
There exists a family of identical-job instances such that the approximation ratio of {\SimPS} is at least
$2 - \varepsilon$ even in the asymptotic regime where $\kvmem \to \infty$, $\NumberJobs \to \infty$, and $\kvmem = o(\NumberJobs)$.
\end{proposition}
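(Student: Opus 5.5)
Our plan is to exhibit an explicit family in which \SimPS\ badly underuses memory, and to compare it with the staggered schedule \SPS, whose flow time is within $1+o(1)$ of $\OPT$ by \Cref{thm:staggered:identical job}. The natural choice is $\initialLen = 0$ and response length $\responseLen$ fixed with $\responseLen$ large (taken to infinity after $\kvmem$ and $\NumberJobs$). We also take $\kvmem$ to be a multiple of $\responseLen$, so that $B = \kvmem/\responseLen$ exactly. Then \SimPS\ runs $\lceil \NumberJobs/B\rceil$ batches of length $\responseLen$, and the jobs of batch $b$ (for $b=1,2,\dots$) complete at time $b\responseLen$. Summing gives
\begin{align*}
\FTime{\SimPS} \;\approx\; B\responseLen\cdot\frac{(\NumberJobs/B)^2}{2} \;=\; \frac{\NumberJobs^2\responseLen^2}{2\kvmem}\,(1+o(1)),
\end{align*}
where the approximation uses $\NumberJobs/B\to\infty$, which holds because $\kvmem=o(\NumberJobs)$.

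For the upper bound on $\OPT$, we use $\OPT \le \FTime{\SPS}$ together with \eqref{eq:Fstag:identical}:
\begin{align*}
\FTime{\SPS} \;\le\; \NumberJobs\responseLen + \frac{\NumberJobs^2\responseLen}{2\paral},
\end{align*}
where $\paral=\paral^*(\responseLen,0)$. By \Cref{lem:staggered:max-parallelism}, $\paral \ge \lfloor (2\kvmem-\responseLen+1)/(\responseLen+1)\rfloor$, so $\paral = \frac{2\kvmem}{\responseLen+1}(1-o(1))$ as $\kvmem\to\infty$ with $\responseLen$ fixed. Substituting, and noting that the first term $\NumberJobs\responseLen$ is $o(\NumberJobs^2\responseLen^2/\kvmem)$ since $\kvmem=o(\NumberJobs)$, we get
\begin{align*}
\OPT \;\le\; \frac{\NumberJobs^2\responseLen(\responseLen+1)}{4\kvmem}\,(1+o(1)).
\end{align*}

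Dividing the two estimates gives
\begin{align*}
\frac{\FTime{\SimPS}}{\OPT} \;\ge\; \frac{2\responseLen}{\responseLen+1}\,(1-o(1)),
\end{align*}
which is at least $2-\varepsilon$ once $\responseLen$ is chosen large enough relative to $\varepsilon$. To stay within the stated regime we fix $\responseLen$ first, depending on $\varepsilon$, and then send $\kvmem,\NumberJobs\to\infty$ with $\kvmem=o(\NumberJobs)$ and $\responseLen\mid\kvmem$.

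The main obstacle is bookkeeping the $o(1)$ terms so they vanish in the right order. Three sources must be controlled: the floor in $\paral$, the rounding of $\lceil \NumberJobs/B\rceil$, and the additive $\NumberJobs\responseLen$ term. Each is relatively $O(\responseLen/\kvmem)$ or $O(\kvmem/\NumberJobs)$, so all vanish in the limit. Only the factor $\responseLen/(\responseLen+1)$ survives, and this is exactly why $\responseLen$ must be taken large to push the bound to $2-\varepsilon$.
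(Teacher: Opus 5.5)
Your proposal is correct and follows the paper's proof essentially step for step. It uses the same family ($\initialLen=0$, $\kvmem$ a multiple of $\responseLen$ so that $B=\kvmem/\responseLen$), the same bound of $\OPT$ by the flow time of \SPS{} via \eqref{eq:Fstag:identical} and \Cref{lem:staggered:max-parallelism}, and reaches the same ratio $\frac{2\responseLen}{\responseLen+1}(1-o(1)) = 2-\frac{2}{\responseLen+1}-o(1)$. The only difference is the order of limits: the paper lets $\responseLen$ grow together with $\kvmem$ (keeping $B\to\infty$), whereas you fix $\responseLen$ depending on $\varepsilon$ before sending $\kvmem,\NumberJobs\to\infty$, which mildly improves on the paper by staying within the fixed-response-length convention of the large-memory regime.
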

\begin{proof}
Fix $\initialLen = 0$ and let the response length $\responseLen$ grow with $\kvmem$, so that both
$\responseLen \to \infty$ and $B \triangleq \kvmem/\responseLen \to \infty$ are integers. Consider
$\NumberJobs$ jobs with $\NumberJobs$ a multiple of $B$.
{\SimPS} completes batches of size $B$ every $\responseLen$ rounds, so the total flow time is
$$
\FTime{{\SimPS}}
\;=\; B\responseLen \sum_{j=1}^{\NumberJobs/B} j
\;=\; \frac{\responseLen \NumberJobs^2}{2B} + O(\NumberJobs \responseLen).
$$
On the other hand, by \Cref{lem:staggered:max-parallelism} with $\sliceLen=\responseLen$ and
$\initialLen=0$, the maximum feasible parallelism for a staggered schedule satisfies
$$
\paralSliceLen \;\ge\; \left\lfloor \frac{2\kvmem - \responseLen + 1}{\responseLen + 1} \right\rfloor
\;=\; \left\lfloor 2B - \frac{2B-1}{\responseLen + 1} \right\rfloor.
$$
Let $k \triangleq \paralSliceLen$ and consider {\SPS} with $(\paral,\sliceLen)=(k,\responseLen)$.
Using the standard bound on its total flow time,
$$
\FTime{\SPS}
\;\le\; \NumberJobs \responseLen + \frac{\responseLen}{2k}\NumberJobs(\NumberJobs-1)
\;=\; \frac{\responseLen \NumberJobs^2}{2k} + O(\NumberJobs \responseLen).
$$
Since $\OPT \le \FTime{\SPS}$, we obtain
$$
\frac{\FTime{{\SimPS}}}{\OPT}
\;\ge\; \frac{\FTime{{\SimPS}}}{\FTime{\SPS}}
\;\ge\; \frac{k}{B}\cdot (1 - \varepsilon)
\;\ge\; \left(2 - \frac{2}{\responseLen + 1} - \varepsilon\right).
$$
Letting $\responseLen \to \infty$ yields the desired ratio.
\end{proof}

\begin{proposition}[Lower bound approaching $3$ for {\SimPS}]
\label{prop:simps-lb-3}
There exists a family of identical-job instances for which the approximation ratio of {\SimPS} approaches
$3$ from below.
\end{proposition}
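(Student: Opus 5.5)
The plan is to build a family with prompt length $\initialLen=0$, a large identical response length $\responseLen$, and memory budget $\kvmem = 2\responseLen-1$. With this budget {\SimPS} has batch size $B=\lfloor \kvmem/\responseLen\rfloor = 1$, so it processes the jobs strictly sequentially. Its flow time is therefore exactly $\responseLen\,\NumberJobs(\NumberJobs+1)/2$. It then suffices to exhibit a feasible schedule whose throughput is close to $3/\responseLen$ jobs per round. Since $\OPT$ is at most the flow time of that schedule, the ratio will tend to $3$ as $\responseLen,\NumberJobs\to\infty$. It stays strictly below $3$ because of the area bound: \Cref{lem:area-lower-bound} gives $\OPT \ge \frac{\NumberJobs(\NumberJobs+1)}{2}\cdot\frac{\responseLen(\responseLen+1)}{2(2\responseLen-1)}$, so the ratio is at most $4(2\responseLen-1)/(2\responseLen+2)<4$. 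That is a weaker bound, so the "from below" claim will come from the explicit construction and not from this estimate.

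The obvious candidate, {\SPS} with $\paralSliceLen$, does not suffice. Solving $\MemPeak(3,\responseLen,0)\le 2\responseLen-1$ shows that $k=3$ is never feasible, so {\SPS} only gives $k=2$ and ratio $2$. I will therefore use a slightly perturbed uniform stagger. Job $i$ starts at $S_i = i\,d$ with spacing $d = \lceil \responseLen/3\rceil + 1$ and runs for $\responseLen$ rounds. Because $3d>\responseLen$, at most three jobs are ever active. The heaviest round is the last round of some job $i$. In that round job $i$ uses $\responseLen$ units, job $i+1$ uses $\responseLen-d+1$, and job $i+2$ uses $\responseLen-2d+1$. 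The total is $3\responseLen-3d+2 \le 3\responseLen - \responseLen - 3 + 2 = 2\responseLen-1 = \kvmem$, so the schedule is feasible. This check, across the residues of $\responseLen$ mod $3$, is the main thing to verify carefully. I will also check the non-peak rounds, where only two jobs overlap or the third has just started; a short case check should show their usage is smaller.

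For the flow time, job $i$ completes at $i d + \responseLen$. Summing gives $\OPT \le \NumberJobs\responseLen + d\,\NumberJobs(\NumberJobs-1)/2$. Dividing,
\begin{align*}
\frac{\FTime{\SimPS}}{\OPT} \;\ge\; \frac{\responseLen(\NumberJobs+1)/2}{\responseLen + d(\NumberJobs-1)/2}.
\end{align*}
Let $\NumberJobs\to\infty$ with $\responseLen$ fixed; the right-hand side tends to $\responseLen/d = \responseLen/(\lceil\responseLen/3\rceil+1)$. This limit is strictly less than $3$ and tends to $3$ as $\responseLen\to\infty$, for example along $\responseLen\equiv 0 \pmod 3$, where it equals $3\responseLen/(\responseLen+3)$. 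Taking a diagonal sequence $\responseLen\to\infty$ with $\NumberJobs/\responseLen\to\infty$ then makes the ratio approach $3$.

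The main obstacle is the feasibility verification of the staggered schedule with spacing $d$. This schedule lies outside the {\SPS} family, so \Cref{lem:staggered:peak-memory} does not apply directly. I would handle it by repeating the memory count from that lemma's proof with uniform offsets $d$, and I expect the lemma's argument to show the peak occurs in the final round of a job.
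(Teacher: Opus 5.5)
Your construction is correct and is essentially the paper's. You take $\initialLen=0$ and a budget just below $2\responseLen$, so {\SimPS} runs one job at a time. You then use a uniform stagger with spacing $d$ slightly above $\responseLen/3$, so at most three jobs overlap, the peak falls in the oldest job's last round, and the ratio tends to $\responseLen/d\to 3$. The only difference is parametrization: you fix $\kvmem=2\responseLen-1$ and $d=\lceil\responseLen/3\rceil+1$, which covers every residue of $\responseLen$ mod $3$, while the paper takes $d=\responseLen/3+\delta$ and $\kvmem=2\responseLen-3\delta+3$ with $\delta\ge 2$.

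Your area-bound aside is unnecessary. Here, as in the paper, ``from below'' just means the lower bounds $\responseLen/d$ stay below $3$ while tending to $3$.
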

\begin{proof}
Fix $\initialLen = 0$ and let $\responseLen$ be a large multiple of $3$.
Choose an integer $\delta \ge 2$ and define a start-time spacing
$$
d \;\triangleq\; \frac{\responseLen}{3} + \delta.
$$
Set the memory budget to
$$
\kvmem \;\triangleq\; 3\responseLen - 3d + 3 \;=\; 2\responseLen - 3\delta + 3,
$$
so $\kvmem < 2\responseLen$ and {\SimPS} admits only one job at a time, i.e., $B=1$.
Thus
$$
\FTime{{\SimPS}} \;=\; \frac{\responseLen \NumberJobs(\NumberJobs+1)}{2}.
$$

Now consider the staggered schedule, $\mathcal{S}$, that starts job $i$ at time $i d$ and runs it for
$\responseLen$ rounds.
Because $d > \responseLen/3$, at most three jobs overlap at any time.
The peak memory usage occurs when the oldest active job is about to finish, at which time the
progress values of the three jobs are at most $\responseLen$, $\responseLen - d$, and $\responseLen - 2d$.
The total memory usage is therefore at most
$$
\responseLen + (\responseLen - d) + (\responseLen - 2d) + 3
\;=\; 3\responseLen - 3d + 3
\;=\; \kvmem,
$$
so the schedule is feasible.
Its total flow time is
$$
\FTime{\mathcal{S}}
\;=\; \NumberJobs \responseLen + d \cdot \frac{\NumberJobs(\NumberJobs-1)}{2}.
$$
Consequently,
$$
\frac{\FTime{{\SimPS}}}{\OPT}
\;\ge\; \frac{\FTime{{\SimPS}}}{\FTime{\mathcal{S}}}
\;\ge\; \frac{\responseLen}{d}\cdot (1 - \varepsilon)
\;=\; \frac{3}{1 + 3\delta/\responseLen}\cdot (1 - \varepsilon).
$$
Letting $\responseLen \to \infty$ with fixed $\delta$ shows that the ratio approaches $3$ from below.
\end{proof}

\begin{remark}The instance construction in \Cref{prop:simps-lb-3} corresponds to the worst-case scenario in the $4$-approximation analysis of {\SimPS} for identical jobs by \citet{JJMMPZ-25}. Regarding the memory-area bound, the average per-round memory usage of {\SimPS} approaches $\kvmem/4$ when $\kvmem$ is slightly less than $2\responseLen$, implying an approximation guarantee of nearly $4$. However, in this specific scenario, {\OPT} is also unable to fully utilize the available memory (e.g., because the spacing constraint \eqref{eq:spacing-condition} prevents perfect packing of triangular memory profiles); consequently, the actual approximation ratio approaches $3$ rather than $4$, leaving a gap between the upper and lower bounds for {\SimPS} on identical-job instances.
\end{remark}

\end{document}